\newtheorem{theorem}{Theorem}
\newtheorem{lemma}[theorem]{Lemma}
\newtheorem{fact}[theorem]{Fact}
\newtheorem{definition}[theorem]{Definition}
\newcommand{\R}{\mathbb{R}}
\DeclareMathOperator*{\E}{{\bf {E}}}
\DeclareMathOperator*{\argmin}{arg\,min}
\icmltitlerunning{Subspace Embedding and Linear Regression with Orlicz Norm}
\begin{document}

\twocolumn[
\icmltitle{Subspace Embedding and Linear Regression with Orlicz Norm}




\begin{icmlauthorlist}
\icmlauthor{Alexandr Andoni}{columbia}
\icmlauthor{Chengyu Lin}{columbia}
\icmlauthor{Ying Sheng}{columbia}
\icmlauthor{Peilin Zhong}{columbia}
\icmlauthor{Ruiqi Zhong}{columbia}
\end{icmlauthorlist}

\icmlaffiliation{columbia}{Computer Science Department, Columbia University, New York City, NY 10027, U.S.A.}
\icmlcorrespondingauthor{Peilin Zhong}{pz2225@columbia.edu}

\icmlkeywords{Machine Learning}

\vskip 0.3in
]



\printAffiliationsAndNotice{ \icmlEqualContribution} 

\def\compactify{\itemsep=0pt \topsep=0pt \partopsep=0pt \parsep=0pt}
\let\latexusecounter=\usecounter
\newenvironment{enumerate*}
  {\def\usecounter{\compactify\latexusecounter}
   \begin{enumerate}}
  {\end{enumerate}\let\usecounter=\latexusecounter}
\newenvironment{itemize*}%
  {\begin{itemize}%
    \setlength{\itemsep}{0pt}%
    \setlength{\parskip}{0pt}}%
  {\end{itemize}}

\newcommand{\aanote}[1]{{\color{red}$\ll$\textsf{#1 --Alex}$\gg$\marginpar{\tiny\bf AA}}}
\newcommand{\Peilin}[1]{\textbf{\color{red}[Peilin: #1]\marginpar{\tiny\bf PZ}}}

\begin{abstract}
We consider a generalization of the classic linear regression problem
to the case when the loss is an Orlicz norm. An Orlicz norm is
parameterized by a non-negative convex function
$G:\mathbb{R}_+\rightarrow\mathbb{R}_+$ with $G(0)=0$: the Orlicz norm
of a vector $x\in\mathbb{R}^n$ is defined as
$
\|x\|_G=\inf\left\{\alpha>0\large\mid\sum_{i=1}^n G(|x_i|/\alpha)\leq 1\right\}.
$
We consider the cases where the function $G(\cdot)$ grows subquadratically. Our main result is based on a new oblivious embedding which embeds the
column space of a given matrix $A\in\mathbb{R}^{n\times d}$ with
Orlicz norm into a lower dimensional space with $\ell_2$
norm. Specifically, we show how to efficiently find an embedding
matrix $S\in\mathbb{R}^{m\times n},m<n$ such that $\forall
x\in\mathbb{R}^{d},\Omega(1/(d\log
n)) \cdot \|Ax\|_G\leq \|SAx\|_2\leq O(d^2\log n) \cdot \|Ax\|_G.$ By
applying this subspace embedding technique, we show an approximation
algorithm for the regression problem
$\min_{x\in\mathbb{R}^d} \|Ax-b\|_G$, up to a $O(d\log^2 n)$
factor. As a further application of our techniques, we show how to
also use them to improve on the algorithm for the $\ell_p$ low rank
matrix approximation problem for $1\leq p<2$.

\end{abstract}

\section{Introduction}
Numerical linear algebra problems play a significant role in machine
learning, data mining, and statistics. One of the most important such
problems is the regression problem, see
some recent advancements in, e.g., \cite{zjd16,bjk15,jt15,lwz14,dlfu13}. In
a linear regression problem, given a data matrix
$A\in\mathbb{R}^{n\times d}$ with $n$ data points $A^1,A^2,\cdots,A^n$
in $\mathbb{R}^{d}$ and the response vector $b\in\mathbb{R}^n$, the
goal is to find a set of coefficients $x^*\in\mathbb{R}^d$ such that
\begin{align}\label{eq:regression_prob}
x^*={\argmin}_{x\in\mathbb{R}^d}\ l(Ax-b),
\end{align}
where $l:\mathbb{R}^{n}\rightarrow \mathbb{R}_+$ is the loss
function. When $l(y) = \|y\|_2^2= \sum_{i=1}^n y_i^2,$ then the
problem is the classic least square regression problem
($\ell_2$ regression). While there has been extensive research on
efficient algorithms for solving $\ell_2$ regression, it
is not always a suitable loss function to use.

In many settings, alternative choices for a loss function lead to
qualitatively better solutions $x^*$. For example, a popular such
alternative is the least absolute deviation ($\ell_1$) regression ---
with $l(y) = \|y\|_1= \sum_{i=1}^n |y_i|$ --- which leads to
solutions that are more robust than those of $\ell_2$ regression
(see~\cite{l1l2regression, g05}.  In a nutshell, the $\ell_2$ regression
is suitable when the data contains Gaussian noise, whereas
$\ell_1$ --- when the noise is Laplacian or sparse.

A further popular class of loss functions $l(\cdot)$ arises
from {\em M-estimators}, defined as $l(y) = \sum_{i=1}^n M(y_i)$ where
$M(\cdot)$ is an M-estimator function (see~\cite{z97} for a list of
M-estimators). The benefit of (some) M-estimators is that they enjoy
advantages of both $\ell_1$ and $\ell_2$ regression. For example, when
$M(\cdot)$ is the Huber function~\cite{h64}, then the regression looks
like $\ell_2$ regression when $y_i$ is small, and looks like $\ell_1$
regression otherwise.  However, these loss functions come with a
downside: they depend on the scale, and rescaling the data may give a
completely different solution!

\vspace{-0.15in}
\paragraph{Our contributions.}
We introduce a generic algorithmic technique for solving regression
for an entire class of loss functions that includes the aforementioned
examples, and in particular, a ``scale-invariant'' version of
M-estimators.
Specifically, our class consists of loss functions $l(y)$ that are
Orlicz norms, defined as follows: given a non-negative convex function
$G:\mathbb{R}_+\rightarrow\mathbb{R}_+$ with $G(0)=0$, for
$x\in\mathbb{R}^n,$ we can define $\|x\|_G$ to be an Orlicz norm with
respect to $G(\cdot)$: $ \|x\|_G \triangleq
\inf\left\{\alpha>0\large\mid\sum_{i=1}^n G(|x_i|/\alpha)\leq
1\right\}.  $ Note that $\ell_p$ norm, for $p\in[1,\infty)$, is a
  special case of Orlicz norm with $G(x) = x^p$.  Another important
  example is the following ``scale-free'' version of M-estimator. Taking
  $f(\cdot)$ to be a Huber function, i.e.
\begin{align*}
f(x) = \left\{\begin{array}{ll}x^2/2 & |x|\leq \delta\\ \delta(|x|-\delta/2) & \text{otherwise} \end{array}\right.
\end{align*}
for some constant $\delta$, we take $G(x) = f(f^{-1}(1)
x)$. Then the norm $\|x\|_G$ looks like $\ell_2$ norm when $x$ is
flat, and looks like $\ell_1$ norm when $x$ is
sparse. Figure~\ref{fig:orlicz_norm_ball} shows the unit norm ball of
this kind of Orlicz norm.

Our main result is a generic algorithm for solving any regression problem
Eqn.~\eqref{eq:regression_prob} with any loss function that is a
``nice'' Orlicz norm; see Section~\ref{sec:pre} for a formal
definition of ``nice'', and think of it as subquadratic for now.

\begin{figure}
  \centering
  \includegraphics[width=0.20\textwidth]{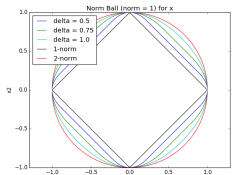}\\
  \vspace{-0.2in}
  \caption{\small Unit norm balls of Orlicz norm induced by normalized Huber functions with different $\delta$.}\label{fig:orlicz_norm_ball}
  \vspace{-0.3in}
\end{figure}

Our main result employs the concept of {\em subspace embeddings},
which is a powerful tool for solving numerical linear algebra
problems, and as such has many applications beyond regression. We say
that a subspace embedding matrix $S\in\mathbb{R}^{m\times n}$ embeds
the column space of $A\in\mathbb{R}^{n\times d}$ $(n>m)$ with $u$-norm
into a subspace with $v$-norm, if $\forall x\in\mathbb{R}^d,$ we have
$\|Ax\|_u/\alpha\leq \|SAx\|_v\leq \beta\|Ax\|_u$ where $\alpha\beta$
is called distortion (approximation). A long line of work studied
$\ell_2$ regression problem based on $\ell_2$ subspace embedding
techniques; see, e.g., \cite{cw09,cw13,nn13}. Furthermore, there are
works on $\ell_p$ regression problem based on $\ell_p$ subspace
embedding techniques (see, e.g.~\cite{sw11,mm13,cdmmmw13, wz13}), and
similarly for M-estimators \cite{cw15}.

Our overall results are composed of four parts:
\begin{enumerate*}
\item
We develop the first subspace embedding method for all ``nice'' Orlicz
norms. The embedding obtains a distortion factor polynomial in $d$,
which was recently shown necessary~\cite{ww18}.
\item
Using the above subspace embedding, we obtain the first approximation
algorithm for solving the linear regression problem with any ``nice''
Orlicz norm loss.
\item
As a further illustration of the power of the subspace embedding
method, we employ it towards improving on the best known result for
another problem: $\ell_p$ low rank approximation for $1\leq p<2$
from~\cite{swz17}, which is the ``$\ell_p$-version of PCA''.
\item
Finally, we complement our theoretical results with experimental evaluation of
our algorithms. Our experiments reveal that that the solution of
regression under the Orlicz norm induced by Huber loss is much better
than the solution given by regression under $\ell_1$ or $\ell_2$
norms, under natural noise distributions in practice. We also perform
experiments for Orlicz regression with different Orlicz functions $G$
and show their behavior under different noise settings, thus
exhibiting the flexibility of our framework.
\end{enumerate*}


To the best of our knowledge, our algorithms are the first low
distortion embedding and regression algorithms for general Orlicz
norm. For the problem of low rank approximation under $\ell_p$ norm,
$p\in[1,2)$, our algorithms achieve simultaneously the best
  approximation and the best running
  time. In contrast, all the previous algorithms achieve either the best
  approximation, or the best running time, but not both at the same time.

Our algorithms for subspace embedding and regression are simple, and
in particular are not iterative. In particular, for the subspace
embedding, the embedding matrix $S$ is generated independently of the
data. In the regression problem, we multiply the input with the
embedding matrix, and thus reduce to the $\ell_2$ regression problem,
for which we can use any of the known algorithm.

\vspace{-0.2in}
\paragraph{Technical discussion.} Next we highlight some of our techniques
used to obtain the theoretical results.


\textit{Subspace embedding.} Our starting point is a technique
introduced in \cite{anrw16} for the Orlicz norms, which can be seen as
an embedding that has guarantees for a {\em fixed vector} only. In
contrast, our main challenge here is to obtain an embedding for {\em
  all vectors $x\in \R^n$} in a certain $d$-dimensional subspace.  Consider a random diagonal matrix
$D\in\mathbb{R}^{n\times n}$ with each diagonal entry is a
``generalized exponential'' random variable, i.e., drawn from a
distribution with cumulative distribution function $1-e^{-G(x)}.$
Then, for a fixed $x\in\mathbb{R}^d$, \cite{anrw16} show that
$\|D^{-1}Ax\|_\infty$ is not too small with high probability. We can
combine this statement together with a net argument and the dilation
bound on $\|D^{-1}Ax\|_G$, to argue that $\forall x\in\mathbb{R}^d,$
$\|D^{-1}Ax\|_\infty$ is not too small.

The other direction is more challenging --- to show that for a given
matrix $A\in\mathbb{R}^{n\times d}$, and any fixed $x\in\mathbb{R}^d,$
$\|D^{-1}Ax\|_G$ cannot be too large. Once we show this ``dilation
bound'', we combine it with the well-conditioned basis argument
(similar to \cite{ddhkm09}), and prove that $\forall
x\in\mathbb{R}^d,$ $\|D^{-1}Ax\|_G$ cannot be too large.  Overall, we
have that $\forall x\in\mathbb{R}^{d},$ $\|D^{-1}Ax\|_G\leq O(d^2\log
n) \cdot \|Ax\|_G,$ and $\|D^{-1}Ax\|_{\infty}\geq \Omega(1/(d\log n))
\cdot \|Ax\|_G.$ Since $\ell_2$ norm is sandwiched by $\|\cdot\|_G$
and $\ell_{\infty}$ norm, we have that $\forall x\in\mathbb{R}^d,
\Omega(1/(d\log n)) \cdot \|Ax\|_G\leq\|D^{-1}Ax\|_2\leq O(d^2\log n)
\cdot \|Ax\|_G.$ Then, the remaining part is to use standard
techniques~\cite{wz13,w14} to perform the $\ell_2$ subspace embedding for
the column space of $D^{-1}A.$ See
Theorem~\ref{thm:full_subspace_embedding} for
details.

The actual proof of the dilation bound is the most technically
intricate result.  Traditionally, since the $p^\text{th}$ power of the
$\ell_p$ norm is the sum of the $p^\text{th}$ power of all the
entries, it is easy to bound the expectation by using linearity of the
expectation. However it is impossible to apply this analysis to Orlicz
norm directly since Orlicz norm is not an "entrywise" norm. Instead,
we exploit a key observation that the Orlicz norm of vectors which are
on the unit ball can be seen as the sum of contribution of each
coordinate. Thus, we propose a novel analysis for any fixed vector by
analyzing the behavior of the normalized vector which is on the unit
Orlicz norm ball. To extend the dilation bound for a fixed vector to
all the vectors in a subspace, we generalize the definition of
$\ell_p$ norm well-conditioned basis to the Orlicz norm case, and then
show that the Auerbach basis provides a good basis for Orlicz norm. To
the best of our knowledge, this is the first time Auerbach basis are
used to analyze the dilation bound of an embedding method for a norm
distinct from an $\ell_p$ norm.
See Section~\ref{sec:subspace_for_orlicz} for details.

\textit{Regression with Orlicz norm.} Here, given a matrix
$A\in\mathbb{R}^{n\times d},$ a vector $b\in\mathbb{R}^n,$ the goal is
to solve Equation~\ref{eq:regression_prob} with Orlicz norm loss
function. We can now solve this problem directly using the subspace
embedding from above, in particular by applyingit to the column space
of $[A\ b].$ We obtain an $O(d^3\log^2 n)$ approximation ratio, which
we can further improve by observing that it actually suffices to have
the dilation bound on $\|D^{-1}Ax^*\|_G$ only for the optimal solution
$x^*$ (as opposed to for an arbitrary $x$). Using this observation, we
improve the approximation ratio to $O(d\log^2 n)$. See
Theorem~\ref{thm:regression_main} for details.
We evaluate the algorithm's performance and show that it
performs well (even when $n$ is not much larger than $d$). See
Section~\ref{sec:expe}.

\textit{$\ell_p$ low rank matrix approximation.} The $\ell_p$ norm is
a special case of the Orlicz norm $\|\cdot\|_G$, where $G(x) = x^p.$ This
connection allows us to consider the following problem: given
$A\in\mathbb{R}^{n\times d},n\geq d\geq k\geq 1,$ find a rank-$k$
matrix $B\in\mathbb{R}^{n\times d}$ such that $\|A-B\|_p$ is
minimized. Here we consider the case of $1\leq p<2$ and $k=\omega(\log
n)$. The best known algorithm for this problem is from~\cite{swz17},
which uses the dense $p$-stable transform to achieves $k^2 \cdot
\mathrm{poly}(\log n)$ approximation ratio. It has the downside that
its runtime does not compare favorably to the golden standard of
runtime linear in the sparsity of the input. To improve the runtime,
one can apply the sparse $p$-stable transform and achieve input
sparsity runtime, but that comes at the cost of an $\Omega(k^6)$
factor loss in the approximation ratio.

Using the above techniques, we develop an algorithm with best of both
worlds: $k^2 \cdot \mathrm{poly}(\log n)$ approximation ratio and the
input sparsity running time at the same time. In particular, the main
inefficiency of the algorithm~\cite{swz17} is the relaxation from
$\ell_p$ norm to $\ell_2$ norm, which incurs a further
$\mathrm{poly}(k)$ approximation factor. In contrast, the embedding
based on exponential random variables embeds $\ell_p$ norm to $\ell_2$
norm directly, without further approximation loss. Our embedding also
comes with its own pitfalls --- as we need to deal with mixed norms
--- thus requiring a new analysis. See
Theorem~\ref{thm:lp_lowrank} for details.

\vspace{-0.1in}


\section{Notations and preliminaries}\label{sec:pre}
\vspace{-0.1in}
In this paper, we denote $\mathbb{R}_+$ to be the set of nonnegative reals.
Define $[n]=\{1,2,\cdots,n\}$. Given a matrix $A\in\mathbb{R}^{n\times d},$ $\forall i\in[n],j\in[d]$, $A^i$ and $A_j$ denotes the $i^{\text{th}}$ row and the $j^{\text{th}}$ column of $A$ respectively. $\mathrm{nnz}(A)$ denotes the number of nonzero entries of $A$. The column space of $A\in\mathbb{R}^{d}$ is $\{y\mid \exists x\in\mathbb{R}^d,y=Ax\}.$ $\forall p\not=2,$ $\|A\|_p \triangleq (\sum |A_{i,j}|^p)^{1/p}$, i.e. entrywise $p$-norm.
$\|A\|_F$ defines the Frobenius norm of $A$, i.e. $(\sum A_{i,j}^2)^{1/2}$. $A^\dagger$ denotes the Moore-Penrose pseudoinverse of $A$. 
Given an invertible function $f(\cdot)$, let $f^{-1}(\cdot)$ be the inverse function of $f(\cdot)$.
 If $f(\cdot)$ is not invertible in $(-\infty,+\infty)$ but it is invertible in $[0,+\infty),$ then we denote $f^{-1}(\cdot)$ to be the inverse function of $f(\cdot)$ in domain $[0,+\infty)$. $\inf$ and $\sup$ denote the infimum and supremum respectively. $f'(x),f'_+(x),f'_-(x)$ denote the derivative, right derivative and left derivative of $f(x)$, respectively. Similarly, define $f''(x)$ for the second derivatives, and we define $f''_+(x)=\lim_{h\rightarrow 0^+}(f'(x+h)-f'_+(x))/h.$
In the following, we give the definition of Orlicz norm.

\begin{definition}[Orlicz norm]\label{def:orlicz_norm}
For any nonzero monotone nondecreasing convex function $G:\mathbb{R}_+\rightarrow \mathbb{R}_+$ with $G(0)=0$. Define Orlicz norm $\|\cdot \|_G$ as:
$
\forall n\in\mathbb{Z},n\geq 1, x\in\mathbb{R}^n,\|x\|_G=\inf\left\{\alpha>0\large\mid\sum_{i=1}^n G(|x_i|/\alpha)\leq 1\right\}.
$
\end{definition}
\vspace{-0.1in}
For any function $G_1(\cdot)$ which is valid to define an Orlicz norm, we can always ``simplify/normalize'' the function to get another function $G_2$ such that computing $\|\cdot\|_{G_1}$ is equivalent to computing $\|\cdot\|_{G_2}.$

\begin{fact}\label{fac:simpl}
Given a function $G_1:\mathbb{R}_+\rightarrow\mathbb{R}_+$ which can induce an Orlicz norm $\|\cdot\|_{G_1}$ (Definition~\ref{def:orlicz_norm}), define function $G_2:\mathbb{R}_+\rightarrow\mathbb{R}_+$ as the following:
$
G_2(x)=\left\{\begin{array}{ll}G_1(G_1^{-1}(1)x) & 0\leq x\leq 1\\ sx-(s-1) & x>1\end{array}\right.
$
where
$
s=\sup\left\{\left(G_2(y)-G_2(x)\right)/(y-x)\mid 0\leq x\leq y\leq 1\right\}.
$
Then $\|\cdot\|_{G_2}$ is a valid Orlicz norm. Furthermore, $\forall n\in\mathbb{Z},n\geq 1,x\in\mathbb{R}^n,$ we have $\|x\|_{G_1}=\|x\|_{G_2}/ G_1^{-1}(1).$
\end{fact}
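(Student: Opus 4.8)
The plan is to establish the two assertions separately: first that $G_2$ is a legitimate Orlicz function (nonzero, monotone nondecreasing, convex, vanishing at $0$), and then the scaling identity $\|x\|_{G_1}=\|x\|_{G_2}/G_1^{-1}(1)$. Throughout I write $c=G_1^{-1}(1)>0$ (positive because $G_1(0)=0\neq 1$).

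For the first part, on $[0,1]$ the function $x\mapsto G_1(cx)$ is a composition of the convex nondecreasing $G_1$ with an increasing linear map, hence convex and nondecreasing, with $G_2(0)=G_1(0)=0$ and $G_2(1)=G_1(c)=1$. The point I would emphasize is that the constant $s$ is precisely the left derivative $(G_2)'_-(1)$: since $G_2$ restricted to $[0,1]$ is convex, its secant slopes $(G_2(y)-G_2(x))/(y-x)$ are monotone in each endpoint and their supremum over $0\le x\le y\le 1$ is attained in the limit $y\to 1^-$. Consequently the linear piece $sx-(s-1)$ continues $G_2$ past $x=1$ with matching value ($=1$) and matching slope ($=s$). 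Gluing a convex function on $[0,1]$ to a linear function on $[1,\infty)$ with equal one-sided derivatives at the seam yields a globally convex function; monotonicity and nonnegativity follow because $s>0$ (the secant slope from $0$ to $1$ is $1$, so $s\ge 1$). This makes $\|\cdot\|_{G_2}$ a valid Orlicz norm.

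For the scaling identity, the crucial structural fact I would isolate is that a feasible scaling forces every coordinate into the region where $G_2$ coincides with the rescaled $G_1$. Concretely, $G_2$ is convex with $G_2(1)=1$ and positive slope beyond $1$, so $G_2(t)\le 1\iff t\le 1$; the analogous $G_1(t)\le 1\iff t\le c$ holds because a convex $G_1$ with $G_1(0)=0$ and $G_1(c)=1$ cannot plateau at the value $1$ (its slope is nondecreasing and already positive on the way up to $c$, hence $G_1(t)>1$ for $t>c$). Now if $\beta>0$ satisfies $\sum_i G_2(|x_i|/\beta)\le 1$, then each summand is $\le 1$, forcing $|x_i|/\beta\le 1$; on $[0,1]$ we have $G_2(|x_i|/\beta)=G_1(c|x_i|/\beta)$, so setting $\alpha=\beta/c$ gives $\sum_i G_1(|x_i|/\alpha)\le 1$. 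Conversely, feasibility of $\alpha$ for $G_1$ forces $|x_i|/\alpha\le c$, i.e. $c|x_i|/\beta\le 1$ with $\beta=c\alpha$, landing again in the coincidence region and yielding $\sum_i G_2(|x_i|/\beta)\le 1$.

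These two implications show that $\beta$ is feasible for $G_2$ if and only if $\alpha=\beta/c$ is feasible for $G_1$, so the feasible sets correspond under the increasing bijection $\beta=c\alpha$. Taking infima then gives $\|x\|_{G_2}=c\,\|x\|_{G_1}=G_1^{-1}(1)\,\|x\|_{G_1}$, which is the claim. I expect the main obstacle to be the convexity/gluing argument for $G_2$ — in particular justifying that $s$ equals the left derivative at $1$ so the linear extension matches slope — together with the observation that this linear extension is \emph{immaterial} to the norm, since the constraint $\sum_i G_2(|x_i|/\beta)\le 1$ always confines every argument to $[0,1]$, where $G_2$ is exactly the rescaled $G_1$.
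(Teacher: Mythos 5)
Your proposal is correct and follows essentially the same route as the paper's proof: convexity of $G_2$ via the supremum-of-secant-slopes definition of $s$, and the norm identity via the observation that any feasible scaling forces every coordinate into the region where $G_2$ coincides with the rescaled $G_1$, so the feasible sets for the two infima correspond under $\beta=G_1^{-1}(1)\alpha$. Your write-up is somewhat more detailed than the paper's (identifying $s$ with the left derivative at $1$ and explicitly justifying the gluing and the equivalences $G_2(t)\le 1\iff t\le 1$, $G_1(t)\le 1\iff t\le G_1^{-1}(1)$), but the underlying argument is the same.
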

\vspace{-0.1in}

Thus, without loss of generality, in this paper we consider the Orlicz norm induced by function $G$ which satisfies $G(1)=1$, and $G(x)$ is a linear function for $x>1$. In addition, we also require that $G(x)$ grows no faster than quadratically in $x$. Thus, we define the property $\mathcal{P}$ of a function $G:\mathbb{R}\rightarrow\mathbb{R}_+$ as the following:
1) $G$ is a nonzero monotone nondecreasing convex function in $[0,\infty)$;
2) $G(0)=0,G(1)=1,\forall x\in\mathbb{R},G(x)=G(-x)$;
3) $G(x)$ is a linear function for $x>1$, i.e. $\exists s>0,\forall x>1,G(x)=sx+(1-s)$;
4) $\exists \delta_G>0$ such that $G$ is twice differentiable on interval $(0,\delta_G)$. Furthermore, $G'_+(0)$ and $G''_+(0)$ exist, and either $G'_+(0)>0$ or $G''_+(0)>0$;
5) $\exists C_G>0,\forall 0<x<y,G(y)/G(x)\leq C_G(y/x)^2$.

The condition 1 is required to define an Orlicz norm. The conditions 2,3 are required because we can always do the simplification/normalization (see Fact~\ref{fac:simpl}). The condition 4 is required for the smoothness of $G$. The condition 5 is due to the subquadratic growth condition. Subquadratic growth condition is necessary for sketching $\sum_{i=1}^n G(x_i)$ with sketch size sub-polynomial in the dimension $n$, as shown by~\cite{bo10}. For example, if $G(x)= x^p$ for some $p>2,$ then $\|\cdot\|_G$ is the same as $\|\cdot\|_p$. It is necessary to take $\Omega(n^{1-2/p})$ space to sketch $\ell_p$ norm in $n$-dimensional space. Condition 5 is also necessary for $2$-concave property, \cite{ks85,ks89} shows that $\|\cdot\|_G$ can be embedded into $\ell_1$ space if and only if $G$ is $2$-concave. Although~\cite{s95} gives an explicit embedding to $\ell_1$, it cannot be computed efficiently.

There are many potential choices of $G(\cdot)$ which satisfies property $\mathcal{P}$, the following are some examples:
1) $G(x)= |x|^p$ for some $1\leq p\leq 2$. In this case $\|\cdot\|_G$ is exactly the $\ell_p$ norm $\|\cdot \|_p$;
2) $G(x)$ can be a normalized M-estimator function (see~\cite{z97}), i.e. define $f(x)$ to be one of the functions in Table~\ref{tab:M-estimators}.
and let
$
G(x)=\left\{ \begin{array}{ll}f(f^{-1}(1)x)&|x|\leq 1\\G'_-(1)|x|-(G'_-(1)-1) & |x|>1\end{array}\right..
$


\begin{table}[t]
\vspace{-0.2in}
\caption{\small Some of M-estimators.}
\label{tab:M-estimators}
\begin{center}
\begin{small}
\begin{sc}
\begin{tabular}{lcc}
    \toprule
      \small{Huber} & \small{$\left\{\begin{array}{ll}x^2/2&|x|\leq c\\ c(|x|-c/2) & |x|>c\end{array}\right.$} \\ \hline
      \small{$\ell_1-\ell_2$} &  \small{$2(\sqrt{1+x^2/2}-1)$}   \\ \hline
      \small{``Fair"} & \small{$c^2\left(|x|/c-\log(1+|x|/c)\right)$}\\
\bottomrule
\end{tabular}
\end{sc}
\end{small}
\end{center}
\vspace{-0.2in}

\end{table}


The following presents some useful properties of function $G$ with property $\mathcal{P}.$ See Appendix for details of proofs of the following Lemmas.

\begin{lemma}\label{lem:func_grow_ratio}
Given a function $G(\cdot)$ with property $\mathcal{P}$, then $\forall 0\leq x\leq 1,$ $x^2/C_G\leq G(x)\leq x.$
\end{lemma}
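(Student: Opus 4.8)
The plan is to prove the two-sided bound $x^2/C_G \leq G(x) \leq x$ for all $x \in [0,1]$ by exploiting each of the relevant conditions in property $\mathcal{P}$ separately, treating the upper and lower bounds independently. Both bounds should follow essentially from convexity together with the normalization $G(0)=0$ and $G(1)=1$, plus the subquadratic growth condition (condition 5) for the lower bound.

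For the upper bound $G(x) \leq x$, I would argue from convexity alone. Since $G$ is convex on $[0,\infty)$ with $G(0)=0$ and $G(1)=1$, the chord from $(0,0)$ to $(1,1)$ lies above the graph of $G$ on the interval $[0,1]$; that chord is exactly the line $y=x$. Concretely, for $x \in [0,1]$ write $x = (1-x)\cdot 0 + x \cdot 1$ as a convex combination of the endpoints $0$ and $1$, and apply Jensen/convexity to get $G(x) \leq (1-x)G(0) + xG(1) = x$. This is the easy direction.

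For the lower bound $x^2/C_G \leq G(x)$, I would invoke condition 5 directly. That condition states $G(y)/G(x) \leq C_G (y/x)^2$ for all $0 < x < y$. Setting $y = 1$ and letting $x$ range over $(0,1)$ gives $G(1)/G(x) \leq C_G (1/x)^2$, i.e. $1/G(x) \leq C_G/x^2$, which rearranges to $G(x) \geq x^2/C_G$. The endpoint cases $x=0$ (where both sides are $0$) and $x=1$ (where the bound reads $1/C_G \leq 1$, consistent with $C_G \geq 1$) are handled separately, the $x=1$ case requiring only that $C_G \geq 1$, which itself follows from applying condition 5 at, say, $x \to 0$ limits or noting $G(1)=1$ forces $C_G \geq 1$.

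I do not expect a serious obstacle here, since each inequality drops out of a single condition of $\mathcal{P}$. The only point needing mild care is the boundary behavior: condition 5 is stated for strict inequalities $0 < x < y$, so I must confirm the lower bound extends by continuity (or limiting argument) to the closed endpoints, and verify that the constant $C_G$ is indeed at least $1$ so that the bound is nonvacuous at $x=1$. These are routine checks rather than genuine difficulties.
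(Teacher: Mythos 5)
Your proposal is correct and follows essentially the same two steps as the paper's proof: the chord argument from convexity with $G(0)=0$, $G(1)=1$ for the upper bound, and condition 5 instantiated at $y=1$ for the lower bound. Your extra care with the endpoints $x=0$, $x=1$ and the observation $C_G\geq 1$ is a minor refinement the paper glosses over, but it does not change the argument.
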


\begin{lemma}\label{lem:sandwich}
Given a function $G(\cdot)$ with property $\mathcal{P}$, then
$
\forall x\in\mathbb{R}^n, \|x\|_2/\sqrt{C_G}\leq\|x\|_G\leq \|x\|_1.
$
\end{lemma}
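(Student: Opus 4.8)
The plan is to prove the two inequalities separately, and in each case to reduce the statement to the pointwise estimates $x^2/C_G\le G(x)\le x$ on $[0,1]$ supplied by Lemma~\ref{lem:func_grow_ratio}, applied to the normalized coordinates $|x_i|/\alpha$ for a well-chosen scale $\alpha$. Throughout I may assume $x\neq 0$, since both sides vanish when $x=0$. For the upper bound $\|x\|_G\le\|x\|_1$, I would simply verify that $\alpha=\|x\|_1$ is feasible in the infimum defining $\|x\|_G$. Because $\alpha=\sum_j|x_j|\ge|x_i|$ for each $i$, every ratio $|x_i|/\alpha$ lies in $[0,1]$, so Lemma~\ref{lem:func_grow_ratio} gives $G(|x_i|/\alpha)\le|x_i|/\alpha$; summing yields $\sum_i G(|x_i|/\alpha)\le\sum_i|x_i|/\alpha=1$, which certifies that $\alpha=\|x\|_1$ satisfies the constraint and hence that $\|x\|_G\le\|x\|_1$.

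For the lower bound I would set $\alpha=\|x\|_G$ and first record that the map $\alpha\mapsto\sum_i G(|x_i|/\alpha)$ is continuous and non-increasing on $(0,\infty)$ (since $G$ is continuous and nondecreasing), tends to $+\infty$ as $\alpha\to 0^+$, and tends to $0$ as $\alpha\to\infty$; by the intermediate value theorem the infimum is attained and $\sum_i G(|x_i|/\alpha)=1$. The crucial step is to confirm that $|x_i|/\alpha\le 1$ for every $i$: if some ratio were strictly larger than $1$, then since $G$ is strictly increasing past $1$ (with slope $s>0$ from condition~3 of property $\mathcal{P}$) we would have $G(|x_i|/\alpha)>G(1)=1$, forcing the nonnegative sum to exceed $1$, a contradiction. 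With all ratios in $[0,1]$, Lemma~\ref{lem:func_grow_ratio} gives $G(|x_i|/\alpha)\ge(|x_i|/\alpha)^2/C_G$, and summing yields $1=\sum_i G(|x_i|/\alpha)\ge\|x\|_2^2/(C_G\,\alpha^2)$, i.e. $\|x\|_2/\sqrt{C_G}\le\alpha=\|x\|_G$, as desired.

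The only delicate point, and the step I expect to require the most care, is confining the arguments of $G$ to the interval $[0,1]$ in the lower bound: the quadratic estimate $G(y)\ge y^2/C_G$ of Lemma~\ref{lem:func_grow_ratio} is available only there, and the merely linear growth of $G$ for $y>1$ would not by itself produce a $y^2$-type lower bound. The key observation is that normalizing by $\alpha=\|x\|_G$ (so that the contributions sum to exactly $1$) automatically keeps each $|x_i|/\alpha\le 1$; a ratio equal to $1$ is harmless since it still lies in the closed interval $[0,1]$ where the estimate applies, and a ratio strictly exceeding $1$ is ruled out by the single-coordinate contribution argument above.
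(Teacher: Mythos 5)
Your proof is correct and follows essentially the same route as the paper's: both bounds reduce to the pointwise estimates $y^2/C_G\le G(y)\le y$ on $[0,1]$ (Lemma~\ref{lem:func_grow_ratio}), with the normalized coordinates confined to $[0,1]$ by the same single-coordinate contradiction ($G(y)>1$ for $y>1$). The only differences are cosmetic --- you certify the upper bound by exhibiting $\alpha=\|x\|_1$ as feasible where the paper instead uses the attained equality $\sum_i G(|x_i|/\alpha)=1$ at $\alpha=\|x\|_G$, and you justify that attainment (which the paper asserts without comment) via continuity and monotonicity.
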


\begin{lemma}\label{lem:fasterthanlinear}
Given a function $G(\cdot)$ with property $\mathcal{P}$, then $\forall 0<x<y,$ we have $y/x\leq G(y)/G(x)$.
\end{lemma}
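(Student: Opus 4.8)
The plan is to recognize that the desired inequality $y/x \le G(y)/G(x)$ is equivalent, after cross-multiplying by the positive quantities involved, to $G(x)/x \le G(y)/y$ for all $0 < x < y$. In other words, the lemma asserts that the ``secant slope from the origin'' $x \mapsto G(x)/x$ is nondecreasing on $(0,\infty)$. This is a classical consequence of convexity together with the normalization $G(0) = 0$, so the proof reduces to making that observation precise.

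First I would verify that the ratios are well defined, i.e. that $G(t) > 0$ for every $t > 0$. For $0 < t \le 1$ this follows from Lemma~\ref{lem:func_grow_ratio}, which gives $G(t) \ge t^2/C_G > 0$, and for $t > 1$ it follows since $G$ is nondecreasing (property $\mathcal{P}$, condition 1) with $G(1) = 1$. Hence both $G(x)$ and $G(y)$ are strictly positive and the quotients make sense.

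The key step is a single application of convexity. Fix $0 < x < y$ and set $\lambda = x/y \in (0,1)$, so that $x = \lambda y + (1-\lambda)\cdot 0$. Since $G$ is convex on $[0,\infty)$ (condition 1) and $G(0) = 0$ (condition 2),
\[G(x) = G\bigl(\lambda y + (1-\lambda)\cdot 0\bigr) \le \lambda\,G(y) + (1-\lambda)\,G(0) = \frac{x}{y}\,G(y).\]
Dividing by the positive number $x$ yields $G(x)/x \le G(y)/y$, and rearranging (using $G(x) > 0$) gives $y/x \le G(y)/G(x)$, as claimed.

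There is essentially no obstacle here beyond bookkeeping: the entire content is the elementary fact that a convex function vanishing at the origin has nondecreasing secant slopes from the origin. The only point requiring a moment's care is the positivity of $G$ on $(0,\infty)$, which would otherwise make the ratio $G(y)/G(x)$ ill-defined; as noted above this is supplied by Lemma~\ref{lem:func_grow_ratio} and monotonicity, and it is precisely condition 4 of property $\mathcal{P}$ (ensuring $G$ leaves $0$ with positive first or second derivative) that underlies the lower bound there.
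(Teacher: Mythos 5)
Your proof is correct and is essentially identical to the paper's: both apply convexity to the decomposition $x = (x/y)\,y + (1-x/y)\cdot 0$ with $G(0)=0$ to get $G(x) \le (x/y)G(y)$ and rearrange. Your extra verification that $G(t)>0$ for $t>0$ is a minor addition of rigor that the paper leaves implicit.
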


\begin{lemma}\label{lem:decompose}
Given a function $G(\cdot)$ with property $\mathcal{P}$, there exist a constant $\alpha_G>0$ which may depend on $G$, such that $\forall 0\leq a,b,$ if $ab\leq 1,$ then $G(a)G(b)\leq \alpha_G G(ab).$
\end{lemma}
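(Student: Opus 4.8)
The plan is to first distill from condition~4 of property $\mathcal{P}$ a clean \emph{dichotomy}: on the whole interval $(0,1]$ the function $G$ is comparable either to $x$ or to $x^2$. Since $G$ is convex with $G(0)=0$, the ratio $G(x)/x$ is nondecreasing on $(0,\infty)$, so $G(x)/x \to G'_+(0)$ as $x\to 0^+$. If $G'_+(0)=:c_1>0$, this immediately gives $c_1 x \le G(x)\le x$ for all $x\in(0,1]$, the upper bound being Lemma~\ref{lem:func_grow_ratio}. If instead $G'_+(0)=0$ (so $G''_+(0)>0$ by condition~4), a second-order Taylor expansion at $0$ gives $\lim_{x\to 0^+} G(x)/x^2 = G''_+(0)/2\in(0,\infty)$; as $G(x)/x^2$ is continuous on $(0,1]$ with a finite positive limit at $0$, it attains a finite maximum $C_0\ge 1$, whence $x^2/C_G \le G(x)\le C_0 x^2$ on $(0,1]$, the lower bound again from Lemma~\ref{lem:func_grow_ratio}. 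Establishing this dichotomy is the crux of the whole argument.

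With the dichotomy available, the remainder is a short case analysis. The statement is symmetric in $a,b$ and trivial when $a=0$ or $b=0$, so I would assume $0<a\le b$; then $a^2\le ab\le 1$ forces $a\le 1$. I would also record from condition~3 that $G(x)=sx+(1-s)$ with $s\ge 1$ for $x>1$, so $G(x)\le sx$ there and in particular $G(b)\le sb$ whenever $b>1$.

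In the linear case ($c_1>0$) I would split on $b$. If $b\le 1$, then $G(a)G(b)\le ab$ while $G(ab)\ge c_1\, ab$ (as $G(x)/x\ge c_1$), yielding $G(a)G(b)\le c_1^{-1}G(ab)$. If $b>1$, then $a\le 1/b<1$, so $G(a)G(b)\le a\cdot sb = s\,ab\le (s/c_1)G(ab)$; thus $\alpha_G=s/c_1$ suffices. In the quadratic case ($c_1=0$) I would use the quadratic sandwich. If $b\le 1$, then $G(a)G(b)\le C_0^2(ab)^2 \le C_0^2 C_G\, G(ab)$. If $b>1$, then $a<1$ and $G(a)G(b)\le C_0 a^2\cdot sb$; dividing by $G(ab)\ge (ab)^2/C_G$ gives a ratio $C_0 s C_G\cdot a^2 b/(ab)^2 = C_0 s C_G/b \le C_0 s C_G$ since $b>1$. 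Hence $\alpha_G = C_G\, C_0\max(C_0,s)$ works.

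The subtlety worth flagging, and the reason the dichotomy cannot be skipped, is that the naive global estimates $x^2/C_G\le G(x)\le x$ alone do not suffice: pairing $G(a)G(b)\le ab$ with $G(ab)\ge (ab)^2/C_G$ produces the bound $C_G/(ab)$, which blows up as $ab\to 0$ precisely because the exponents on the two sides are mismatched. The entire difficulty is therefore concentrated in pinning down the consistent local exponent of $G$ near $0$; once condition~4 forces this exponent to be exactly $1$ or exactly $2$, matching upper and lower bounds with the \emph{same} exponent become available and every case collapses to a constant, the $b>1$ cases being handled uniformly by the exact cancellation $a^2 b/(ab)^2 = 1/b\le 1$ (respectively $ab/(ab)=1$ in the linear case).
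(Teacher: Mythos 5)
Your proof is correct, and it rests on the same fundamental dichotomy as the paper's proof --- condition~4 forces $G$ to behave either linearly ($G'_+(0)>0$) or quadratically ($G'_+(0)=0$, $G''_+(0)>0$) near the origin --- but you organize the argument around a different intermediate claim, and the organization is genuinely cleaner. The paper splits by the \emph{magnitudes} of $a,b$: the case $a\geq 1$ or $b\geq 1$ is handled by linearity of $G$ above $1$ together with Lemma~\ref{lem:fasterthanlinear}; the case $a,b\leq 1$ with $\max(a,b)\geq 0.5$ is handled by condition~5 alone (giving a bound $C_G/a^2\leq 4C_G$); and only when $a,b\leq 0.5$ does it invoke the derivative dichotomy, using a Taylor estimate on a small interval $(0,\delta)$ and patching the rest either with a convexity trick (in the linear sub-case, $G(a)\leq G(ab)+2a$, then expanding the product) or with condition~5 again (in the quadratic sub-case, when $a$ or $b$ exceeds $\delta$). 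You instead promote the local behavior at $0$ to \emph{global} two-sided bounds on all of $(0,1]$ --- $c_1x\leq G(x)\leq x$ via monotonicity of $G(x)/x$ under convexity, or $x^2/C_G\leq G(x)\leq C_0x^2$ via continuity of $G(x)/x^2$ and its finite positive limit $G''_+(0)/2$ at $0$ --- after which every case is a one-line computation with exact exponent cancellation, and a two-way split $b\leq 1$ versus $b>1$ replaces the paper's finer magnitude split. What your route buys is transparency and clean explicit constants, $s/c_1$ and $C_G C_0\max(C_0,s)$; what it costs is the compactness argument defining $C_0$, which is mildly non-constructive (though the paper's Taylor radii $\delta_1,\delta_2$ are equally so). Two cosmetic points you should fix in a final write-up: the supremum of $G(x)/x^2$ on $(0,1]$ need not be \emph{attained} (it may be approached only as $x\to 0^+$), but finiteness, which follows from the continuous extension to $[0,1]$, is all you actually use; and the claim $s\geq 1$ deserves its one-line justification, namely that convexity with $G(0)=0$, $G(1)=1$ forces the slope of $G$ above $1$ to be at least the chord slope $1$.
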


\section{Subspace embedding for Orlicz norm using exponential random variables}\label{sec:subspace_for_orlicz}
\vspace{-0.1in} In this section, we develop the subspace embedding
under the Orlicz norms which are induced by functions $G$ with the
property $\mathcal{P}$. We first show how to embed the subspace with
$\|\cdot\|_G$ norm into a subspace with $\ell_2$ norm, and then we use
dimensionality reduction techniques for the $\ell_2$ norm. Overall, we
will prove Theorem~\ref{thm:full_subspace_embedding} stated at the end
of this section. Before discussing the details, we give formal
definitions of subspace embedding.

\begin{definition}[Subspace embedding for Orlicz norm]
Given a matrix $A\in\mathbb{R}^{n\times d}$, if $S\in\mathbb{R}^{m\times n}$ satisfies
$
\forall x\in\mathbb{R}^{d}, \|Ax\|_G/\alpha\leq \|SAx\|_v\leq \beta \|Ax\|_G
$
where $\alpha,\beta\geq 1,\|\cdot\|_v$ is a norm (can still be $\|\cdot\|_G$), then we say $S$ embeds the column space of $A$ with Orlicz norm into the column space of $SA$ with $v$-norm. The distortion is $\alpha\beta$.
\end{definition}
\vspace{-0.1in}

If the distortion and the $v$-norm are clear from the
context, we just say $S$ is a subspace embedding matrix for $A$.

\begin{definition}[Subspace embedding for $\ell_2$ norm]
Given a matrix $A\in\mathbb{R}^{n\times d}$, if $S\in\mathbb{R}^{m\times n}$ satisfies
$
\forall x\in\mathbb{R}^{d}, (1-\varepsilon)\|Ax\|_2^2\leq \|SAx\|_2^2\leq (1+\varepsilon) \|Ax\|_2^2,
$
then we say $S$ is a subspace embedding of column space of $A$.
\end{definition}
\vspace{-0.05in}

There are many choices of $\ell_2$ subspace embedding matrix $A$
satisfying the above definition. Examples are: random sign JL
matrix~\cite{a03,cw09}, fast JL matrix~\cite{ac09}, and sparse embedding
matrices~\cite{cw13,mm13,nn13}.

The main technical thrust is to embed $\|\cdot\|_G$ into $\ell_2$
norm. As the embedding matrix, we use $S=\Pi D^{-1}$ where $\Pi$ is
one of the above $\ell_2$ embedding matrices and $D$ is a diagonal
matrix of which diagonal entries are i.i.d. random variables draw from
the distribution with CDF $1-e^{-G(t)}.$ Equivalently, each entry on
the diagonal of $D$ is $G^{-1}(u),$ where $u$ is an i.i.d. sample from
the standard exponential distribution, i.e. CDF is $1-e^{-t}$. In
Section~\ref{sec:no_dilation}, we will prove that $\forall
x\in\mathbb{R}^d,$ $\|D^{-1}Ax\|_G$
  will not be too large. In
Section~\ref{sec:no_contraction}, we will show that $\forall
x\in\mathbb{R}^d,$ $\|D^{-1}Ax\|_\infty$ cannot be too small.
Then due to Lemma~\ref{lem:sandwich}, we know that $\|D^{-1}Ax\|_2$ is a good estimator to $\|Ax\|_G.$ In Section~\ref{sec:all_together}, we show how to put all the ingredients together.

\vspace{-0.1in}
\subsection{Dilation bound}\label{sec:no_dilation}
\vspace{-0.1in}
We construct a randomized linear map $f:\mathbb{R}^n \rightarrow \mathbb{R}^n$:
$ (x_1, x_2, ..., x_n) \xmapsto{f} \left( x_1/u_1, x_2/u_2, ..., x_n/u_n \right) $
where each $u_i$ is drawn from a distribution with CDF $1-e^{-G(t)}$. Notice that for proving the dilation bound, we do not need to assume $u_i$ are independent.

\begin{theorem}\label{thm:no_dilation_for_each}
Given $x\in\mathbb{R}^n$, let $\|\cdot\|_G$ be an Orlicz norm induced by function $G(\cdot)$ which has property $\mathcal{P}$, and let
$
f(x)= (x_1/u_1, x_2/u_2, ..., x_n/u_n),
$
where each $u_i$ is drawn from a distribution with CDF $1-e^{-G(t)}$. Then with probability at least $1-\delta-O(1/n^{19})$,
$
\|f(x)\|_G\leq O(\alpha_G\delta^{-1}\log(n))\|x\|_G,
$
where $\alpha_G$ is a constant may depend on function $G(\cdot)$.
\end{theorem}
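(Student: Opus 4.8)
The plan is to exploit the scale-invariance of the Orlicz norm together with the observation that, on the unit ball, the norm decomposes into a sum of per-coordinate contributions. First I would normalize: since $\|\cdot\|_G$ is positively homogeneous, we may assume $\|x\|_G = 1$, so that, writing $y_i = |x_i|$, continuity of $G$ gives $\sum_{i=1}^n G(y_i) \le 1$; set $p_i = G(y_i)$, so the $p_i$ form a subprobability vector and each $y_i \le 1$. By the definition of the Orlicz norm, to prove $\|f(x)\|_G \le T$ it suffices to exhibit a single feasible $\alpha = T$, i.e. to show
\[
\sum_{i=1}^n G\!\left(\frac{y_i}{u_i T}\right) \le 1,
\]
and I will take $T = \Theta(\alpha_G \delta^{-1}\log n)$. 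This reduces the non-entrywise statement about $\|f(x)\|_G$ to an entrywise sum, to which linearity of expectation can be applied.

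The crux is a per-coordinate linearization. I would first pass to the standard exponential: since $u_i$ has CDF $1 - e^{-G(t)}$, the variable $W_i := G(u_i)$ is a standard exponential. Because $y_i \le 1 \le T$ we have $(y_i/(u_iT))\cdot u_i = y_i/T \le 1$, so the decomposition Lemma~\ref{lem:decompose} applies with $a = y_i/(u_iT)$, $b = u_i$ and yields
\[
G\!\left(\frac{y_i}{u_iT}\right)G(u_i) \le \alpha_G\, G\!\left(\frac{y_i}{T}\right), \qquad\text{so}\qquad G\!\left(\frac{y_i}{u_iT}\right) \le \frac{\alpha_G}{W_i}\,G\!\left(\frac{y_i}{T}\right).
\]
Since $G$ grows at least linearly (Lemma~\ref{lem:fasterthanlinear} gives $G(y_i)/G(y_i/T)\ge T$), we get $G(y_i/T) \le p_i/T$, and hence the entire sum is controlled by a weighted harmonic average of the exponentials, $\sum_i G(y_i/(u_iT)) \le \frac{\alpha_G}{T}\sum_i p_i/W_i$. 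It then remains to show $\sum_i p_i/W_i \le T/\alpha_G$ with the claimed probability.

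The main obstacle is that $\mathbb{E}[1/W_i] = \int_0^\infty w^{-1}e^{-w}\,dw$ diverges at the origin, so a naive Markov bound fails; this is precisely where both the $\log n$ factor and the $O(1/n^{19})$ failure term must originate. I would resolve it by truncation. Let $\tau = n^{-20}$ and let $E$ be the event that $W_i \ge \tau$ for all $i$; since $\Pr[W_i < \tau] = 1 - e^{-\tau} \le \tau$, a union bound gives $\Pr[\bar E] \le n\tau = n^{-19}$, and note that no independence of the $u_i$ is required. On $E$ the truncated mean is finite, $\mathbb{E}\big[W_i^{-1}\mathbf{1}[W_i\ge\tau]\big] \le \int_\tau^\infty w^{-1}e^{-w}\,dw \le \log(1/\tau)+1 = O(\log n)$, so by linearity $\mathbb{E}\big[\sum_i p_i W_i^{-1}\mathbf{1}[W_i\ge\tau]\big] \le O(\log n)\sum_i p_i \le O(\log n)$. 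Markov's inequality then bounds the probability that this truncated weighted sum exceeds $T/\alpha_G$ by $\alpha_G\,O(\log n)/T \le \delta$ once $T = C\alpha_G\delta^{-1}\log n$ for a suitable constant $C$. Intersecting with $E$, on which the truncation is vacuous, we obtain with probability at least $1 - \delta - n^{-19}$ that $\sum_i p_i/W_i \le T/\alpha_G$, hence $\sum_i G(y_i/(u_iT)) \le 1$ and $\|f(x)\|_G \le T$, as claimed.
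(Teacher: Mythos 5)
Your proof is correct and follows essentially the same route as the paper's: both normalize so that $\sum_i G(y_i)\le 1$, truncate the exponential variables $G(u_i)$ at $n^{-20}$ (yielding the $O(1/n^{19})$ term via a union bound, with no independence needed), invoke Lemma~\ref{lem:decompose} to pull out a factor $1/G(u_i)$ per coordinate, bound the truncated expectation of $1/G(u_i)$ by $O(\log n)$, and finish with Markov's inequality. The only cosmetic difference is that the paper splits its expectation integral into the regions $u\ge 1$ (handled by convexity) and $u\in[G^{-1}(n^{-20}),1]$ (handled by Lemma~\ref{lem:decompose}), whereas you treat both regions uniformly by applying Lemma~\ref{lem:decompose} with the scaling $T$ already inside and then using Lemma~\ref{lem:fasterthanlinear}, which is an equivalent bookkeeping choice.
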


\textbf{Proof sketch:} By taking union bound, we have $\forall i\in[n], u_i\geq G^{-1}(1/n^{20})$ with high probability.  Let $\alpha=\|x\|_G$. For $\gamma\geq 1,$ we want to upper bound the probability that $\|f(x)\|_G\geq \gamma \alpha.$ This is equivalent to upper bound the probability that $\|f(x)/(\gamma\alpha)\|_G\geq 1.$ Notice that $\Pr(\|f(x)/(\gamma\alpha)\|_G\geq 1)=\Pr(\sum G(x_i/\alpha\cdot 1/(\gamma u_i))\geq 1).$ By Markov inequality, it suffices to bound the expectation of $\sum G(x_i/\alpha\cdot 1/(\gamma u_i))$ conditioned on $u_i$ are not too small. By lemma~\ref{lem:decompose}, $\sum G(x_i/\alpha\cdot 1/(\gamma u_i))\leq \alpha_G/\gamma\cdot\sum G(x_i/\alpha)\cdot 1/G(u_i).$ Because $u_i$ is not too small, the conditional expectation of $1/G(u_i)$ is roughly $O(\log n).$ So the probability that $\|f(x)\|_G\geq \gamma \alpha$ is bounded by $O(\alpha_G\log n/\gamma),$ set $\gamma=O(\log n)\alpha_G/\delta$, we can complete the proof. See appendix for the details of the whole proof.

The final step is to use a well-conditioned basis; see details in
appendix.
We then obtain the following
theorem.

\begin{theorem}\label{thm:no_dilation_for_all}
Let $G(\cdot)$ be a function which has property $\mathcal{P}.$ Given a matrix $A\in\mathbb{R}^{n\times m}$ with rank $d\leq n$, let $D\in\mathbb{R}^{n\times n}$ be a diagonal matrix of which each entry on the diagonal is drawn from a distribution with CDF $1-e^{-G(t)}.$ Then, with probability at least $0.99,$
$
\forall x\in\mathbb{R}^m, \|D^{-1}Ax\|_G\leq O(\alpha_Gd^2\log n)\|Ax\|_G,
$
where $\alpha_G\geq 1$ is a constant which only depends on $G(\cdot)$.
\end{theorem}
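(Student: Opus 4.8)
The plan is to boost the single-vector dilation bound of Theorem~\ref{thm:no_dilation_for_each} to a uniform bound over the whole column space, paying only a $\mathrm{poly}(d)$ factor, via a well-conditioned basis argument in the style of \cite{ddhkm09}. First I would fix a basis of the column space of $A$ that is well-conditioned for the Orlicz norm: concretely, I would use an Auerbach basis $U_1,\dots,U_d$ (the columns of some $U$ whose column space equals that of $A$), normalized so that $\|U_j\|_G = 1$ for every $j$, and such that the coefficients are controlled, i.e.\ for every $x$ with $\|Ux\|_G\le 1$ we have $\|x\|_\infty\le 1$ (equivalently $|x_j|\le \|Ux\|_G$ for all $j$). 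The existence of such a basis for any finite-dimensional normed space is classical, and since $\|\cdot\|_G$ is a genuine norm this applies verbatim. I would state this as the ``good basis'' lemma and defer its proof to the appendix (as the excerpt promises).

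Next I would apply Theorem~\ref{thm:no_dilation_for_each} with $\delta$ set to a small constant, say $\delta = 0.01/d$, simultaneously to each of the $d$ basis vectors $U_1,\dots,U_d$. By a union bound over the $d$ vectors, with probability at least $0.99$ we get, for every $j\in[d]$,
\begin{align*}
\|D^{-1}U_j\|_G \le O(\alpha_G d\log n)\,\|U_j\|_G = O(\alpha_G d\log n),
\end{align*}
where the last equality uses the normalization $\|U_j\|_G=1$. Note that here I only need the per-vector statement, so the diagonal entries of $D$ being independent is irrelevant for this step (the single-vector theorem does not require independence), though independence will of course be used elsewhere in the no-contraction direction.

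Then I would pass from the basis vectors to an arbitrary $y=Ax$ in the column space. Write $y = \sum_{j=1}^d c_j U_j$ for the appropriate coefficient vector $c$. By the triangle inequality for the Orlicz norm,
\begin{align*}
\|D^{-1}y\|_G \le \sum_{j=1}^d |c_j|\,\|D^{-1}U_j\|_G \le O(\alpha_G d\log n)\sum_{j=1}^d |c_j|.
\end{align*}
The well-conditioning property gives $|c_j|\le \|y\|_G$ for each $j$, hence $\sum_j |c_j|\le d\,\|y\|_G$, and combining the two displays yields $\|D^{-1}y\|_G \le O(\alpha_G d^2\log n)\,\|y\|_G$, which is exactly the claimed bound with $y=Ax$.

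The main obstacle I anticipate is establishing and exploiting the right well-conditioned basis for a \emph{general} Orlicz norm rather than an $\ell_p$ norm: the $\ell_p$ arguments of \cite{ddhkm09} lean on the entrywise, homogeneous structure of $\ell_p$, whereas $\|\cdot\|_G$ is not entrywise and only the unit ball behaves additively. I would therefore need to verify that the Auerbach/John-type construction gives both the normalization $\|U_j\|_G=1$ and the coefficient control $|c_j|\le\|Ux\|_G$ using only that $\|\cdot\|_G$ is a norm with a bounded, symmetric, convex unit ball; the subquadratic growth (condition~5 of property $\mathcal{P}$) and Lemma~\ref{lem:sandwich} ensure the norm is well-behaved and comparable to $\ell_2$ so the basis is finite-dimensionally sound. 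A secondary bookkeeping point is tracking the constant $\alpha_G$ and the failure probabilities $\delta$ and $O(1/n^{19})$ through the union bound, but these only cost constants and are routine.
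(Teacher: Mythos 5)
Your proposal is correct and follows essentially the same route as the paper: the paper likewise defines a $(d,1,G)$-well-conditioned (Auerbach) basis $U$ with $\sum_{i=1}^d\|U_i\|_G\le d$ and $\|x\|_\infty\le\|Ux\|_G$, applies Theorem~\ref{thm:no_dilation_for_each} with per-vector failure probability $0.01/d$ and a union bound to get $\|D^{-1}U_i\|_G\le O(\alpha_G d\log n)\|U_i\|_G$ for all $i$, and then uses the triangle inequality plus the coefficient bound to pick up the extra factor of $d$. Your normalization $\|U_j\|_G=1$ versus the paper's $\sum_i\|U_i\|_G\le d$ is an immaterial difference, and your observation that independence of the diagonal entries is not needed here matches the paper's own remark.
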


\vspace{-0.1in}
\subsection{Contraction bound}\label{sec:no_contraction}

As in Section~\ref{sec:no_dilation}, we construct a
randomized linear map $f:\mathbb{R}^n \rightarrow \mathbb{R}^n$: $
(x_1, x_2, ..., x_n) \xmapsto{f} \left( x_1/u_1, x_2/u_2, ..., x_n/u_n
\right) $ where each $u_i$ is an i.i.d. random variable drawn from a
distribution with CDF $1-e^{-G(t)}$. Notice that the difference from
proving the dilation bound is that we need $u_i$ to be independent
here. We use the following theorem:

\begin{theorem}[Lemma 3.1 of~\cite{anrw16}]\label{thm:no_contraction_for_each}
Given $x\in\mathbb{R}^n$, let $\|\cdot\|_G$ be an Orlicz norm induced by function $G(\cdot)$ which has property $\mathcal{P}$, and let
$
f(x)=\left( x_1/u_1, x_2/u_2, ..., x_n/u_n \right),
$
where each $u_i$ is an i.i.d random variable drawn from a distribution with CDF $1-e^{-G(t)}$. Then for $\alpha\geq 1,$ with probability at least $1-e^{-\alpha},$
$
\|f(x)\|_{\infty}\geq \|x\|_G/\alpha.
$
\end{theorem}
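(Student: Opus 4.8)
The plan is to bound directly the probability of the complementary event $\{\|f(x)\|_\infty < \|x\|_G/\alpha\}$ and show it is at most $e^{-\alpha}$. Write $\beta = \|x\|_G/\alpha$. Since $\|f(x)\|_\infty = \max_i |x_i|/u_i$, the event $\|f(x)\|_\infty < \beta$ holds exactly when $u_i > |x_i|/\beta$ for every $i\in[n]$. Because the $u_i$ are independent, this factorizes, and using the tail formula $\Pr[u_i > t] = e^{-G(t)}$ implied by the CDF $1-e^{-G(t)}$, I obtain
\[
\Pr\bigl[\|f(x)\|_\infty < \beta\bigr] = \prod_{i=1}^n \Pr\bigl[u_i > |x_i|/\beta\bigr] = \exp\left(-\sum_{i=1}^n G(|x_i|/\beta)\right).
\]

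So it remains to show $\sum_i G(|x_i|/\beta)\geq \alpha$. Substituting $\beta = \|x\|_G/\alpha$ gives $|x_i|/\beta = \alpha\cdot |x_i|/\|x\|_G$, so the target sum is $\sum_i G(\alpha\cdot |x_i|/\|x\|_G)$. The crux is a superlinearity property: since $G$ is convex with $G(0)=0$, the map $t\mapsto G(t)/t$ is nondecreasing, hence $G(\alpha t)\geq \alpha G(t)$ for all $\alpha\geq 1$ and $t\geq 0$. Applying this coordinatewise yields
\[
\sum_{i=1}^n G(\alpha |x_i|/\|x\|_G) \geq \alpha \sum_{i=1}^n G(|x_i|/\|x\|_G).
\]

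Finally I would invoke the definition of the Orlicz norm to certify $\sum_i G(|x_i|/\|x\|_G)=1$: since $G$ is continuous (convexity forces continuity on $(0,\infty)$, and $G(0)=0$ handles the origin) and nondecreasing, the scalar map $\gamma\mapsto \sum_i G(|x_i|/\gamma)$ is continuous and nonincreasing, so at the infimum $\gamma=\|x\|_G$ it equals exactly $1$. Chaining the three displays gives $\Pr[\|f(x)\|_\infty<\beta]\leq e^{-\alpha}$, as desired. I expect the only genuinely substantive step to be the superlinearity inequality $G(\alpha t)\geq \alpha G(t)$; everything else is bookkeeping — the independence factorization and the exact normalization on the Orlicz unit ball. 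Notably, the argument relies on independence in an essential way: the factorization of the product is precisely what converts the anti-concentration of a maximum into a clean exponential tail, which matches the remark in the text that, unlike the dilation bound, the contraction bound genuinely needs the $u_i$ to be independent.
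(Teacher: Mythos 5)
Your proof is correct. Note, however, that the paper never proves this statement itself: it is imported wholesale as Lemma 3.1 of \cite{anrw16}, so there is no in-paper argument to compare against. Your derivation --- factorizing $\Pr[\|f(x)\|_\infty < \|x\|_G/\alpha]$ over coordinates by independence, using the tail formula $\Pr[u_i > t] = e^{-G(t)}$, and then combining the superlinearity $G(\alpha t) \ge \alpha G(t)$ (which is exactly Lemma~\ref{lem:fasterthanlinear} of the paper) with the exact normalization $\sum_{i} G(|x_i|/\|x\|_G) = 1$ --- is the standard proof of that lemma and is essentially how the cited source argues. Two small points are worth making explicit: (i) the case $x=0$ must be dispatched separately, since your $\beta$ vanishes there and the division by $\beta$ is undefined (the claim is then trivial, as $\|f(x)\|_\infty \ge 0 = \|x\|_G/\alpha$ almost surely); (ii) the equality $\sum_i G(|x_i|/\|x\|_G)=1$ at the infimum needs not only continuity and monotonicity of $\gamma \mapsto \sum_i G(|x_i|/\gamma)$ but also that this map exceeds $1$ for small enough $\gamma$, so that the infimum is positive and the left-continuity contradiction argument applies --- this is guaranteed because property $\mathcal{P}$ makes $G$ linear with slope $s>0$ beyond $1$, hence the map tends to $\infty$ as $\gamma \to 0^+$ whenever $x \neq 0$. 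Both gaps are immediate to fill under property $\mathcal{P}$, and the proof stands.
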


By combining the result with the net argument (see appendix), and
Theorems~\ref{thm:no_contraction_for_each},~\ref{thm:no_dilation_for_all},
we get the following:

\begin{theorem}\label{thm:no_contraction_for_all}
 $G(\cdot)$ is a function with property $\mathcal{P}.$ Given a matrix $A\in\mathbb{R}^{n\times m}$ with rank $d\leq n$, let $D\in\mathbb{R}^{n\times n}$ be a diagonal matrix of which each entry on the diagonal is an i.i.d. random variable drawn from the distribution with CDF $1-e^{-G(t)}.$ Then, with probability at least $0.98,$
$
\forall x\in\mathbb{R}^m, \Omega(1/(\alpha'_G d\log n))\|Ax\|_G\leq \|D^{-1}Ax\|_\infty,
$
where $\alpha'_G\geq 1$ is a constant which only depends on $G(\cdot)$.
\end{theorem}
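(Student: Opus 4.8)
The plan is to run the standard net (covering) argument that upgrades the single-vector guarantee of Theorem~\ref{thm:no_contraction_for_each} to a uniform guarantee over the whole $d$-dimensional column space of $A$, using the uniform dilation bound of Theorem~\ref{thm:no_dilation_for_all} to control the off-net error. By positive homogeneity of both $\|\cdot\|_\infty$ and $\|\cdot\|_G$, it suffices to establish the lower bound for every $y$ in the column space of $A$ with $\|y\|_G=1$; note this column space is a $d$-dimensional normed space under $\|\cdot\|_G$ since $A$ has rank $d$.

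First I would fix a small $\gamma>0$ (to be chosen) and take a $\gamma$-net $N$ of the unit $\|\cdot\|_G$-sphere of the column space. Since this sphere lives in a $d$-dimensional subspace, a standard volumetric estimate gives $|N|\leq (3/\gamma)^d$. I would then apply Theorem~\ref{thm:no_contraction_for_each} to each $y\in N$ with parameter $\alpha=\Theta(d\log(1/\gamma))$, so that each point fails to satisfy $\|D^{-1}y\|_\infty\geq \|y\|_G/\alpha=1/\alpha$ with probability at most $e^{-\alpha}$. A union bound over the $|N|\leq(3/\gamma)^d$ net points makes the total failure probability at most $(3/\gamma)^d e^{-\alpha}\leq 0.01$, so with probability at least $0.99$ every net point $y$ satisfies $\|D^{-1}y\|_\infty\geq 1/\alpha$.

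Next I would extend this to an arbitrary $y$ with $\|y\|_G=1$. Picking the nearest net point $y_0$ gives $\|y-y_0\|_G\leq\gamma$, and by the triangle inequality for $\|\cdot\|_\infty$,
\[
\|D^{-1}y\|_\infty \geq \|D^{-1}y_0\|_\infty - \|D^{-1}(y-y_0)\|_\infty \geq \tfrac{1}{\alpha} - \|D^{-1}(y-y_0)\|_\infty.
\]
To bound the error term I would use Lemma~\ref{lem:sandwich} in the form $\|v\|_\infty\leq\|v\|_2\leq\sqrt{C_G}\,\|v\|_G$, followed by the uniform dilation bound of Theorem~\ref{thm:no_dilation_for_all} applied to $y-y_0$, which also lies in the column space of $A$:
\[
\|D^{-1}(y-y_0)\|_\infty \leq \sqrt{C_G}\,\|D^{-1}(y-y_0)\|_G \leq O(\sqrt{C_G}\,\alpha_G d^2\log n)\,\gamma.
\]
Choosing $\gamma=\Theta\!\big(1/(\sqrt{C_G}\,\alpha_G d^3\log^2 n)\big)$ makes this error at most $1/(2\alpha)$, so $\|D^{-1}y\|_\infty\geq 1/(2\alpha)=\Omega(1/(d\log n))$, which after absorbing the $G$-dependent constants into $\alpha'_G$ yields the claim for every $y=Ax$. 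Since the net event and the dilation event each hold with probability at least $0.99$, a final union bound gives the stated $0.98$.

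The step I expect to be the main obstacle is verifying that the parameters are mutually consistent and non-circular: the union bound forces $\alpha\gtrsim d\log(1/\gamma)$, while controlling the off-net error forces $\gamma$ to be inverse-polynomial in $d$, $\log n$, and the constants $C_G,\alpha_G$. The point is that $\log(1/\gamma)=O(\log n)$ (using $d\leq n$ and that $C_G,\alpha_G$ are constants), so $\alpha=O(d\log n)$ is consistent with the choice of $\gamma$ and no circular dependence arises. This interplay, together with invoking the sandwich inequality in the correct direction so that $\|\cdot\|_\infty$ is coupled to $\|\cdot\|_G$ and the dilation bound can be applied to $y-y_0$, is the crux of the argument.
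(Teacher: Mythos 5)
Your proposal is correct and follows essentially the same route as the paper's own proof: a net over the unit sphere of the column space, a union bound applying the single-vector contraction bound (Theorem~\ref{thm:no_contraction_for_each}) with $\alpha = \Theta(d\log n)$ to the net points, and the uniform dilation bound (Theorem~\ref{thm:no_dilation_for_all}) together with Lemma~\ref{lem:sandwich} to control the off-net error, ending with the same $0.99+0.99 \to 0.98$ union bound. The only minor difference is that you build the net directly in the $\|\cdot\|_G$ metric via a volumetric estimate, whereas the paper nets the $\ell_2$ unit sphere (citing the $\ell_2$ net bound) and pays an extra $\sqrt{n}$ factor when translating $\ell_2$ distance to $G$-norm distance through $\|z\|_G \leq \|z\|_1 \leq \sqrt{n}\|z\|_2$, which forces $\varepsilon = O(1/(\sqrt{n}\,C_G t_1 t_2))$; both choices give $\log(1/\gamma) = O(\log n)$ and hence the same final distortion.
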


\textbf{Proof sketch:} Set $\varepsilon=1/\mathrm{poly}(nd),$ we can build an $\varepsilon$-net (see Appendix) $N$ for the column space of $A$. By taking the union bound over all the net points, we have $\forall x\in N,$ $\|D^{-1}x\|_\infty$ is not too small. Due to Theorem~\ref{thm:no_dilation_for_all}, we have $\forall x$ in the column space of $A$, $\|D^{-1}x\|_G$ is not too large. Now, for any unit vector $y$ in the column space of $A$, we can find the closest point $x\in N,$ and $\|x-y\|_2\leq \varepsilon.$ Since $\|D^{-1}y\|_\infty\geq \|D^{-1}x\|_\infty-\|D^{-1}(y-x)\|_\infty,$ $\|D^{-1}x\|_\infty$ is not too small, and $\|D^{-1}(y-x)\|_\infty$ is not too large, we can get a lower bound for $\|D^{-1}y\|_\infty.$ See appendix for details.

\subsection{Putting it all together}\label{sec:all_together}

We now combine Theorem~\ref{thm:no_contraction_for_all},
Theorem~\ref{thm:no_dilation_for_all}, and  Lemma~\ref{lem:sandwich}, to get the following theorem.

\begin{theorem}\label{thm:sub_for_orlicz}
Let $G(\cdot)$ be a function which has property $\mathcal{P}.$ Given a matrix $A\in\mathbb{R}^{n\times m}$ with rank $d\leq n$, let $D\in\mathbb{R}^{n\times n}$ be a diagonal matrix of which each entry on the diagonal is an i.i.d. random variable drawn from the distribution with CDF $1-e^{-G(t)}.$ Then, with probability at least $0.98,$
$\forall x\in\mathbb{R}^m,
\Omega(1/(\alpha'_G d\log n))\|Ax\|_G \leq \|D^{-1}Ax\|_2
\leq O(\alpha''_Gd^2\log n)\|Ax\|_G,$
where $\alpha''_G,\alpha'_G\geq 1$ are two constants which only depend on $G(\cdot)$.
\end{theorem}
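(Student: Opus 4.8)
The plan is to treat the two one-sided bounds entirely separately and then glue them together using the elementary inequality $\|z\|_\infty \le \|z\|_2$ (valid for every $z\in\mathbb{R}^n$) together with the sandwich estimate $\|z\|_2 \le \sqrt{C_G}\,\|z\|_G$ supplied by Lemma~\ref{lem:sandwich}. Applied to the specific vector $z = D^{-1}Ax$, these two inequalities bracket $\|D^{-1}Ax\|_2$ between $\|D^{-1}Ax\|_\infty$ (which I will lower bound) and $\sqrt{C_G}\,\|D^{-1}Ax\|_G$ (which I will upper bound). No new randomness or structural argument is introduced at this step; everything rides on the two theorems already proved.

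For the upper bound I would invoke Theorem~\ref{thm:no_dilation_for_all}: on the event of probability at least $0.99$ on which the dilation bound holds for all $x$, we have $\|D^{-1}Ax\|_G \le O(\alpha_G d^2\log n)\,\|Ax\|_G$. Chaining this with $\|D^{-1}Ax\|_2 \le \sqrt{C_G}\,\|D^{-1}Ax\|_G$ gives $\|D^{-1}Ax\|_2 \le O(\sqrt{C_G}\,\alpha_G d^2\log n)\,\|Ax\|_G$, and I would set $\alpha''_G = \sqrt{C_G}\,\alpha_G$, absorbing the universal constant hidden in the $O(\cdot)$.

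For the lower bound I would invoke Theorem~\ref{thm:no_contraction_for_all}: on its event (probability at least $0.98$), the contraction bound $\|D^{-1}Ax\|_\infty \ge \Omega(1/(\alpha'_G d\log n))\,\|Ax\|_G$ holds for all $x$. Combining this with $\|D^{-1}Ax\|_2 \ge \|D^{-1}Ax\|_\infty$ delivers the claimed lower bound with the same constant $\alpha'_G$. The only bookkeeping point is the success probability: since the proof of Theorem~\ref{thm:no_contraction_for_all} already conditions on the dilation event of Theorem~\ref{thm:no_dilation_for_all}, the event on which both one-sided bounds hold simultaneously coincides with (or is contained in) the contraction event, and hence still has probability at least $0.98$; alternatively, a crude union bound over the two events yields probability at least $0.97$, and one can nudge the hidden constants in the two input theorems to restore $0.98$.

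I do not expect a genuine obstacle here, as the statement is essentially a clean corollary of the dilation and contraction bounds. The only places demanding care are checking that the $\ell_\infty$, $\ell_2$, and $\|\cdot\|_G$ inequalities point in the correct directions (the $\ell_2$ norm must lie below $\|\cdot\|_G$ up to the factor $\sqrt{C_G}$ and above the $\ell_\infty$ norm), and keeping the probability accounting honest so that both randomized guarantees may be asserted on a single high-probability event.
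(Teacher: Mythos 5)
Your proposal is correct and follows essentially the same route as the paper: the paper likewise chains Theorem~\ref{thm:no_contraction_for_all} with $\|D^{-1}Ax\|_\infty \le \|D^{-1}Ax\|_2$ for the lower bound, and Lemma~\ref{lem:sandwich} with Theorem~\ref{thm:no_dilation_for_all} for the upper bound. Your probability bookkeeping is also the right reading of the $0.98$ claim, since the event constructed in the proof of Theorem~\ref{thm:no_contraction_for_all} already includes the dilation event of Theorem~\ref{thm:no_dilation_for_all}.
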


The above theorem successfully embeds $\|\cdot\|_G$ into $\ell_2$
space. We now use $\ell_2$ subspace embedding to reduce the
dimension. The following two theorems provide efficient $\ell_2$
subspace embeddings.

\begin{theorem}[~\cite{cw13}]\label{thm:sparse_embedding}
Given matrix $A\in\mathbb{R}^{n\times m}$ with rank $d$. Let $t=\Theta(d^2/\varepsilon^2)$, $S=\Phi Y\in\mathbb{R}^{t\times n},$ where $Y\in\mathbb{R}^{n\times n}$ is a diagonal matrix with each diagonal entry independently uniformly chosen to be $\pm 1$, $\Phi\in\mathbb{R}^{t\times n}$ is a binary matrix with $\Phi_{h(i),i}=1,\forall i\in[n],$ and remaining entries $0$. Here $h:[n]\rightarrow [t]$ is a random hashing function such that for each $i\in[n],$ $h(i)$ is uniformly distributed in $[t].$ Then with probability at least $0.99,$
$
\forall x\in\mathbb{R}^m, (1-\varepsilon)\|Ax\|_2^2\leq \|SAx\|_2^2\leq (1+\varepsilon)\|Ax\|_2^2.
$
Furthermore, $SA$ can be computed in $\mathrm{nnz}(A)$ time.
\end{theorem}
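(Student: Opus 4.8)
The plan is to reduce the statement to a spectral bound on a small $d\times d$ matrix and then establish that bound by a second-moment computation over the randomness in $Y$ and $h$. First I would take a (thin) SVD $A = U\Sigma V^\top$ with $U\in\R^{n\times d}$ having orthonormal columns spanning the column space of $A$. Since every $Ax$ can be written as $Uy$ with $y=\Sigma V^\top x$ (and every such $y$ is realizable), and $\|Ax\|_2=\|y\|_2$, the desired two-sided bound for all $x$ is equivalent to $\|SUy\|_2^2 = (1\pm\varepsilon)\|y\|_2^2$ for all $y\in\R^d$. Writing $Q = U^\top S^\top S U - I_d$, this is in turn equivalent to the operator-norm bound $\|Q\|_2\le\varepsilon$, because $\|SUy\|_2^2 - \|y\|_2^2 = y^\top Q y$ and $|y^\top Q y|\le\|Q\|_2\|y\|_2^2$.

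Next I would pass from the operator norm to the Frobenius norm via $\|Q\|_2\le\|Q\|_F$, and aim to show $\E[\|Q\|_F^2]\le 2d^2/t$, so that $t=\Theta(d^2/\varepsilon^2)$ together with Markov's inequality yields $\|Q\|_F\le\varepsilon$ with probability at least $0.99$. To set up the moment computation, let $\sigma_i = Y_{ii}\in\{\pm1\}$. Expanding entrywise, the diagonal terms $i=i'$ reconstruct exactly $(U^\top U)_{j,j'}=\delta_{j,j'}$ and cancel the $I_d$, leaving $Q_{j,j'} = \sum_{i\ne i'} U_{i,j}U_{i',j'}\,\sigma_i\sigma_{i'}\,\mathbf{1}[h(i)=h(i')]$. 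In particular $\E[Q_{j,j'}]=0$ since $\E[\sigma_i\sigma_{i'}]=0$ for $i\ne i'$.

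The heart of the argument, and the step I expect to be the most delicate, is the second-moment bound. Squaring and taking expectations, the independence of $Y$ and $h$ lets me factor the $\sigma$- and $h$-expectations. Over the Rademacher signs, $\E[\sigma_i\sigma_{i'}\sigma_k\sigma_{k'}]$ survives only when $\{i,i'\}=\{k,k'\}$ as unordered pairs, giving exactly two surviving index pairings; over $h$, each contributes a collision probability $\Pr[h(i)=h(i')]=1/t$. Carrying out this bookkeeping gives $\E[\|Q\|_F^2] = \tfrac1t\sum_{i\ne i'}\big(\|U^i\|_2^2\|U^{i'}\|_2^2 + \langle U^i,U^{i'}\rangle^2\big)$, where $U^i$ denotes the $i$-th row of $U$. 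The first sum is at most $(\sum_i\|U^i\|_2^2)^2 = \|U\|_F^4 = d^2$, and the second is at most $\sum_{i,i'}\langle U^i,U^{i'}\rangle^2 = \|UU^\top\|_F^2 = \mathrm{tr}(U^\top U) = d$, using $U^\top U = I_d$. Hence $\E[\|Q\|_F^2]\le (d^2+d)/t\le 2d^2/t$, which completes the probabilistic guarantee after adjusting constants in $t$. Finally, for the runtime I would observe that $SA=\Phi Y A$, where applying $Y$ only flips signs of the nonzero entries of $A$ and $\Phi$ sends each signed nonzero entry to its single hashed bucket; thus $SA$ is formed in $O(\mathrm{nnz}(A))$ time.
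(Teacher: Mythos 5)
Your proposal is correct, but there is nothing in the paper to compare it against: Theorem~\ref{thm:sparse_embedding} is an imported result, cited to \cite{cw13} and used as a black box; the paper never proves it (its appendix proves only the paper's own theorems). What you have written is a valid self-contained proof, and it is the standard ``modern'' second-moment argument for CountSketch subspace embeddings (the one found in Woodruff's monograph \cite{w14} and in Nelson--Nguyen-style analyses), rather than the more involved original argument of \cite{cw13}: reduce via the thin SVD to showing $\|U^\top S^\top S U - I_d\|_2 \le \varepsilon$ for an orthonormal basis $U$, pass to the Frobenius norm, compute $\E\|Q\|_F^2 \le (d^2+d)/t$ by the Rademacher pairing argument with collision probability $1/t$, and finish with Markov; all of these steps check out, including the identities $\|U\|_F^2 = d$ and $\|UU^\top\|_F^2 = d$, and the $\mathrm{nnz}(A)$ runtime claim. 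The only point worth flagging is interpretive, not a gap: the theorem as stated only asserts that each $h(i)$ is \emph{marginally} uniform on $[t]$, while your computation uses $\Pr[h(i)=h(i')]=1/t$ for $i \ne i'$, i.e.\ pairwise independence of $h$ (and four-wise independence, or full independence, of the signs suffices for the pairing step); this is the intended reading in \cite{cw13}, where the hash function is fully random or drawn from a suitable limited-independence family, so your proof matches the theorem as it is actually meant.
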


\begin{theorem}[See e.g.~\cite{w14}]\label{thm:dense_embedding}
Given matrix $A\in\mathbb{R}^{n\times m}$ with rank $d$. Let $t=\Theta(d/\varepsilon^2)$, $S\in\mathbb{R}^{t\times n}$ be a random matrix of i.i.d. standard Gaussian variables scaled by $1/\sqrt{t}.$ Then with probability at least $0.99,$
$
\forall x\in\mathbb{R}^m, (1-\varepsilon)\|Ax\|_2^2\leq \|SAx\|_2^2\leq (1+\varepsilon)\|Ax\|_2^2.
$
\end{theorem}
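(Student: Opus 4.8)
The plan is to reduce the ``for all $x$'' statement to a single spectral bound on $S$ restricted to the $d$-dimensional column space of $A$, and then establish that bound by a net-plus-concentration argument. First I would fix an orthonormal basis $U\in\R^{n\times d}$ for the column space of $A$ (e.g.\ from a thin SVD $A=U\Sigma V^\top$), so that every vector $Ax$ can be written as $Uy$ with $y\in\R^d$ and $\|Ax\|_2=\|y\|_2$. The desired guarantee is then equivalent to $\bigl|\,\|SUy\|_2^2-\|y\|_2^2\,\bigr|\le\varepsilon\|y\|_2^2$ for all $y\in\R^d$, i.e.\ to the operator-norm bound $\|(SU)^\top(SU)-I_d\|_2\le\varepsilon$. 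The crucial simplification is that $SU$ is again a Gaussian matrix: since each row of $S$ is $N(0,I_n/t)$ and $U^\top U=I_d$, rotational invariance of the Gaussian gives that the rows of $SU$ are i.i.d.\ $N(0,I_d/t)$. Hence it suffices to prove the norm-preservation statement for a bare $t\times d$ Gaussian matrix in dimension $d$, independently of $A$ and $n$.

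Next I would dispatch the remaining quantifier by a net argument. By homogeneity it is enough to control $y$ on the unit sphere $S^{d-1}$, so I would take a $(1/4)$-net $N$ of $S^{d-1}$; a standard volume bound gives $|N|\le 9^d$. For each fixed unit $y\in N$, the quantity $t\,\|SUy\|_2^2$ is a sum of $t$ i.i.d.\ squared $N(0,1)$ variables, i.e.\ a $\chi^2_t$ random variable, and the chi-squared (Johnson--Lindenstrauss) concentration inequality yields $\Pr\bigl[\,|\,\|SUy\|_2^2-1\,|>\varepsilon/2\,\bigr]\le 2e^{-c\varepsilon^2 t}$ for an absolute constant $c>0$. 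A union bound over the $9^d$ net points fails with probability at most $2\cdot 9^d\,e^{-c\varepsilon^2 t}$, which is below $0.01$ once $t=\Theta(d/\varepsilon^2)$ with a sufficiently large hidden constant. I would then pass from the net back to the full sphere via the standard estimate that for a symmetric matrix $M$ and a $(1/4)$-net $N$ one has $\|M\|_2\le 2\max_{y\in N}|y^\top My|$; applying this with $M=(SU)^\top(SU)-I_d$, so that $y^\top My=\|SUy\|_2^2-1$ on unit $y$, converts the net-wise bound $\varepsilon/2$ into $\|M\|_2\le\varepsilon$, which is exactly the claim.

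The only genuinely quantitative step is matching the net cardinality against the Gaussian tail: the net contributes a factor $e^{O(d)}$ in the union bound while each tail contributes $e^{-\Theta(\varepsilon^2 t)}$, so the choice $t=\Theta(d/\varepsilon^2)$ is forced, and its constant must be taken large enough to dominate both the $\log|N|=O(d)$ term and the target failure probability $0.01$. I expect this bookkeeping, together with the chi-squared concentration inequality, to be the main (though entirely standard) technical content; the reduction through the orthonormal basis $U$, the rotational invariance giving that $SU$ is again Gaussian, and the net-to-sphere operator-norm estimate are all routine. Since the statement is quoted as a known result (see~\cite{w14}), a fully self-contained derivation is not strictly necessary, and this sketch already identifies every ingredient.
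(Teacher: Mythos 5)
Your proof is correct, and it is essentially the standard argument behind this result: reduce to an orthonormal basis $U$ so that $SU$ is again i.i.d.\ Gaussian by rotational invariance, apply chi-squared (JL) concentration to each point of a $(1/4)$-net of $S^{d-1}$, union bound over the $9^d$ net points, and pass back to the whole sphere via the operator-norm-from-net estimate. The paper does not prove this theorem at all --- it imports it as a known result from the cited survey \cite{w14} --- and your sketch matches the proof given there, so there is nothing to reconcile.
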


We conclude the full theorem for our subspace embedding:

\begin{theorem}\label{thm:full_subspace_embedding}
Let $G(\cdot)$ be a function which has property $\mathcal{P}.$ Given a matrix $A\in\mathbb{R}^{n\times d}$, $d\leq n$, let $D\in\mathbb{R}^{n\times n}$ be a diagonal matrix of which each entry on the diagonal is an i.i.d. random variable drawn from the distribution with CDF $1-e^{-G(t)}.$ Let $\Pi_1\in\mathbb{R}^{t_1\times n}$ be a sparse embedding matrix (see Theorem~\ref{thm:sparse_embedding}) and let $\Pi_2\in\mathbb{R}^{t_2\times t_1}$ be a random Gaussian matrix (see Theorem~\ref{thm:dense_embedding}) where $t_1=\Omega(d^2),t_2=\Omega(d).$ Then, with probability at least $0.9,$
$\forall x\in\mathbb{R}^d,
\Omega(1/(\alpha'_G d\log n))\|Ax\|_G \leq \|\Pi_2\Pi_1D^{-1}Ax\|_2
\leq O(\alpha''_Gd^2\log n)\|Ax\|_G,$
where $\alpha''_G,\alpha'_G\geq 1$ are two constants which only depend on $G(\cdot)$. Furthermore, $\Pi_2\Pi_1D^{-1}A$ can be computed in $\mathrm{nnz}(A)+\mathrm{poly}(d)$ time.
\end{theorem}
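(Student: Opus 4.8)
The plan is to obtain the final embedding $\Pi_2\Pi_1 D^{-1}$ by composing the Orlicz-to-$\ell_2$ embedding of Theorem~\ref{thm:sub_for_orlicz} with a two-stage $\ell_2$ subspace embedding that simultaneously achieves input-sparsity runtime and a final dimension of $\Theta(d)$. The starting observation is that since $D$ is invertible and $A$ has rank $d$, the matrix $D^{-1}A$ also has rank $d$, and Theorem~\ref{thm:sub_for_orlicz} already guarantees that, with probability at least $0.98$, $\|D^{-1}Ax\|_2$ sandwiches $\|Ax\|_G$ with the claimed distortion for all $x$ simultaneously. It therefore suffices to show that $\Pi_2\Pi_1$ acts as a constant-distortion $\ell_2$ subspace embedding on the $d$-dimensional column space of $D^{-1}A$, since a constant multiplicative factor is absorbed into the asymptotic bounds.

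I would fix $\varepsilon$ to be a constant, say $\varepsilon=1/2$. First I apply Theorem~\ref{thm:sparse_embedding} to $D^{-1}A$: with $t_1=\Theta(d^2/\varepsilon^2)=\Theta(d^2)$, the sparse embedding satisfies $(1-\varepsilon)\|D^{-1}Ax\|_2^2\le\|\Pi_1 D^{-1}Ax\|_2^2\le(1+\varepsilon)\|D^{-1}Ax\|_2^2$ for all $x$, with probability at least $0.99$. The image $\Pi_1 D^{-1}A$ is a $t_1\times d$ matrix whose column space is a fixed $d$-dimensional subspace of $\mathbb{R}^{t_1}$ once $D$ and $\Pi_1$ are fixed. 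Conditioning on this, I invoke Theorem~\ref{thm:dense_embedding} with $t_2=\Theta(d/\varepsilon^2)=\Theta(d)$: the independent Gaussian matrix $\Pi_2$ embeds this subspace with $(1\pm\varepsilon)$ distortion, with probability at least $0.99$. Composing the two square-rooted bounds gives $\|\Pi_2\Pi_1 D^{-1}Ax\|_2=\Theta(1)\cdot\|D^{-1}Ax\|_2$ for all $x$ at once, and chaining this with the Orlicz-to-$\ell_2$ sandwich of Theorem~\ref{thm:sub_for_orlicz} yields the two-sided bound in the statement, the constant $\ell_2$-embedding factors being folded into the $\Omega(\cdot)$ and $O(\cdot)$.

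For the probability, I take a union bound over the three events (the Orlicz embedding of Theorem~\ref{thm:sub_for_orlicz}, the success of $\Pi_1$, and the success of $\Pi_2$ on the conditioned subspace), giving total failure probability at most $0.02+0.01+0.01=0.04$, so the overall success probability is at least $0.9$. For the runtime, multiplying the diagonal $D^{-1}$ into $A$ costs $O(\mathrm{nnz}(A))$ and does not increase the number of nonzeros; computing $\Pi_1(D^{-1}A)$ also costs $O(\mathrm{nnz}(A))$ by Theorem~\ref{thm:sparse_embedding}; and the final dense product $\Pi_2(\Pi_1 D^{-1}A)$, of dimensions $t_2\times t_1$ times $t_1\times d$, costs $O(t_1 t_2 d)=O(d^4)=\mathrm{poly}(d)$. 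Summing yields $\mathrm{nnz}(A)+\mathrm{poly}(d)$.

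The only point requiring care — rather than a genuine obstacle — is the conditioning when applying $\Pi_2$: the subspace spanned by $\Pi_1 D^{-1}A$ is itself random, so I must exploit that $\Pi_2$ is drawn independently of $D$ and $\Pi_1$, fix the randomness of the latter two, and then apply the fixed-subspace guarantee of Theorem~\ref{thm:dense_embedding}. All the substantive work — the dilation and contraction bounds underlying Theorem~\ref{thm:sub_for_orlicz} — is already in place, so this final theorem is essentially an assembly step combining the Orlicz embedding with a standard composition of a fast sparse embedding and a dimension-optimal dense embedding.
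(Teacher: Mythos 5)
Your proposal is correct and follows essentially the same route as the paper's proof: chain Theorem~\ref{thm:sub_for_orlicz} with the sparse embedding of Theorem~\ref{thm:sparse_embedding} applied to $D^{-1}A$ and the Gaussian embedding of Theorem~\ref{thm:dense_embedding} applied to $\Pi_1 D^{-1}A$, each at constant distortion, then union bound. Your write-up is in fact more careful than the paper's two-sentence argument, notably in making explicit the independence/conditioning step for $\Pi_2$ and the runtime accounting.
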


\vspace{-0.2in}
\section{Applications}\label{sec:app}
\vspace{-0.1in}
In this section, we discuss regression problem with Orlicz norm error
measure, and low rank approximation problem with $\ell_p$ norm, which
is a special case of the Orlicz norms.
\vspace{-0.1in}
\subsection{Linear regression under Orlicz norm}
\vspace{-0.1in}
We give the definition of regression problem as follow.

\begin{definition}
Function $G(\cdot)$ has property $\mathcal{P}$. Given $A\in\mathbb{R}^{n\times d},b\in\mathbb{R}^n,$ the goal is to solve the following minimization problem $\min_{x\in\mathbb{R}^d} \|Ax-b\|_G.$
\end{definition}
 \begin{algorithm}[h!]\caption{\small Linear regression with Orlicz norm $\|\cdot\|_G$}\label{alg:Orlicz_regression}
 \small
\begin{algorithmic}[1]
\small
\STATE \textbf{Input:} $A\in\mathbb{R}^{n\times d},b\in\mathbb{R}^n.$
\STATE \textbf{Output:} $\hat{x}$.
\STATE Let $t_1=\Theta(d^2),t_2=\Theta(d).$
\STATE Let $\Pi_1\in\mathbb{R}^{t_1\times n}$ be a random sparse embedding matrix, $\Pi_2\in\mathbb{R}^{t_2\times t_1}$ be a random gaussian matrix, and $D\in\mathbb{R}^{n\times n}$ be a random diagonal matrix with each diagonal entry independently drawn from distribution whose CDF is $1-e^{-G(t)}$. (See Theorem~\ref{thm:full_subspace_embedding}.)
\STATE Compute $\hat{x}=(\Pi_2\Pi_1D^{-1}A)^\dagger \Pi_2\Pi_1D^{-1}b.$
\end{algorithmic}
\end{algorithm}

\begin{theorem}\label{thm:regression_main}
Let $G(\cdot)$ have property $\mathcal{P}$. Given $A\in\mathbb{R}^{n\times d},b\in\mathbb{R}^n,$ Algorithm~\ref{alg:Orlicz_regression} can output a solution $\hat{x}\in\mathbb{R}^d$ such that with probability at least $0.8$,
$
\|A\hat{x}-b\|_G\leq O(\beta_G d\log^2 n)\min_{x\in\mathbb{R}^d}\|Ax-b\|_G,
$
where $\beta_G$ is a constant which may depend on $G(\cdot).$ In addition, the running time of Algorithm~\ref{alg:Orlicz_regression} is $\mathrm{nnz}(A)+\mathrm{poly}(d).$
\end{theorem}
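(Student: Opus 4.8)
The plan is to follow the standard sketch-and-solve paradigm: Algorithm~\ref{alg:Orlicz_regression} applies the embedding $S=\Pi_2\Pi_1 D^{-1}$ to both $A$ and $b$ and returns the exact minimizer $\hat x=(SA)^\dagger Sb$ of the reduced $\ell_2$ problem $\min_x\|S(Ax-b)\|_2$. Writing $x^*=\argmin_x\|Ax-b\|_G$ and $\mathrm{OPT}=\|Ax^*-b\|_G$, the goal is to sandwich $\|A\hat x-b\|_G$ between $\mathrm{OPT}$ and the reduced objective value, using the contraction bound in one direction and a dilation bound in the other. The crucial observation (already flagged in the technical discussion) is that the two directions need not use the same dilation estimate: for the lower bound I need the for-all guarantee, whereas for the upper bound I only need to control the single, fixed residual vector $Ax^*-b$.

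Concretely, I would first note that the column space of $[A\ b]$ has dimension at most $d+1=O(d)$, so Theorem~\ref{thm:full_subspace_embedding} applies with $d$ replaced by $O(d)$. This gives, with probability at least $0.9$, the contraction bound $\|S(A\hat x-b)\|_2\ge\Omega(1/(\alpha'_G d\log n))\,\|A\hat x-b\|_G$, since $A\hat x-b$ lies in the column space of $[A\ b]$. Next, optimality of $\hat x$ for the sketched problem yields $\|S(A\hat x-b)\|_2\le\|S(Ax^*-b)\|_2$, because the pseudoinverse returns the $\ell_2$-projection of $Sb$ onto the column space of $SA$. It then remains to upper bound $\|S(Ax^*-b)\|_2$ by $O(\log n)\cdot\mathrm{OPT}$.

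For this upper bound I would chain three facts about the \emph{fixed} vector $Ax^*-b$. First, since $\Pi_2\Pi_1$ is an $\ell_2$ subspace embedding for the column space of $D^{-1}[A\ b]$, we get $\|S(Ax^*-b)\|_2\le O(1)\,\|D^{-1}(Ax^*-b)\|_2$. Second, Lemma~\ref{lem:sandwich} gives $\|D^{-1}(Ax^*-b)\|_2\le\sqrt{C_G}\,\|D^{-1}(Ax^*-b)\|_G$. Third---and this is where the factor-$d^2$ saving happens---apply Theorem~\ref{thm:no_dilation_for_each} to the single fixed vector $Ax^*-b$ with a constant $\delta$; since $D^{-1}(Ax^*-b)=f(Ax^*-b)$, this yields $\|D^{-1}(Ax^*-b)\|_G\le O(\alpha_G\delta^{-1}\log n)\,\mathrm{OPT}$ with probability $1-\delta-O(1/n^{19})$. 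The fixed-vector dilation costs only $O(\log n)$, rather than the $O(d^2\log n)$ of the for-all dilation (Theorem~\ref{thm:no_dilation_for_all}) that a black-box use of the subspace embedding would force; this is exactly what improves $O(d^3\log^2 n)$ to $O(d\log^2 n)$.

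Combining the displayed inequalities gives $\|A\hat x-b\|_G\le O(\alpha'_G\sqrt{C_G}\,d\log^2 n)\,\mathrm{OPT}$, i.e. the claimed $O(\beta_G d\log^2 n)$ factor with $\beta_G=O(\alpha'_G\sqrt{C_G})$. A union bound over the two failure events---Theorem~\ref{thm:full_subspace_embedding} failing (probability $\le 0.1$) and the fixed-vector dilation failing (probability $\le\delta+O(1/n^{19})$)---gives overall success probability at least $0.8$ for a small constant $\delta$ and $n$ large. For the running time, computing $SA$ and $Sb$ amounts to applying the sparse embedding to $[A\ b]$ in $\mathrm{nnz}(A)$ time followed by dense operations on $\mathrm{poly}(d)$-sized matrices, and the pseudoinverse of the $t_2\times d$ matrix $SA$ costs $\mathrm{poly}(d)$, for a total of $\mathrm{nnz}(A)+\mathrm{poly}(d)$. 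The main subtlety I expect to handle carefully is the asymmetric use of the randomness in $D$: the \emph{same} $D$ must simultaneously satisfy the for-all contraction bound and the fixed-vector dilation bound, so these must be combined by a union bound rather than treated independently, and I must confirm that Theorem~\ref{thm:no_dilation_for_each} indeed applies to the i.i.d.\ $D$ used here (it does, as it requires no independence assumption).
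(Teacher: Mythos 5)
Your proposal is correct and matches the paper's own proof essentially step for step: the same chain of inequalities (contraction from Theorem~\ref{thm:full_subspace_embedding} applied to the column space of $[A\ b]$, optimality of the sketched $\ell_2$ solution, the $\ell_2$ embedding property of $\Pi_2\Pi_1$, Lemma~\ref{lem:sandwich}, and the fixed-vector dilation bound of Theorem~\ref{thm:no_dilation_for_each} applied only to $Ax^*-b$), with the same union-bound accounting and the same runtime argument. The key saving you identify---using the fixed-vector dilation for $Ax^*-b$ instead of the for-all dilation---is precisely the observation the paper uses to get $O(d\log^2 n)$.
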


\textbf{Proof sketch:} Let $S=\Pi_2\Pi_1D^{-1}$ be the subspace embedding for column space of $[A\ b]$. Let $x^*=\arg\min_{x\in\mathbb{R}^d}\|Ax-b\|_G.$ Due to Theorem~\ref{thm:full_subspace_embedding}, $\|A\hat{x}-b\|_G$ is bounded by $O(d\log n)\|S(A\hat{x}-b)\|_2\leq O(d\log n)\|S(Ax^*-b)\|_2\leq O(d\log n)\|D^{-1}(Ax^*-b)\|_2.$ Due to Theorem~\ref{thm:no_dilation_for_each}, $\|D^{-1}(Ax^*-b)\|_2\leq O(1)\|D^{-1}(Ax^*-b)\|_G\leq O(\log n)\|Ax^*-b\|_G.$

\vspace{-0.1in}
\subsection{Regression with combined loss function}
\vspace{-0.1in}
In this section, we want to point out that our technique can be used on solving regression problem with more general cost function. Recall that the goal is to solve the minimization problem $\min_{x\in \mathbb{R}^d}\Vert Ax-b \Vert_G$. Now, we consider there are multiple goals, and we want to minimize a linear combination of the costs. Now we give the definition of regression problem with combined cost function.

\begin{definition}
Suppose function $G_1(\cdot), G_2(\cdot),..., G_k(\cdot)$ satisfies property $\mathcal{P}$. Given $A_1\in \mathbb{R}^{n_1\times d},A_2\in \mathbb{R}^{n_2 \times d},...,A_k\in \mathbb{R}^{n_k\times d}, b_1 \in \mathbb{R}^{n_1},b_2\in \mathbb{R}^{n_2},...,b_k\in \mathbb{R}^{n_k}$, the goal is to solve the following minimization problem $\min_{x\in\mathbb{R}^d} \sum_{i=1}^k \Vert A_i x-b_i\Vert_{G_i}$.
\end{definition}
\vspace{-0.1in}
The idea of solving this problem is that we can embed every term into $l_1$ space, and then merge them into one term. By the standard technique, there is a way to embed $l_2$ space to $l_1$ space. We show the embedding as below. For the completeness, we put the proof of this lemma to the appendix.

\begin{lemma}
\label{lem:l2tol1}
Let $Q\in \mathbb{R}^{t\times n}$ be a random matrix with each entry drawn uniformly from i.i.d. $\mathcal{N}(0, 1)$ Gaussian distribution. Let $B=(\sqrt{\pi/2}/t)\cdot Q$. If $t=\Omega(\epsilon^{-2}n\log(n\epsilon^{-1}))$, then with probability at least 0.98, $\forall x \in \mathbb{R}^n, \Vert Bx\Vert_1 \in ((1 - \epsilon)\Vert x \Vert_2,(1 + \epsilon)\Vert x \Vert_2)$.
\end{lemma}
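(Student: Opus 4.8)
The plan is to first pin down the distribution of $\|Bx\|_1$ for a single fixed unit vector $x$, then establish sharp concentration of $\|Bx\|_1$ around its mean, and finally upgrade this fixed-vector guarantee to a uniform-over-all-$x$ statement by a net argument over the sphere. Throughout, it suffices to treat unit vectors $x$ with $\|x\|_2 = 1$, since both $\|Bx\|_1$ and $\|x\|_2$ are homogeneous of degree one in $x$, so the general case follows by rescaling.

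First I would compute the law of $\|Bx\|_1$. Writing $q_1,\dots,q_t$ for the rows of $Q$, rotational invariance of the standard Gaussian gives that each inner product $\langle q_i, x\rangle$ is distributed as $\mathcal{N}(0,\|x\|_2^2)=\mathcal{N}(0,1)$, and these are independent across $i$. Hence $\|Bx\|_1 = (\sqrt{\pi/2}/t)\sum_{i=1}^t |g_i|$ with $g_1,\dots,g_t$ i.i.d. standard Gaussians. Since $\E[|g_i|] = \sqrt{2/\pi}$, we obtain $\E[\|Bx\|_1] = \|x\|_2 = 1$, which is precisely the reason for the scaling factor $\sqrt{\pi/2}/t$. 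Next, each $|g_i|$ is a half-normal variable, hence sub-Gaussian with bounded variance, so the normalized sum concentrates around $\sqrt{2/\pi}$: by a standard sub-Gaussian (Hoeffding/Bernstein) tail bound there is an absolute constant $c>0$ with
\[
\Pr\left[\left|\,\|Bx\|_1 - 1\,\right| > \epsilon/2\right] \leq 2\exp(-ct\epsilon^2).
\]
I state the fixed-vector bound with error $\epsilon/2$ so that the net argument can absorb the remaining $\epsilon/2$.

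To pass from one vector to all of $\mathbb{R}^n$, take a $\gamma$-net $N$ of the unit sphere $S^{n-1}$ with $\gamma = \Theta(\epsilon)$; standard volume bounds give $|N| \leq (3/\gamma)^n = \exp(O(n\log(1/\epsilon)))$. Applying the concentration bound to each $x\in N$ and taking a union bound, the hypothesis $t = \Omega(\epsilon^{-2} n \log(n\epsilon^{-1}))$ makes the total failure probability at most $|N|\cdot 2\exp(-ct\epsilon^2) \leq 0.02$. Thus with probability at least $0.98$ we have $\big|\,\|Bx\|_1 - 1\,\big| \leq \epsilon/2$ simultaneously for all $x\in N$. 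To extend this to an arbitrary unit vector $y$, set $M = \sup_{\|z\|_2=1}\|Bz\|_1$ and let $x\in N$ be a nearest net point, so $\|y-x\|_2 \leq \gamma$. By the triangle inequality and homogeneity, $\|By\|_1 \leq \|Bx\|_1 + \|B(y-x)\|_1 \leq (1+\epsilon/2) + \gamma M$; taking the supremum over $y$ yields $M \leq (1+\epsilon/2)/(1-\gamma)$, which bounds $M$ by a constant close to $1$. Substituting back and using the lower bound $\|By\|_1 \geq \|Bx\|_1 - \|B(y-x)\|_1 \geq (1-\epsilon/2) - \gamma M$ gives, for $\gamma$ a small enough constant multiple of $\epsilon$, the two-sided estimate $\|By\|_1 \in (1-\epsilon,\,1+\epsilon)$.

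The main obstacle is this last extension step. Unlike $\ell_2$ subspace embeddings, where extending from the net to all vectors follows from a clean operator-norm bound, the $\ell_1$ target norm forces the self-referential bootstrap that bounds $M$ in terms of itself. Care is needed to verify $\gamma M \leq \epsilon/2$ once $M$ is pinned near $1$, and to track that the $\log n$ slack in the stated sample bound $t = \Omega(\epsilon^{-2} n \log(n\epsilon^{-1}))$ comfortably dominates both the net entropy $O(n\log(1/\epsilon))$ and the constant target failure probability $0.02$.
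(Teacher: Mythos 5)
Your proposal is correct, and the first half (rotational invariance reducing $\|Bx\|_1$ to a normalized sum of half-normal variables, sub-Gaussian concentration for a fixed unit vector, union bound over a net of the sphere) coincides with the paper's proof, which invokes the same concentration bound as its Lemma~\ref{lem:half_normal_concentration}. Where you genuinely diverge is the extension from the net to all unit vectors. The paper takes an extremely fine net, of granularity $(\epsilon/(1000n))^{10}$ and hence entropy $O(n\log(n/\epsilon))$, and controls the residual term via $\|Bu\|_1\leq\sqrt{t}\,\|B\|_2\|u\|_2$ together with a crude Markov-based bound $\|B\|_2\leq\|B\|_F\leq 10\sqrt{tn}\cdot\sqrt{\pi/2}/t$ (its Lemma~\ref{lem:upperbound_spectrum}), which costs an extra high-probability event in the union bound and is precisely why the hypothesis on $t$ carries the $\log n$ factor. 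You instead use a coarse $\Theta(\epsilon)$-net and the standard self-referential bootstrap on $M=\sup_{\|z\|_2=1}\|Bz\|_1$: the triangle inequality gives $M\leq(1+\epsilon/2)+\gamma M$, pinning $M$ near $1$ without any auxiliary operator-norm estimate. Your route needs only net entropy $O(n\log(1/\epsilon))$, so it in fact proves the lemma under the weaker assumption $t=\Omega(\epsilon^{-2}(n\log(1/\epsilon)+1))$, showing the $\log n$ in the paper's hypothesis is an artifact of its proof rather than a necessity; the paper's route buys a more pedestrian extension step in which every inequality is a direct estimate, with no self-referencing supremum (finiteness of $M$, which your argument must implicitly invoke via compactness of the sphere, never arises). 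Both arguments are valid, and yours is quantitatively sharper.
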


 \begin{algorithm}[h!]\caption{\small Linear regression with combined loss functions}\label{alg:general_regression}
 \small
\begin{algorithmic}[1]
\small
\STATE \textbf{Input:} $A_1\in \mathbb{R}^{n_1\times d},A_2\in \mathbb{R}^{n_2 \times d},...,A_k\in \mathbb{R}^{n_k\times d}, b_1 \in \mathbb{R}^{n_1},b_2\in \mathbb{R}^{n_2},...,b_k\in \mathbb{R}^{n_k}$
\STATE \textbf{Output:} $\hat{x}$.
\STATE Let $t_1 = \Theta(d^2), t_2 = \Theta(d), t_3 = \Theta(t_2 \log(t_2))$.
\STATE Let $\Pi_1^{(1)}\in \mathbb{R}^{t_1\times n_1},\cdots \Pi_1^{(k)}\in \mathbb{R}^{t_1\times n_k}$ be $k$ random sparse embedding matrices, $\Pi_2^{(1)},\cdots,\Pi_2^{(k)}\in \mathbb{R}^{t_2\times t_1}$ be $k$ random Gaussian matrices, and $D^{(1)}\in\mathbb{R}^{n_1\times n_1},\cdots,D^{(k)}\in\mathbb{R}^{n_k\times n_k}$ be $k$ random diagonal matrices where each diagonal entry of $D^{(i)}$ is independently drawn from distribution whose CDF is $1-e^{-G_i(t)}$. (See Theorem~\ref{thm:full_subspace_embedding}.) Let $Q^{(1)},\cdots, Q^{(k)} \in \mathbb{R}^{t_3\times t_2}$ be random matrices with each entry drawn uniformly from i.i.d. $\mathcal{N}(0, 1)$ Gaussian distribution. $\forall i\in[k],$ let $B^{(i)}=(\sqrt{\pi/2}/t_3)\cdot Q^{(i)}$ (see Lemma~\ref{lem:l2tol1}.) Let $B\in \mathbb{R}^{kt_3\times kt_2},\Pi_2\in\mathbb{R}^{kt_2\times kt_1},\Pi_1\in\mathbb{R}^{kt_1\times \sum_{j=1}^k n_j},D\in\mathbb{R}^{\sum_{j=1}^k n_j\times \sum_{j=1}^k n_j}$ be four block diagonal matrices such that $\forall i\in[k],$ the $i^{\text{th}}$ block of $B,\Pi_2,\Pi_1,D$ is $B^{(i)},\Pi_2^{(i)},\Pi_1^{(i)},D^{(i)}$ respectively.
\STATE Let $A = [A_1^{\top}, A_2^{\top},..., A_k^{\top}]^{\top}, b = [b_1^{\top}, b_2^{\top}, ..., b_k^{\top}]^{\top}$ and $S = B\Pi_2 \Pi_1 D^{-1}$.
\STATE Use classical method of solving $l_1$ regression to get $\hat{x} = \argmin_{x\in \mathbb{R}^d} \Vert S(Ax-b)\Vert_1$.
\end{algorithmic}
\end{algorithm}

\begin{theorem}
\label{general_regression}
Let $k>0$ be a constant, and $G_1(\cdot), G_2(\cdot),..., G_k(\cdot)$ satisfy property $\mathcal{P}$. Given $A_1\in \mathbb{R}^{n_1\times d},A_2\in \mathbb{R}^{n_2 \times d},...,A_k\in \mathbb{R}^{n_k\times d}, b_1 \in \mathbb{R}^{n_1},b_2\in \mathbb{R}^{n_2},...,b_k\in \mathbb{R}^{n_k}$, Algorithm \ref{alg:general_regression} can output a solution $\hat{x}\in \mathbb{R}^d$ such that with probability at least 0.7, $\sum_{i=1}^k \Vert A_i\hat{x}-b_i\Vert_{G_i} \leq O(\beta'_G d\log^2 n)\min_{x\in\mathbb{R}^d} \sum_{i=1}^k \Vert A_i x-b_i\Vert_{G_i}$, where $\beta'_G$ is a constant which may depend on $G_1(\cdot), G_2(\cdot),..., G_k(\cdot)$. In addition, the running time of Algorithm \ref{alg:general_regression} is $\sum_{i=1}^k \mathrm{nnz}(A_i) + \mathrm{poly}(d)$.
\end{theorem}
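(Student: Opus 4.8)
The plan is to exploit the block-diagonal structure of $S=B\Pi_2\Pi_1 D^{-1}$ to reduce the combined objective to a single $\ell_1$ regression, and then run the same asymmetric contraction/dilation analysis as in Theorem~\ref{thm:regression_main}, one block at a time. First I would record the key structural fact: since $B,\Pi_2,\Pi_1,D$ are all block diagonal, so is $S$, with $i$-th block $S^{(i)}=B^{(i)}\Pi_2^{(i)}\Pi_1^{(i)}(D^{(i)})^{-1}$. Because $A$ and $b$ are the vertical stackings of the $A_i$ and $b_i$, the residual $Ax-b$ stacks the per-block residuals $A_ix-b_i$, and the $\ell_1$ norm of a stacked vector is additive across blocks. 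Hence for every $x$, $\|S(Ax-b)\|_1=\sum_{i=1}^k\|S^{(i)}(A_ix-b_i)\|_1$, and the algorithm's output $\hat{x}$ is exactly the minimizer of this sum. This additivity across blocks is precisely why the algorithm routes each term through an $\ell_2\to\ell_1$ embedding ($B^{(i)}$, Lemma~\ref{lem:l2tol1}) rather than leaving it in $\ell_2$: only the $\ell_1$ norm decomposes additively to match the $\sum_i$ in the objective.

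Next I would establish a two-sided per-block guarantee by composing the Orlicz-to-$\ell_2$ embedding $T^{(i)}:=\Pi_2^{(i)}\Pi_1^{(i)}(D^{(i)})^{-1}$ of Theorem~\ref{thm:full_subspace_embedding} (applied to the column space of $[A_i\ b_i]$, which contains every residual $A_ix-b_i$) with the $\ell_2\to\ell_1$ embedding $B^{(i)}$ of Lemma~\ref{lem:l2tol1} at a constant $\epsilon$. The crucial point, inherited from the proof of Theorem~\ref{thm:regression_main}, is that the two directions need different strengths. For the contraction (lower) bound I need it to hold simultaneously for all $x$---because $\hat{x}$ is data-dependent---so I use the all-$x$ guarantee of Theorem~\ref{thm:full_subspace_embedding}, giving $\|A_ix-b_i\|_{G_i}\le O(\alpha'_{G_i}d\log n)\|S^{(i)}(A_ix-b_i)\|_1$ for every $x$. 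For the dilation (upper) bound I only need it at the single fixed optimal vector $A_ix^*-b_i$, so I invoke the sharper per-vector bound Theorem~\ref{thm:no_dilation_for_each} (distortion $O(\log n)$ instead of $O(d^2\log n)$) together with the $\ell_2$-embedding property of the $\Pi$'s and Lemma~\ref{lem:sandwich}, yielding $\|S^{(i)}(A_ix^*-b_i)\|_1\le O(\alpha_{G_i}\log n)\|A_ix^*-b_i\|_{G_i}$.

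The main inequality then chains these together. Summing the contraction bound over blocks and evaluating at $\hat{x}$ gives $\sum_i\|A_i\hat{x}-b_i\|_{G_i}\le O(d\log n)\,\|S(A\hat{x}-b)\|_1$. Optimality of $\hat{x}$ for the $\ell_1$ objective gives $\|S(A\hat{x}-b)\|_1\le\|S(Ax^*-b)\|_1=\sum_i\|S^{(i)}(A_ix^*-b_i)\|_1$, and summing the per-vector dilation bound over the $k$ blocks gives $\sum_i\|S^{(i)}(A_ix^*-b_i)\|_1\le O(\log n)\sum_i\|A_ix^*-b_i\|_{G_i}=O(\log n)\cdot\mathrm{OPT}$. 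Multiplying the three estimates yields $\sum_i\|A_i\hat{x}-b_i\|_{G_i}\le O(d\log^2 n)\cdot\mathrm{OPT}$, with the $G_i$-dependent constants absorbed into a single $\beta'_G$ (taking, e.g., the maximum over the constantly many $i$). Because $k$ is constant, I can union-bound the constant-probability success events of all $k$ copies of Theorem~\ref{thm:full_subspace_embedding}, Lemma~\ref{lem:l2tol1}, the $\ell_2$ embeddings, and the $k$ instances of Theorem~\ref{thm:no_dilation_for_each} (after pushing each failure probability below a small constant) to retain overall probability at least $0.7$. The running time is $\sum_i\mathrm{nnz}(A_i)+\mathrm{poly}(d)$: forming each $S^{(i)}(A_i x-b_i)$ costs $\mathrm{nnz}(A_i)+\mathrm{poly}(d)$ by Theorem~\ref{thm:full_subspace_embedding} plus a $\mathrm{poly}(d)$-dimensional $\ell_2\to\ell_1$ step, and the final $\ell_1$ regression is on a matrix with $kt_3=\mathrm{poly}(d)$ rows.

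I expect the main obstacle to be bookkeeping the asymmetric analysis correctly---specifically, arguing that the $\ell_1$-optimality of $\hat{x}$ lets the dilation requirement collapse from ``all $x$'' to the single vector $A_ix^*-b_i$ in each block, which is what keeps the approximation factor at $O(d\log^2 n)$ rather than the $O(d^3\log^2 n)$ one would get from naively applying the all-$x$ dilation bound of Theorem~\ref{thm:no_dilation_for_all}. A secondary point to verify carefully is that every residual $A_ix-b_i$ genuinely lies in the column space of $[A_i\ b_i]$ so that the subspace-embedding guarantees apply, and that stacking preserves both the additivity of $\|\cdot\|_1$ and the block-diagonal product structure of $S$.
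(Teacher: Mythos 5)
Your proposal is correct and follows essentially the same route as the paper's proof: the block-diagonal structure giving $\ell_1$-additivity across blocks, the asymmetric analysis (all-$x$ contraction via Theorem~\ref{thm:full_subspace_embedding} composed with Lemma~\ref{lem:l2tol1}, but only the per-vector dilation bound of Theorem~\ref{thm:no_dilation_for_each} at $x^*$), the chaining through $\ell_1$-optimality of $\hat{x}$, and the union bound over the constantly many blocks. Your identification of the per-vector dilation bound as the step that keeps the factor at $O(d\log^2 n)$ is exactly the mechanism the paper uses.
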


\textbf{Proof Sketch:}
Let $A = [A_1^{\top}, A_2^{\top},..., A_k^{\top}]^{\top}, b = [b_1^{\top}, b_2^{\top}, ..., b_k^{\top}]^{\top}$, and $S=B\Pi_2 \Pi_1 D^{-1}$ be the subspace embedding for column space of $[A\ b]$. Let $S_i=B^{(i)}\Pi_2^{(i)}\Pi_1^{(i)}(D^{(i)})^{-1}.$
Notice that $\forall x, \Vert S(Ax-b)\Vert_1 = \sum_{i=1}^k \Vert S_i(A_i x-b_i)\Vert_1$. Let $x^*=\argmin_{x\in\mathbb{R}^d} \sum_{i=1}^k \Vert A_i x-b_i\Vert_{G_i}$. Due to Theorem \ref{thm:full_subspace_embedding} and Lemma \ref{lem:l2tol1}, $\sum_{i=1}^k \Vert A_i\hat{x}-b_i\Vert_{G_i}$ is bounded by $O(d\log n)\sum_{i=1}^k \Vert S_i(A_i\hat{x}-b_i)\Vert_1 = O(d\log n)\Vert S(A\hat{x}-b)\Vert_1 \leq O(d\log n) \Vert S(Ax^*-b)\Vert_1 = \sum_{i=1}^k \Vert S_i(A_i x^*-b_i)\Vert_1$.  Due to Theorem \ref{thm:full_subspace_embedding}, $\sum_{i=1}^k \Vert S_i(A_i x^*-b_i)\Vert_1 \leq O(\log n)\sum_{i=1}^k \Vert A_i x^*-b_i\Vert_{G_i}$.

One application of the above Theorem is to solve the LASSO (Least Absolute Shrinkage Sector Operator) regression. In LASSO regression problem, the goal is to minimize $\Vert Ax-b\Vert_2^2 + \lambda \Vert x\Vert_1$, where $\lambda$ is a parameter of regularizer. It is easy to show that it is equivalent to  minimize $\Vert Ax-b\Vert_2 + \lambda' \Vert x\Vert_1$ for some other parameter $\lambda'.$ When we look at $\Vert Ax-b\Vert_2 + \lambda' \Vert x\Vert_1,$ we can set $A_1=A,b_1=b,A_2=\lambda'I,b_2=0,G_1(\cdot)\equiv x^2,G_2(\cdot)\equiv x,$ then we are able to apply Theorem~\ref{general_regression} to give a good approximation. The merit of our algorithm is that it is very simple, and can be computed very fast.

\vspace{-0.1in}
\subsection{$\ell_p$ norm low rank approximation using exponential random variables}
\vspace{-0.1in}
We discuss a special case of Orlicz norm $\|\cdot\|_G$, $\ell_p$ norm, i.e. $G(x)\equiv x^p$ for $p\in[1,2].$ When rank parameter $k$ is $\omega(\log n+\log d),$ by using exponential random variables, we can significantly improve the approximation ratio of input sparsity time algorithms shown by~\cite{swz17}. The high level ideas combine the results of~\cite{wz13,swz17} and the dilation bound in Section~\ref{sec:subspace_for_orlicz}. We define the problem in the following. See Appendix for the proof of Theorem~\ref{thm:lp_lowrank}.

\begin{definition}
Let $p\in[1,2].$ Given $A\in\mathbb{R}^{n\times d},n\geq d,k\in \mathbb{Z},1\leq k\leq \min(n,d),$ the goal is to solve the following minimization problem:
$
\min_{U\in\mathbb{R}^{n\times k},V\in\mathbb{R}^{k\times d}} \|UV-A\|_p^p.
$
\end{definition}

\vspace{-0.1in}
 \begin{algorithm}[h!]\caption{\small $\ell_p$ norm low rank approximation using exponential random variables.}\label{alg:lp_lowrank}
 \small
\begin{algorithmic}[1]
\STATE \textbf{Input:} $A\in\mathbb{R}^{n\times d},k\in \mathbb{Z},\min(n,d)\geq k\geq 1.$
\STATE \textbf{Output:} $\hat{U}\in\mathbb{R}^{n\times k},\hat{V}\in\mathbb{R}^{k\times d}$.
\STATE Let $t_1=\Theta(k^2),t_2=\Theta(k),t_3=\Theta(k\log k).$
\STATE Let $\Pi_1,S_1\in\mathbb{R}^{t_1\times n}$ be two random sparse embedding matrices, $\Pi_2,S_2\in\mathbb{R}^{t_2\times t_1}$ be two random gaussian matrices, and $D_1,D_2\in\mathbb{R}^{n\times n}$ be two random diagonal matrices with each diagonal entry independently drawn from distribution whose CDF is $1-e^{-t^p}$. (See Theorem~\ref{thm:full_subspace_embedding}.)
\STATE Let $T_2,R\in\mathbb{R}^{d\times t_3}$ be two random matrix, with i.i.d. entries drawn from standard $p$-stable distribution.
\STATE Let $S=S_2S_1D_1^{-1},T_1=\Pi_2\Pi_1D_2^{-1}$.
\STATE Solve $\hat{X},\hat{Y}=\arg\min_{X\in\mathbb{R}^{t_2\times k},Y\in\mathbb{R}^{k\times t_3}}\|T_1ARXYSAT_2-T_1AT_2\|_F^2.$
\STATE $\hat{U}=AR\hat{X},\hat{V}=\hat{Y}SA.$
\end{algorithmic}
\end{algorithm}
\vspace{-0.1in}

\begin{theorem}\label{thm:lp_lowrank}
Let $1\leq p\leq 2.$ Given $A\in\mathbb{R}^{n\times d},n\geq d,k\in \mathbb{Z},1\leq k\leq \min(n,d),$ with probability at least $2/3$, $\hat{U},\hat{V}$ outputted by Algorithm~\ref{alg:lp_lowrank} satisfies:
$
\|\hat{U}\hat{V}-A\|_p^p\leq \alpha\min_{U\in\mathbb{R}^{n\times k},V\in\mathbb{R}^{k\times d}} \|UV-A\|_p^p,
$
where $\alpha=O(\min((k\log k)^{4-p}\log^{2p+2}n, (k\log k)^{4-2p}\log^{4+p} n)).$
In addition, the running time of Algorithm~\ref{alg:lp_lowrank} is $\mathrm{nnz}(A)+(n+d)\mathrm{poly}(k).$
\end{theorem}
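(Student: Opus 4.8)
The plan is to follow the two-sided sketching framework of \cite{swz17,wz13}: first argue that a near-optimal rank-$k$ solution can be written in the ``sandwiched'' form $ARXYSA$, then reduce the resulting bilinear $\ell_p$ minimization to the tiny Frobenius-norm problem of line~7 of Algorithm~\ref{alg:lp_lowrank} by sketching on both sides, and finally solve that tiny problem in closed form. Throughout write $\mathrm{OPT}=\min_{U,V}\|UV-A\|_p^p$ with optimizer $U^*V^*$, and recall $G(t)=t^p$ so that $S$ and $T_1$ are the $\ell_p\!\to\!\ell_2$ embeddings of Theorem~\ref{thm:full_subspace_embedding}.

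The first stage is to show that the algorithm's search space contains a good solution: there exist $X^*\in\mathbb{R}^{t_3\times k}$ and $Y^*\in\mathbb{R}^{k\times t_2}$ with $\|ARX^*Y^*SA-A\|_p^p\le\mathrm{poly}(k\log k,\log n)\cdot\mathrm{OPT}$. This is established in two halves. On the right, the dense $p$-stable matrix $R$ with $t_3=\Theta(k\log k)$ columns approximately captures the $k$-dimensional column space of $U^*$, by the standard column-space-preservation property of $p$-stable sketches in \cite{wz13,swz17}, so $U^*$ may be replaced by $ARX^*$ at a $\mathrm{poly}(k\log k)$ cost. On the left, $S=S_2S_1D_1^{-1}$ captures the row space of $V^*$, letting us replace $V^*$ by $Y^*SA$. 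Combining the two replacements while controlling the interaction of the two errors (as in \cite{swz17}) yields the claimed bound on $\|ARX^*Y^*SA-A\|_p^p$.

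The crux is reducing $\min_{X,Y}\|ARXYSA-A\|_p^p$ to $\min_{X,Y}\|T_1ARXYSAT_2-T_1AT_2\|_F^2$. Crucially this is an affine, regression-type reduction rather than a full subspace embedding: the factor $AR$ has only $t_3=\Theta(k\log k)$ columns, so the subspace we fit in is the $O(k\log k)$-dimensional column space of $AR$, while the fixed matrix $A$ enters only as the affine part of the residual. Here $T_1=\Pi_2\Pi_1D_2^{-1}$ is the $\ell_p\!\to\!\ell_2$ Orlicz embedding of Theorem~\ref{thm:full_subspace_embedding} for this $O(k\log k)$-dimensional space, and $T_2$ is a $p$-stable regression sketch for the row space spanned by $SA$. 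I would prove a uniform no-contraction bound $\|ARWSA-A\|_p\le C_\ell\,\|T_1(ARWSA-A)T_2\|_F$ over all rank-$k$ $W=XY$, together with a no-dilation bound $\|T_1M^*T_2\|_F\le C_u\,\|M^*\|_p$ for the single good residual $M^*=ARX^*Y^*SA-A$, the latter coming directly from Theorem~\ref{thm:no_dilation_for_each} with $G(t)=t^p$. The obstacle, exactly the ``mixed norm'' issue flagged in the introduction, is that left-multiplication by $T_1$ converts each column's $\ell_p$ norm into its $\ell_2$ norm, so $\|T_1M\|_F^2\approx\sum_j\|M_j\|_p^2$, a $(2,p)$-mixed norm rather than the genuine objective $\sum_j\|M_j\|_p^p=\|M\|_p^p$; composing with the right $p$-stable $T_2$ and reconciling the two via Hölder's inequality in two different ways, each balancing the $\mathrm{poly}(k\log k)$ dimension factor against the $\log n$ factor from the dilation bound, produces the two expressions $(k\log k)^{4-p}\log^{2p+2}n$ and $(k\log k)^{4-2p}\log^{4+p}n$, of which we keep the smaller, giving $\alpha$.

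Finally I would solve the small problem and translate back. Writing $P=T_1AR$, $Q=SAT_2$, $C=T_1AT_2$, all of size $\mathrm{poly}(k)$, line~7 solves $\min_{\mathrm{rank}(Z)\le k}\|PZQ-C\|_F^2$ with $Z=\hat X\hat Y$, which admits a closed-form solution via projections and SVD in $\mathrm{poly}(k)$ time. The chain $\|AR\hat X\hat YSA-A\|_p\le C_\ell\|T_1(\cdot)T_2\|_F$ (no-contraction at $\hat X,\hat Y$), the optimality $\|T_1ARrX\hat YSAT_2-T_1AT_2\|_F\le\|T_1ARX^*Y^*SAT_2-T_1AT_2\|_F$, and the no-dilation bound $\le C_u\|M^*\|_p\le C_u\,\mathrm{poly}(k\log k,\log n)^{1/p}\,\mathrm{OPT}^{1/p}$ combine to give $\|\hat U\hat V-A\|_p^p\le\alpha\cdot\mathrm{OPT}$. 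For the running time, one applies the sparse embeddings $\Pi_1,S_1$ and the diagonals $D_1,D_2$ to $A$ first, costing $\mathrm{nnz}(A)$, after which every remaining matrix has $\mathrm{poly}(k)$ rows, so forming $P,Q,C$, performing the SVD, and computing the factored output all cost $(n+d)\mathrm{poly}(k)$. I expect the mixed-norm conversion of the third paragraph to be the main obstacle, since it is the only step that genuinely leaves the clean $\ell_p$-to-$\ell_2$ regime, and obtaining both Hölder bounds simultaneously, rather than a single cruder estimate, is precisely what yields the improved $\alpha$ that is the point of the theorem.
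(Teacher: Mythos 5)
Your overall architecture (a good sandwiched solution $ARX^*Y^*SA$ exists; the tiny Frobenius problem is solvable in closed form; runtime bookkeeping) matches the paper's, which proves existence via Lemmas~\ref{lem:zero_step} and~\ref{lem:first_step} and removes the sketches via Lemmas~\ref{lem:middle_step} and~\ref{lem:final_step}. But the step you yourself call the crux contains a genuine gap: the claimed \emph{uniform} no-contraction bound $\|ARWSA-A\|_p\leq C_\ell\,\|T_1(ARWSA-A)T_2\|_F$ over all rank-$k$ $W$, with $C_\ell=\mathrm{poly}(k\log k,\log n)$, cannot be proved with the available tools and is false in general. The residual $ARWSA-A$ contains the fixed affine part $-A$, whose columns do not lie in the $\Theta(k\log k)$-dimensional column space of $AR$; the no-contraction guarantees you would invoke (Lemma~\ref{lem:stronger_gaurantee}, or Theorem~\ref{thm:full_subspace_embedding}) hold only for vectors in a fixed low-dimensional subspace, and applying them to the span of $[AR\ A]$ makes the distortion polynomial in $\mathrm{rank}(A)$, not in $k$. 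A cleaner way to see the bound must fail: evaluate it at the sketched minimizer $\hat{W}$. It would give $\mathrm{OPT}^{1/p}\leq\|AR\hat{W}SA-A\|_p\leq C_\ell\cdot(\text{sketched minimum})$, i.e.\ the optimal value of the tiny $\Theta(k)\times\Theta(k\log k)$ fitting problem would always be at least $\mathrm{OPT}^{1/p}/C_\ell$. Nothing enforces such a lower bound: the dilation direction of these heavy-tailed sketches is Markov-based and holds only at a single fixed matrix (Theorem~\ref{thm:no_dilation_for_each}), so the sketch can collapse the affine part and make the sketched optimum arbitrarily small relative to $\mathrm{OPT}$, while the pulled-back cost can never drop below $\mathrm{OPT}$.

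The missing idea is the comparison-to-the-good-solution (approximation-transfer) argument, which is exactly how the paper's Lemmas~\ref{lem:first_step} and~\ref{lem:final_step} avoid any uniform statement about the affine family. There one splits, using convexity of $x^p$, $\|L\tilde{H}N-M\|_p^p\leq 2^{p-1}\|L(\tilde{H}-H^*)N\|_p^p+2^{p-1}\|LH^*N-M\|_p^p$ with $L=AR$, $N=SAT_2$, $M=AT_2$; the difference term $L(\tilde{H}-H^*)N$ \emph{does} lie columnwise in the column space of $L$, so the no-contraction lemma applies to it; then a triangle inequality inside the sketch, the approximate optimality of $\tilde{H}$ for the sketched objective, and the single-point dilation bound at $H^*$ (via Theorems~\ref{thm:sparse_embedding},~\ref{thm:dense_embedding} and~\ref{thm:no_dilation_for_each}) close the loop. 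The conclusion is of the form ``an $\alpha$-approximate sketched solution is an $\alpha\beta$-approximate true solution,'' which is strictly weaker than your uniform inequality but is all the final assembly needs. Two smaller corrections: the two expressions inside the $\min$ defining $\alpha$ do not come from applying H\"older's inequality in two ways, but from the two-term contraction factor $\min\{(k\log k)^{1/p},(k\log k\log n)^{1/p-1/2}\}$ of Lemma~\ref{lem:stronger_gaurantee}, which propagates through Lemmas~\ref{lem:first_step} and~\ref{lem:final_step}; and an extra factor $(k\log k)^{1-p/2}$ enters separately when relating the Frobenius objective actually minimized in line~7 to the mixed $(2,p)$ objective to which Lemma~\ref{lem:final_step} applies.
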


\vspace{-0.2in}
\section{Experiments}\label{sec:expe}
\vspace{-0.1in}
Implementation setups can be seen in appendix.
\vspace{-0.1in}
\subsection{Orlicz Norm Linear Regression}
\vspace{-0.1in}
In this section, we show that our algorithm i) has reasonable and predictable performance under different scenarios and ii) is flexible, general and easy to use. We perform 3 sets of experiments. The first is to compare its performance with the standard $\ell_1$ and $\ell_2$ regression under different noise assumptions and dimensions of the regression problem; the second is to compare the performance of Orlicz regression with different $G$ under different noise assumptions; the third is to experiment with Orlicz function $G$ that is different from standard $\ell_p$ and Huber function. We evaluate the performance of our Orlicz norm linear regression algorithm on simulated data.

\vspace{-0.1in}
 \textbf{Comparison with $\ell_1$ and $\ell_2$ regression} We would like to see whether Orlicz norm linear regression leads to expected performance relative to $\ell_1$ and $\ell_2$ regression. We choose our Orlicz norm $\|\cdot\|_G$ to be induced by the normalized Huber function where the Huber function is defined as 
 $f(x)=\left\{\begin{array}{ll}
 x^2/2 & |x| \leq \delta\\
 \delta \cdot ( |x| -  \delta/2) & \text{o.w.}\end{array}\right..$
 We chose the parameter $\delta$ to be $0.75$. Intuitively, it is between $\ell_1$ and $\ell_2$ norm (see Figure~\ref{fig:orlicz_norm_ball}). In all the simulations, we generate matrix $A\in\mathbb{R}^{n\times d}$, ground truth $x^*\in\mathbb{R}^d,$ and $b$ to be $Ax^*$ plus some particular noise. We evaluate the performance of each algorithm by the $\ell_2$ distances between the output $x$ and the ground truth $x^*$. In terms of algorithm details, since $n, d$ are not too large in our simulation, we did not apply the $\ell_2$ subspace embedding to reduce the dimension; we only use reciprocal exponential random diagonal embedding matrix to embed $\|\cdot\|_G$ to $\ell_2$ norm (see Theorem~\ref{thm:sub_for_orlicz})\footnote{We use MATLAB's \textit{linprog} to solve $\ell_1$ regression.}.

We experiment with two $n, d$ combinations, i) $n = 200, d = 10$ ii) $n=100, d=75$, and 3 noise setting with i) Gaussian noise ii) sparse noise and iii) mixed noise (addition of i) and ii)), altogether $2 \times 3 = 6$ setting. The detail of data simulation can be seen in appendix.
 For each experiment we repeat $50$ times and compute the mean. The
 results are shown in Table ~\ref{tab:control_experiments}. Orlicz
 norm regression has better performance than $\ell_1$ and $\ell_2$
 when the noise is mixed. When the noise is Gaussian or sparse, Orlicz
 norm regression works better than $\ell_1$ and $\ell_2$ respectively.
 We did not experiment with Huber loss regression, since if we rescale
 the data and make it small/large in absolute values, the Huber
 regression will degenerate into respectively $\ell_2$/$\ell_1$ regression (see Introduction). See appendix for results on approximation ratio.

\begin{table}
\vspace{-0.1in}
\small
\caption{Comparisons of different regressions in different noise and dimension settings; each entry is the error of $\ell_1, \ell_2$, Orlicz norm regression. As expected, $\ell_2$/$\ell_1$ regression lead to best performance under Sparse/Gaussian noise setting, and the performance of Orlicz norm regression lies in between.}
\label{tab:control_experiments}
\centering
\begin{tabular}{cccc}
\toprule
& Gaussian & Sparse & Mixed\\
\midrule
balance & \tiny{211.2/\textbf{194.5}/197.3} & \tiny{\textbf{25.3}/30.7/30.0} & \tiny{37.9/37.8/\textbf{37.5}}\\
overconstraint & \tiny{25.3/\textbf{20.0}/24.9} & \tiny{\textbf{2e-9}/1.6/1.5} &  \tiny{8.7/7.6/\textbf{7.5}}\\
\bottomrule
\end{tabular}
\vspace{-0.2in}
\end{table}

\textbf{Choice of $\delta$ for $G$ as a normalized Huber function} We compare the performance of Orlicz norm regression induced by $G$ as normalized Huber loss function with different $\delta$ under different noise assumptions. We fix $n = 500, d = 30$ and generate $A$ and $x$ as in the first set of experiments (see appendix). The noise is a mixture of $N(0,5)$ Gaussian noise and sparse noise on $1\%$ entries with different scale of uniform noise from $[-s\|Ax^*\|_2, s\|Ax^*\|_2]$, where scale $s$ is chosen from [0, 0.5, 1, 2]. Under each noise assumptions with different scale $s$, we compare the performance of Orlicz norm regression induced by $G$ with $\delta$ from [0.05, 0.1, 0.2, 0.4, 1, 2]. We repeat each experiment $50$ times and report the mean of the $\ell_2$ distance between output $x$ and the ground truth $x^*.$ The result is shown in Figure ~\ref{fig:best_delta}. When the scale is 0/2, the noise is almost Gaussian/sparse and we expect $\ell_2$/$\ell_1$ norm and thus large/small $\delta$ to perform the best; anything scale lying in between these extremes will have an optimal $\delta$ in between. We observe the expected trend: as $s$ increases, the performance is optimal with smaller $\delta$.

\begin{figure}[b]
\vspace{-0.2in}
  \centering
  \includegraphics[width=0.25\textwidth]{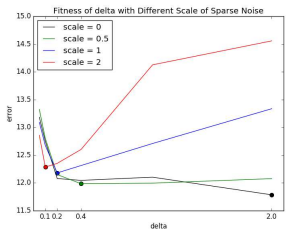}\\
  \vspace{-0.15in}
  \caption{\small Performance of Orlicz regression with $G$ induced by different $\delta$ under different scale of sparse noise. The larger the sparse noise, the smaller the $\delta$ that leads to the best performance, which makes the norm closer to $\ell_1$}\label{fig:best_delta}
  \vspace{-0.1in}
\end{figure}

\textbf{Beyond Huber function - A General Framework} We explore a variant Orlicz function $G$ and evaluate it under a particular setting; the evaluation criteria is the same as the first set of the experiments. The $G$ is of the same form aforementioned, except that it now grows at the order of $x^{1.5}$ when $x$ is small. We denote it by $G_{\ell_{1.5}}$, which is the normalization of function $f$, and $f$ is defined as: $f(x)=\left\{\begin{array}{ll}
x^{1.5}/1.5 & x \leq \delta\\
\delta^{0.5} \cdot( |x| - \delta/3)  & \text{o.w.}\end{array}\right.$. 
We generate a $500 \times 30$ matrix $A$ and the ground truth vector $x^*$ in the same way as before, and then add $N(0,5)$ Gaussian noises and $1$ sparse outlier with scale $s = 100$. We find that the modified $G_{\ell_{1.5}}$ under this settings outperforms $\ell_1, \ell_2, \ell_{1.5}, G_{\delta=0.25}, G_{\delta=0.75}$ regression by a significant amount where $G_{\delta=0.25},G_{\delta=0.75}$ are Orlicz norm induced by regular normalized Huber function with $\delta=0.25,0.75$ respectively. The results are shown in Table ~\ref{tab:l1.5}. This experiment demonstrates that our algorithm is i) flexible enough to combine the advantage of norm functions, ii) general for any function that satisfies the nice property, and iii) easy to experiment with different settings, as long as we can compute $G$ and $G^{-1}.$

\begin{table}
\vspace{-0.1in}
\tiny
\caption{Orlicz regression with different choices of $G$, mean of the $\ell_{2}$ distances between the output and the ground truth in $50$ repetitions of experiments.}
\label{tab:l1.5}
\centering
\begin{tabular}{cccccc}

\toprule
\tiny{$\ell_1$} & \tiny{$\ell_{1.5}$} & \tiny{$\ell_2$}&\tiny{$G_{\delta=0.25}$} & \tiny{$G_{\delta=0.75}$} & \tiny{$G_{\ell_{1.5}}$} \\
 \midrule
 \tiny{$17.0$}&\tiny{$45.0$}&\tiny{$909.8$}&\tiny{$60.2$}&\tiny{$405.7$}&\tiny{$\mathbf{14.7}$}\\
\bottomrule
\end{tabular}
\vspace{-0.25in}
\end{table}

\vspace{-0.1in}
\subsection{$\ell_1$ low rank matrix approximation}
\vspace{-0.1in}
In this section, we evaluate the performance of the $\ell_1$ low rank matrix approximation algorithm. We mainly compare the $\ell_1$ norm error of our algorithm with the error of ~\cite{swz17} and standard PCA. Inputs are a matrix $A\in\mathbb{R}^{n\times d}$ and a rank parameter $k$; the goal is to output a rank $k$ matrix $B$ such that $\|A-B\|_1$  is as small as possible. The details of implementations are in the appendix. For each input, we run the algorithm $50$ times and pick the best solution.

\textbf{Datasets.} We first run experiment on synthetic data: we
randomly choose two matrices $U\in\mathbb{R}^{2000\times
5},V\in\mathbb{R}^{5\times 2000}$ with each entry drawn uniformly from
$(0,1)$ Then we randomly choose $100$ entries of $UV$, and add random
outliers uniformly drawn from $(-100,100)$ on those entries, thus we
can get a matrix $A\in\mathbb{R}^{2000\times 2000}$. In our
experiment, $\|A\|_1$ is about $5.0\times 10^6.$
Then, we run experiments on real datasets \textit{diabetes} and \textit{glass} in UCI repository\cite{bache2013uci}. The data matrix of \textit{diabetes} has size $768\times 8$, and the data matrix of \textit{glass} has size $214\times 9$. For each data matrix, we randomly add outliers on $1\%$ number of entries.

For each dataset, we evaluate the $\|A-B\|_1.$ The result for the experiment on synthetic data is shown in Table~\ref{tab:syn_data}, and the results for \textit{diabetes} and \textit{glass} are shown in Figure~\ref{fig:ucidata}. The running time of algorithm in~\cite{swz17} on \textit{diabetes} and on \textit{glass} are $5.69$ and $11.97$ seconds respectively, with ours being $3.18$ and $3.74$ seconds respectively. We also find that our algorithm consistently outperforms the other two alternatives (the y-coordinates are at log scale with base 10).

\begin{figure}[h!]
\vspace{-0.1in}
\noindent\begin{tabular}{cc}
  \includegraphics[width=0.22\textwidth]{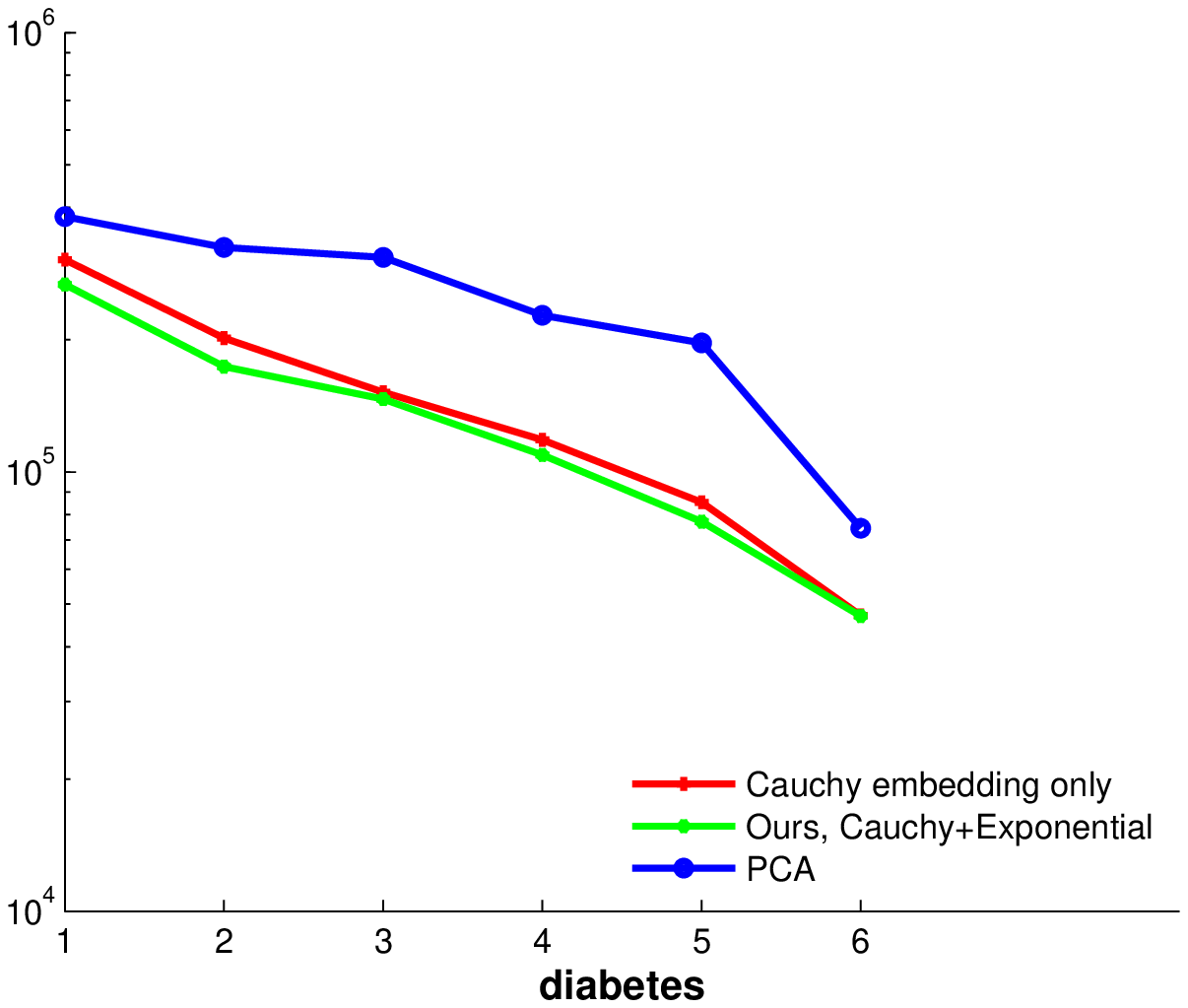}&
  \includegraphics[width=0.22\textwidth]{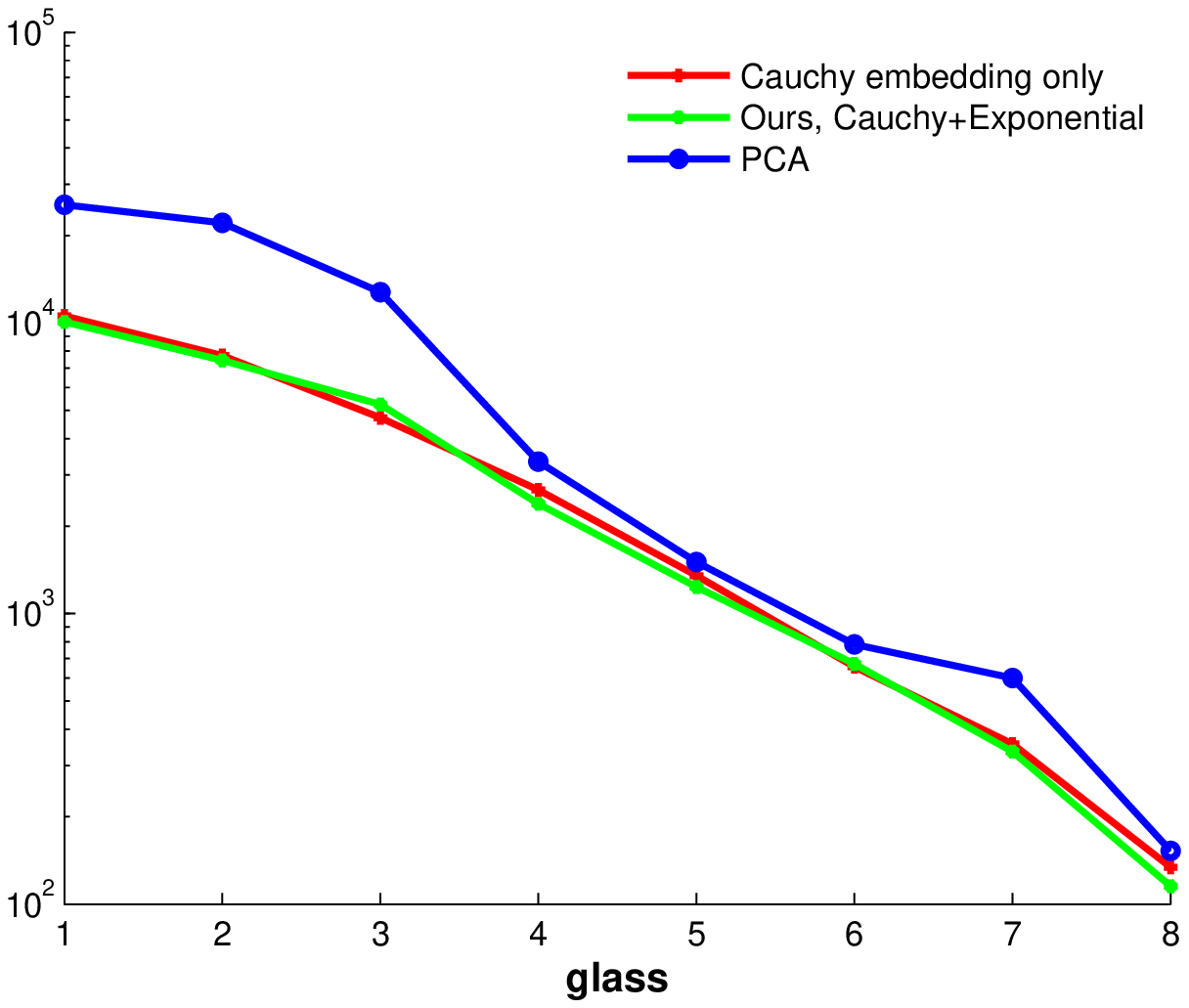}\\
\end{tabular}
\vspace{-0.2in}
\centering
\caption{\small $\ell_1$ norm error v.s. target rank.} \label{fig:ucidata}
\end{figure}

\begin{table}[h!]
\vspace{-0.1in}
\tiny
\caption{ $\ell_1$ rank-$5$ approximation on the synthetic data.}
\label{tab:syn_data}
\centering
\begin{tabular}{ccccc}
\toprule
& Opt    &   PCA & \cite{swz17} & Ours\\
\midrule
$\ell_1$ loss ($\times 10^4$) & $0.50$  &   $3.53$ &  $1.36$ & $\mathbf{1.04}$\\
\bottomrule
\end{tabular}
\vspace{-0.1in}
\end{table}

\section{Conclusion and Future Work}
We presented an efficient subspace embedding algorithm for orlicz norm and demonstrated its usefulness in regression/low rank approximation problem on synthetic and real datasets. Nevertheless, $O(d \log^{2}n)$ is still a large theoretical approximation factor, and hence it is worth i) investigating whether the theoretical approximation ratio can be smaller if input are under some statistical distribution ii) calculating the actual approximation ratio with ground truth obtained by some slower but more accurate optimization algorithm. It is also worth examining whether our exponential embedding sketching method preserves the statistical properties of the regression error,  since we assumed a different noise distribution from Gaussian/double-exponential as a starting point~\cite{raskutti2014statistical,lopes2018error}.

\section*{Acknowledgements}
Research supported in part by Simons Foundation (\#491119 to Alexandr Andoni), NSF (CCF-1617955, CCF- 1740833), and Google Research Award.

\bibliography{refs}
\bibliographystyle{icml2018}

\appendix

\section{Related Works}

Existing literature studied the robust regression with respect to Huber loss function~\cite{MM00,Owen07}. Such regression can be applied to solve many problems like the people counting problem~\cite{CM15}.
To speed up the regression process,
some dimensional reduction techniques can be used to reduce the number of observations~\cite{GIMQS15}, also faster algorithms have been proposed to address the robust regression with reasonable assumption~\cite{bjk15}.
Besides, different models of regression were explored, such as Gaussian process regression~\cite{Rasmussen06}, active regression with adaptive Huber loss~\cite{CM16}.

Recent years, there are lots of randomized sketching and embedding techniques developed for solving numerical linear algebra problems. There is a long line of works, e.g.~\cite{a03,cw13,nn13} for $\ell_2$ subspace embedding, and works, e.g.~\cite{sw11,mm13,wz13,ww18} for $\ell_p$ subspace embedding. For more related works, we refer readers to the book~\cite{w14}. Based on sketching/embedding techniques, there is a line of works studied $\ell_2$ and $\ell_p$ regressions, e.g.~\cite{cw13,dmms11,mm13,nn13,wz13}. \cite{cw15} studied linear regression with M-estimator error measure. We refer to the survey \cite{m11} for more details.

Frobenius norm low rank matrix approximation problem is also known as PCA problem. This problem is well studied. The fastest algorithm is shown by~\cite{cw13}. For the entrywise $\ell_p$ norm low rank approximation problem, there is no known algorithm with theoretical guarantee until the work~\cite{swz17}. \cite{swz17} works only for $1\leq p\leq 2.$ Recently, ~\cite{cgklpw17} gives algorithms for all $p\geq 1.$ But either the running time is not in polynomial or the rank of the output is not exact $k$.

\section{Proof of Fact 2.}
\begin{proof}
Notice that $G_1$ is a nonzero nondecreasing convex function on $\mathbb{R}_+$, thus $G_1^{-1}(1)$ exists, and $G_2$ is a nonzero nondecreasing function. In addition because $s=\sup\left\{\frac{1}{y-x}\left(G_2(y)-G_2(x)\right)\mid 0\leq x\leq y\leq 1\right\},$ $G_2$ is also convex. Thus $\|\cdot\|_{G_2}$ is Orlicz norm. Let $x\in\mathbb{R}^n$. Notice that if $\alpha>0$ satisfies $\sum_{i=1}^n G_1(|x_i|/\alpha)\leq 1,$ then $\forall i\in [n], G_1(|x_i|/\alpha)\leq 1$. It means that $|x_i|\leq G_1^{-1}(1)\alpha$, thus $\sum_{i=1}^n G_2(|x_i|/(G_1^{-1}(1)\alpha))=\sum_{i=1}^n G_1(|x_i|/\alpha)\leq 1$. Similarly if $\alpha$ satisfies $\sum_{i=1}^n G_2(|x_i|/\alpha)\leq 1,$ then $\sum_{i=1}^n G_1(G_1^{-1}(1)|x_i|/\alpha)=\sum_{i=1}^n G_2(|x_i|/\alpha)\leq 1$. Therefore, $\|x\|_{G_1}=\|x\|_{G_2}/G_1^{-1}(1).$
\end{proof}
\section{Proof of Lemma 3.}
Due to convexity of $G$ and $G(1)=1,G(0)=0$, $\forall x\in[0,1],G(x)\leq xG(1)+(1-x)G(0)=x.$ Since $x\leq 1,$ $G(1)/G(x)\leq C_G(1/x)^2,$ we have $G(x)\geq x^2/C_G.$
\section{Proof of Lemma 4.}
With out loss of generality, we can assume $\forall i\in[n], x_i\geq 0.$
Let $x\in\mathbb{R}^n,\alpha=\|x\|_G.$ We have $\sum_{i=1}^n G(x_i/\alpha)= 1.$
If $x_i/\alpha\leq 1,$ due to the convexity of $G$, $G(x_i/\alpha)\leq G(1)\cdot x_i/\alpha+G(0)\cdot(1-x_i/\alpha)=G(1)\cdot x_i/\alpha=x_i/\alpha.$ If $x_i/\alpha>1,$ then $G(x_i/\alpha)>1$ which contradicts to $\sum_{i=1}^n G(x_i/\alpha)= 1.$ Thus, $\|x\|_G\leq \|x\|_1$.

$
\|x/\alpha\|_2^2=\sum_{i=1}^n (x_i/\alpha)^2\leq \sum_{i=1}^n C_G G(x_i/\alpha) =C_G$. Then
\begin{align*}
\|x\|_2\leq \sqrt{C_G}\alpha.
\end{align*}

\section{Proof of Lemma 5.}
Due to the convexity of $G(\cdot)$ and $G(0)=0,$ $\forall 0<x<y,$ we have $G(x)\leq G(y)x/y+G(0)(1-x/y)=G(y)x/y.$ Thus, $y/x\leq G(y)/G(x).$

\section{Proof of Lemma 6.}
It is easy to see that $\forall x > 0, G(x) \neq 0$, since otherwise for $y>x$, the condition $G(y)/G(x) < C_G (y/x)^2$ would be violated. Let $s=G'_+(1)$. There are several cases.

 If $a \geq 1$ or $ b \geq 1$. Without loss of generality, assume $a \geq 1. G(a)G(b)/G(ab) = (sa - (s-1))G(b)/G(ab) \leq saG(b)/G(ab)$. since $ab \geq b$, $G(ab)/G(b) \geq a$. Therefore, $G(a)G(b)/G(ab)\leq saG(b)/G(ab) \leq s$.

If $a, b \leq 1, 0.5 \leq a \leq 1$ or $0.5 \leq b \leq 1$, we want to show $G(a)G(b)/G(ab)$ is still bounded. Without loss of generality, assume that $0.5 \leq a \leq 1$. Then $G(a)G(b)/G(ab) = G(a)\frac{G(b)}{G(ab)} \leq G(b)/G(ab) \leq C_G/a^2\leq 4C_{G}$.

If $a, b \leq 0.5$ and $G'(0) > 0$.
Let $G'(0) = c > 0$. Therefore, there is a constant $\delta_1$ which may depend on $G$ such that $\forall x \in (0, \delta_1)$, $|\frac{G(x) - G(0)}{x - 0} - c| < \frac{c}{2}$. Therefore, $\forall x \in (0, \delta_1]$, $G(x) > \frac{c}{2}x$. Due to Lemma~5, $\forall x > \delta_1$, $G(x)/x >G(\delta_1)/\delta_1 > c/2$. Therefore, $\forall x, G(x) \geq \frac{c}{2}x.$
Since $b \leq 0.5, ab \leq a \leq 1$. Since $G$ is convex, $G(a)\leq \frac{1-a}{1-ab}G(ab)+\frac{a-ab}{1-ab}G(1)=\frac{1-a}{1-ab}G(ab)+\frac{a-ab}{1-ab}.$ Therefore, $G(a) \leq G(ab)+(1-a)/(1-ab) \leq  G(ab)+2a$.
Similarly, $G(b) \leq 2b + G(ab)$. Then we have $G(a)G(b) \leq (2b + G(ab))(2a + G(ab))$
$\Rightarrow G(a)G(b)/G(ab) \leq ab/G(ab) + (2a + 2b) + G(ab) \leq 2/c + 2 + 1 \leq 2/c + 3$.

If $a, b \leq 0.5, G'_+(0) = 0, G''_+(0) = c_{2} > 0$. Let $\epsilon = c_2/4$. Since $G$ is twice differentiable in $(0,\delta_G)$ and $G'_+(0),G''_+(0)$ exist, by Taylor's Theorem, there is a constant $\delta_2>0$ which may depend on $G$ such that $|G(x) - (G(0) + G'_+(0)x + c_{2}x^{2}/2)| \leq \epsilon x^{2}$. Therefore, $\forall x \in (0, \delta_2), G(x) \geq c_{2}x^{2}/4, G(x) \leq c_{2}x^{2}$. Hence, $\forall a,b \in (0,\delta_2]$, $G(a)G(b)/G(ab) \leq \frac{G(a)G(b)}{c_{2}a^{2}b^{2}/4} \leq \frac{4}{c_{2}}\frac{G(a)}{a^{2}}\frac{G(b)}{b^{2}} \leq \frac{4}{c_{2}}c_{2}^{2} = 4c_{2}$.
Consider $a$ or $b>\delta_2$. Without loss of generality, assume $a > \delta_2$. Similar to the previous argument, $G(a)G(b)/G(ab) \leq G(a)\frac{G(b)}{G(ab)} \leq G(b)/G(ab) \leq C_Gb^{2}/(ab)^{2} \leq C_G/\delta_2^{2}$. Thus $G(a)G(b)/G(ab)$ is bounded by $C_G/\delta_2^2$ in this case.

\section{Proof of Theorem 9}
Without loss of generality, we assume $\forall i\in[n], x_i\geq 0.$ Fix $i\in [n],$ we have
$
\Pr(u_i\geq G^{-1}(1/n^{20}))=e^{-G(G^{-1}(1/n^{20}))}\geq 1-1/n^{20}.
$
Define $\mathcal{E}$ to be the event that $\forall i\in[n], u_i\geq G^{-1}(1/n^{20})$. By taking union bound over $n$ coordinates, $\mathcal{E}$ happens with probability at least $1-1/n^{19}.$ Let $\alpha=\|x\|_G$. Then, for any $\gamma\geq 1,$ we have
{\small
\begin{align*}
&\Pr(\|f(x)\|_G\geq \gamma\alpha)\\
=&\Pr(\|f(x)\|_G\geq \gamma\alpha \mid \mathcal{E})\Pr(\mathcal{E})\\
&+\Pr(\|f(x)\|_G\geq \gamma\alpha \mid \neg\mathcal{E})\Pr(\neg\mathcal{E})\\
\leq &\Pr(\|f(x)\|_G\geq \gamma\alpha \mid \mathcal{E})\Pr(\mathcal{E})+\Pr(\neg\mathcal{E})\\
= &\Pr(\|f(x)/(\gamma\alpha)\|_G\geq 1 \mid \mathcal{E})\Pr(\mathcal{E})+\Pr(\neg\mathcal{E})\\
\leq & \E\left(\sum_{i=1}^n G\left(x_i/\alpha\cdot \frac1{\gamma u_i}\right)\mid \mathcal{E}\right)\Pr(\mathcal{E})+1/n^{19}\\
= & \sum_{i=1}^n \E\left(G\left(x_i/\alpha\cdot \frac1{\gamma u_i}\right)\mid \mathcal{E}\right)\Pr(\mathcal{E})+1/n^{19}.
\end{align*}
}
Let $r=G^{-1}(1/n^{20}).$ For a fixed $i\in[n],$
{\small
\begin{align*}
&\E\left(G\left(x_i/\alpha\cdot \frac1{\gamma u_i}\right)\mid \mathcal{E}\right)\Pr(\mathcal{E})\\
=&\int_{r}^{\infty} G\left(\frac{x_i/\alpha}{u\gamma}\right)e^{-G(u)} G'(u) \mathrm{d}u\\
\leq&\frac{1}{\gamma}G(x_i/\alpha)\int_{1}^{\infty} e^{-G(u)} \mathrm{d}G+\frac{1}{\gamma}\int_{r}^{1} G\left(\frac{x_i/\alpha}{u}\right)e^{-G(u)} \mathrm{d}G\\
\leq &\frac{1}{\gamma}G(x_i/\alpha)+\frac{1}{\gamma}\int_{r}^{1} G\left(\frac{x_i/\alpha}{u}\right)e^{-G(u)} \mathrm{d}G\\
\leq &\frac{1}{\gamma}G(x_i/\alpha)+\frac{1}{\gamma}\alpha_GG(x_i/\alpha)\int_{r}^{1} \frac{1}{G(u)}e^{-G(u)} \mathrm{d}G\\
\leq & O(\log n)\frac{\alpha_G}{\gamma}G(x_i/\alpha),
\end{align*}
}
where $\alpha_G$ is a constant may depend on $G(\cdot)$.
The first inequality follows by $G(x_i/\alpha\cdot 1/(\gamma u))\leq 1/\gamma \cdot G(x_i/\alpha\cdot 1/u)+(1-1/\gamma)\cdot G(0)\leq G(x_i/\alpha\cdot 1/u)/\gamma\leq G(x_i/\alpha)/\gamma$. The second inequality follows by $\int_{r}^{\infty} e^{-x}\mathrm{d}x\leq 1$. The third inequality follows by Lemma 6. Since $x_i/\alpha\leq 1$, then there is an $\alpha_G$ such that $G(u)G(x_i/\alpha\cdot 1/u)\leq \alpha_G G(x_i/\alpha).$ The last inequality follows by $\int_{1/n^{20}}^\infty e^{-x}/x \mathrm{d}x=O(\log n).$

Thus, we have
{\small
\begin{align*}
&\sum_{i=1}^n \E\left(G\left(x_i/\alpha\cdot \frac1{\gamma u_i}\right)\mid \mathcal{E}\right)\Pr(\mathcal{E})\\
\leq&  O(\log n)\frac{\alpha_G}{\gamma}\sum_{i=1}^n G(x_i/\alpha)\leq O(\log n)\frac{\alpha_G}{\gamma}.
\end{align*}
}
Then,
\begin{align*}
\Pr(\|f(x)\|_G\geq \gamma\alpha)\leq O(\log n)\frac{\alpha_G}{\gamma}+1/n^{19}.
\end{align*}
It is equivalent to
\begin{align*}
\Pr(\|f(x)\|_G\leq \gamma\alpha)\geq 1-O(\log n)\frac{\alpha_G}{\gamma}-1/n^{19}.
\end{align*}
Set $\gamma=O(\log n)\frac{\alpha_G}{\delta}$, we complete the proof.

\section{Proof of Theorem 10}
Similar to~\cite{ddhkm09}, we can define a well conditioned basis for Orlicz norm.
\begin{definition}[Well conditioned basis for Orlicz norm]\label{def:well_condition}
Given a matrix $A\in\mathbb{R}^{n\times m}$ with rank $d$, let $U\in\mathbb{R}^{n\times d}$ be a matrix which has the same column space of $A$. If $U$ satisfies
1. $\forall x\in\mathbb{R}^d,$ $\|x\|_\infty\leq \beta \|Ux\|_G$,
2. $\sum_{i=1}^d \|U_i\|_G\leq \alpha$ ,
then $U$ is an $(\alpha,\beta,G)$-well conditioned basis of $A$.
\end{definition}
Fortunately, the such good basis exists for Orlicz norm.
\begin{theorem}[See Connection to Auerbach basis in Section 3.1 of~\cite{ddhkm09}]
Given a matrix $A\in\mathbb{R}^{n\times m}$ with rank $d$ and norm $\|\cdot\|_G$, there exist a matrix $U\in\mathbb{R}^{n\times d}$ which is a $(d,1,G)$ well conditioned basis of $A$.
\end{theorem}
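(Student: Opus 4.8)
The plan is to realize the promised basis as an \emph{Auerbach basis} of the $d$-dimensional normed space $\mathcal{C} = \{Ax : x \in \mathbb{R}^m\}$ equipped with the restriction of $\|\cdot\|_G$, and then to observe that the two defining properties of a $(d,1,G)$-well-conditioned basis (Definition~\ref{def:well_condition}) are exactly the two halves of the Auerbach property. Concretely, suppose we have produced linearly independent $U_1, \dots, U_d \in \mathcal{C}$ together with linear functionals $\phi_1, \dots, \phi_d$ on $\mathcal{C}$ satisfying $\|U_i\|_G = 1$, $\phi_i(U_j) = \delta_{ij}$, and $|\phi_i(v)| \le \|v\|_G$ for every $v \in \mathcal{C}$. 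Let $U = [U_1 \mid \cdots \mid U_d] \in \mathbb{R}^{n\times d}$, which has the same column space as $A$. Property~2 is immediate, since $\sum_{i=1}^d \|U_i\|_G = d$. For property~1, fix $x \in \mathbb{R}^d$ and write $v = Ux = \sum_j x_j U_j \in \mathcal{C}$; applying $\phi_i$ and using $\phi_i(U_j) = \delta_{ij}$ gives $x_i = \phi_i(v)$, so $|x_i| = |\phi_i(v)| \le \|v\|_G = \|Ux\|_G$. Taking the maximum over $i$ yields $\|x\|_\infty \le \|Ux\|_G$, i.e.\ $\beta = 1$.

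It remains to construct the Auerbach basis, which I would do by the classical volume-maximization argument adapted to $\|\cdot\|_G$. Fix any linear isomorphism $\psi: \mathcal{C} \to \mathbb{R}^d$ and define $\mathrm{vol}(v_1, \dots, v_d) = |\det[\psi(v_1) \mid \cdots \mid \psi(v_d)]|$; the particular $\psi$ is irrelevant, since changing it only rescales $\mathrm{vol}$ by a fixed positive constant. By Lemma~\ref{lem:sandwich} the norm $\|\cdot\|_G$ is equivalent to $\|\cdot\|_2$ on the finite-dimensional space $\mathcal{C}$, so the closed unit ball $B = \{v \in \mathcal{C} : \|v\|_G \le 1\}$ is compact, and hence so is $B^d$. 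Since $\mathrm{vol}$ is continuous, it attains a maximum on $B^d$ at some tuple $(U_1, \dots, U_d)$; the maximum is strictly positive because scaling down any fixed basis of $\mathcal{C}$ gives a competitor of nonzero volume, so the $U_i$ are linearly independent. A one-variable scaling argument shows $\|U_i\|_G = 1$ for each $i$: if some $\|U_i\|_G < 1$, then replacing $U_i$ by $U_i/\|U_i\|_G \in B$ multiplies $\mathrm{vol}$ by $1/\|U_i\|_G > 1$, contradicting maximality.

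Finally I would extract the dual functionals by Cramer's rule: define $\phi_i(v)$ to be the ratio $\det[\psi(U_1) \mid \cdots \mid \psi(v) \mid \cdots \mid \psi(U_d)] \,/\, \det[\psi(U_1) \mid \cdots \mid \psi(U_d)]$, where $\psi(v)$ occupies the $i$-th slot. This is linear in $v$ and satisfies $\phi_i(U_j) = \delta_{ij}$ by multilinearity and antisymmetry of the determinant. The estimate $|\phi_i(v)| \le \|v\|_G$ is where maximality enters, and is the main point to get right: for $v \in B$, the tuple obtained by overwriting the $i$-th entry of $(U_1, \dots, U_d)$ with $v$ still lies in $B^d$, so its volume is at most $\mathrm{vol}(U_1, \dots, U_d)$; dividing gives $|\phi_i(v)| \le 1$, and homogeneity extends this to $|\phi_i(v)| \le \|v\|_G$ for all $v \in \mathcal{C}$. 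The only genuine subtlety is this competitor step — checking that replacing one column by an arbitrary $v \in B$ leaves the remaining columns admissible (it does, as they are untouched and of norm exactly $1$) — together with the fact that $\|\cdot\|_G$ is a bona fide norm with compact unit ball, which is guaranteed by Definition~\ref{def:orlicz_norm} and Lemma~\ref{lem:sandwich}.
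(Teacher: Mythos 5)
Your proposal is correct and follows the same route the paper intends: the paper gives no proof of its own, simply citing the Auerbach-basis connection in Section 3.1 of~\cite{ddhkm09}, and your reduction of the two well-conditioned-basis properties to the Auerbach properties ($\|U_i\|_G=1$, dual functionals bounded by the norm) is exactly that connection. The only difference is that you additionally prove Auerbach's theorem itself for $\|\cdot\|_G$ via the classical volume-maximization and Cramer's-rule argument, making the statement self-contained rather than outsourced to the citation; all steps of that construction (compactness of the $\|\cdot\|_G$ unit ball via Lemma~\ref{lem:sandwich}, positivity and attainment of the maximum, the normalization and column-replacement competitor arguments) check out.
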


\begin{proof}[Proof of Theorem 10]
Notice that $D^{-1}Ax$ is exactly the same as $f(Ax)$. 
There is a matrix $U\in\mathbb{R}^{n\times d}$ which is $(d,1,G)$-well conditioned basis of $A$. Since $\forall x\in \mathbb{R}^m,$ there is always a vector $y\in\mathbb{R}^d$ such that $Ax=Uy,$ we only need to prove that with probability at least $0.99,$
\begin{align*}
\forall x\in\mathbb{R}^d, \|D^{-1}Ux\|_G\leq O(\alpha_Gd^2\log n)\|Ux\|_G,
\end{align*}
where $D,\alpha_G$ are the same as stated in the Theorem. According to Theorem~9, if we look at a fixed $i\in[d],$ then with probability at least $1-0.01/d,$ $\|D^{-1}U_i\|_G\leq O(\alpha_Gd\log(n)).$ By taking union bound, with probability at least $0.99$, $\forall i\in[d],$ $\|D^{-1}U_i\|_G\leq O(\alpha_Gd\log(n)).$ Now we have, for any $x\in\mathbb{R}^d,$
\begin{align*}
\|D^{-1}Ux\|_G&\leq \sum_{i=1}^d |x_i| \|D^{-1}U_i\|_G\\
&\leq \|x\|_{\infty} \sum_{i=1}^d \|D^{-1}U_i\|_G\\
&\leq O(\alpha_Gd\log(n))\|x\|_{\infty} \sum_{i=1}^d \|U_i\|_G\\
&\leq O(\alpha_Gd^2\log(n))\|Ux\|_{G}.
\end{align*}
The first inequality follows by triangle inequality. The third inequality follows by $\forall i\in[d],$ $\|D^{-1}U_i\|_G\leq O(\alpha_Gd\log(n)).$ The forth inequality follows by $(d,1,G)$-well conditioned basis.
\end{proof}

\section{Proof of Theorem 12}
Now, in the following, we present the concept of $\varepsilon$-net.
\begin{definition}[$\varepsilon$-net for $\ell_2$ norm]
Given $A\in\mathbb{R}^{n\times m}$ with rank $d$, let $S$ be the $\ell_2$ unit ball in the column space of $A$, i.e. $S=\{y\mid \|y\|_2=1,\exists x\in\mathbb{R}^m,y=Ax\}.$ Let $N\subset S$, if $\forall x\in S,\exists y\in N$ such that $\|x-y\|_2\leq \varepsilon,$ then we say $N$ is an $\varepsilon$-net for $S$.
\end{definition}
The following theorem gives an upper bound of the size of $\varepsilon$-net.
\begin{theorem}[Lemma 2.2 of~\cite{w14}]\label{thm:eps_net}
Given $A\in\mathbb{R}^{n\times m}$ with rank $d$, let $S$ be the $\ell_2$ unit ball in the column space of $A$. There exist an $\varepsilon$-net $N$ for $S$, such that $|N|\leq (1+4/\varepsilon)^d.$
\end{theorem}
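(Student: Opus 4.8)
The plan is to prove this purely as a covering-number bound for the unit sphere of a $d$-dimensional inner product space, so that the large ambient dimension $n$ disappears entirely from the count. First I would fix an orthonormal basis of the column space of $A$; this identifies the column space (with the $\ell_2$ norm inherited from $\mathbb{R}^n$) isometrically with $\mathbb{R}^d$ equipped with its standard Euclidean norm, and carries $S$ to the usual unit sphere $S^{d-1}=\{y\in\mathbb{R}^d : \|y\|_2=1\}$. It therefore suffices to exhibit an $\varepsilon$-net of size at most $(1+4/\varepsilon)^d$ for $S^{d-1}$ inside $\mathbb{R}^d$.

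Next I would build the net by a greedy/packing construction: let $N\subseteq S^{d-1}$ be a maximal set of points that are pairwise more than $\varepsilon$ apart in $\ell_2$ distance (such a maximal set exists since $S^{d-1}$ is compact). By maximality, no further point of $S^{d-1}$ can be adjoined while keeping all pairwise distances above $\varepsilon$; equivalently, every $x\in S^{d-1}$ lies within distance $\varepsilon$ of some $y\in N$. Hence $N$ is automatically an $\varepsilon$-net, and it remains only to bound $|N|$.

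The size bound then comes from a standard volume comparison. Since the points of $N$ are pairwise at distance greater than $\varepsilon$, the open Euclidean balls of radius $\varepsilon/2$ centered at the points of $N$ are pairwise disjoint. Each such ball is contained in the ball of radius $1+\varepsilon/2$ about the origin, because every center has norm exactly $1$. Writing $\mathrm{vol}(r)$ for the $d$-dimensional volume of a Euclidean ball of radius $r$, disjointness and containment give $|N|\cdot\mathrm{vol}(\varepsilon/2)\leq\mathrm{vol}(1+\varepsilon/2)$. Using $\mathrm{vol}(r)=r^d\cdot\mathrm{vol}(1)$, the common constant cancels and I obtain
\[
|N|\leq\left(\frac{1+\varepsilon/2}{\varepsilon/2}\right)^d=\left(1+\frac{2}{\varepsilon}\right)^d\leq\left(1+\frac{4}{\varepsilon}\right)^d,
\]
which is the claimed bound.

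I do not expect a serious obstacle here, since the argument is classical; the only two points requiring care are (i) the reduction to the intrinsic $d$-dimensional geometry, so that the volume of a radius-$r$ ball scales as $r^d$ and the large ambient dimension $n$ never enters the estimate, and (ii) the observation that a maximal $\varepsilon$-separated set is automatically an $\varepsilon$-covering, which is exactly what allows the packing estimate to control the covering number.
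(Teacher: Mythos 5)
Your proof is correct: identifying the column space with $\mathbb{R}^d$ via an orthonormal basis, taking a maximal $\varepsilon$-separated subset of $S^{d-1}$, and bounding its size by the volume-comparison argument is the standard proof, and it in fact yields the slightly sharper constant $(1+2/\varepsilon)^d \leq (1+4/\varepsilon)^d$. Note that the paper does not prove this statement at all --- it imports it as Lemma 2.2 of \cite{w14} --- and your argument is precisely the classical packing-to-covering proof underlying that citation, so the two approaches coincide.
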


It suffices to prove $\forall x\in\mathbb{R}^m,\|Ax\|_2=1$ we have $ \Omega(1/(\alpha'_G d\log n))\|Ax\|_G\leq \|D^{-1}Ax\|_{\infty}$.
Let $D\in\mathbb{R}^{n\times n}$ be a diagonal matrix of which each entry on the diagonal is an i.i.d. random variable drawn from the distribution with CDF $1-e^{-G(t)}.$ Let $\alpha'_G\geq 1$ be a sufficiently large constant. Let $S$ be the $\ell_2$ unit ball in the column space of $A$. Let $t_1=\Theta(\alpha'_G d\log n),t_2=\Theta(\alpha_G d^2\log n),$ where $\alpha_G$ is the parameter stated in Theorem 10. Set $\varepsilon=O(1/(\sqrt{n}C_Gt_1t_2))$. There exist an $\varepsilon$-net $N$ for $S$, and
\begin{align*}
|N|=e^{O(d(\log n+\log(C_G\alpha'_G\alpha_G)))}.
\end{align*}
By taking union bound over the net points, according to Theorem 11, with probability at least $0.99$,
\begin{align}\label{eq:no_contraction_for_net}
\forall x\in N, \|D^{-1}x\|_\infty\geq \Omega(1/(\alpha'_G d\log n))\|x\|_G.
\end{align}
Also due to Theorem 10, with probability at least $0.99$,
\begin{align}\label{eq:no_dilation_for_all}
\forall x\in S, \|D^{-1}x\|_G\leq O(\alpha_Gd^2\log n)\|x\|_G.
\end{align}
By taking union bound, with probability at least $0.98$, the above two events will happen. Then, in this case, consider a $y\in S,$ let $x\in N$ such that $\|x-y\|_2\leq \varepsilon,$ let $z=x-y,$ we have
\begin{align*}
\|D^{-1}y\|_\infty&\geq \|D^{-1}x\|_\infty-\|D^{-1}z\|_\infty\\
&\geq \frac{1}{t_1} \|x\|_G-t_2\sqrt{C_G}\|z\|_G\\
&\geq \frac{1}{t_1} \|y\|_G-\frac{t_2}{t_1}\|z\|_G-t_2\sqrt{C_G}\|z\|_G\\
&\geq \frac{1}{t_1} \|y\|_G-2t_2\sqrt{C_G}\|z\|_G\\
&\geq \frac{1}{t_1} \|y\|_G-O(\frac{2}{\sqrt{C_G}t_1})\\
&\geq \Omega(1/t_1) \|y\|_G\\
&=\Omega(1/(\alpha'_G d\log n)) \|y\|_G.
\end{align*}
The first inequality follows by triangle inequality. The second inequality follows by Equation~\ref{eq:no_contraction_for_net}, Equation~\ref{eq:no_dilation_for_all}, and Lemma 4, i.e. $\|D^{-1}z\|_\infty\leq \|D^{-1}z\|_2\leq \sqrt{C_G}\|D^{-1}z\|_G$. The third inequality follows by triangle inequality. The forth inequality follows by $t_1,C_G\geq 1$. The fifth inequality follows by Lemma 4: $\|z\|_G\leq \|z\|_1\leq \sqrt{n}\|z\|_2=\sqrt{n}\varepsilon=O(1/(C_Gt_1)).$ The sixth inequality follows by Lemma 4: $\|y\|_G\geq \frac{1}{\sqrt{C_G}}\|y\|_2=1/\sqrt{C_G}.$

\section{Proof of Theorem 13}
Due to Theorem~12 and Theorem~10, with probability at least $0.98,$ $\forall x\in\mathbb{R}^m,$
$\Omega(1/(\alpha'_G d\log n))\|Ax\|_G\leq \|D^{-1}Ax\|_\infty\leq \|D^{-1}Ax\|_2.$
And
$\|D^{-1}Ax\|_2\leq \sqrt{C_G}\|D^{-1}Ax\|_G\leq O(\sqrt{C_G}\alpha'_Gd^2\log n)\|Ax\|_G.$

\section{Proof of Theorem 16}
Due to Theorem 14 and Theorem 15, with probability at least $0.95$, $\forall x,$ $\|\Pi_2\Pi_1D^{-1}Ax\|_2$ is a constant approximation to $\|\Pi_1D^{-1}Ax\|_2$ and $\|\Pi_1D^{-1}Ax\|_2$ is a constant approximation to $\|D^{-1}Ax\|_2.$ Combining with Theorem 13, we complete the proof.

\section{Proof of Theorem 18}
Let $x^*=\arg\min_{x\in\mathbb{R}^d}\|Ax-b\|_G.$ Due to Theorem~9, with probability at least $0.99,$
\begin{align}\label{eq:for_each_no_dilation}
\|D^{-1}(Ax^*-b)\|_G\leq O(\alpha_G \log n) \|Ax^*-b\|_G.
 \end{align}
 Now let $A'=[A\ b]\in\mathbb{R}^{n\times (d+1)}.$ Due to Theorem~16, with probability at least $0.9,$ we have
\begin{align}\label{eq:for_all_no_contraction}
\forall x\in\mathbb{R}^{d+1}, \Omega(1/(\alpha'_G d\log n))\|A'x\|_G\leq \|\Pi_2\Pi_1D^{-1}A'x\|_2.
\end{align}
Then,
\begin{align*}
\|A\hat{x}-b\|_G&\leq O(\alpha'_G d\log n)\|\Pi_2\Pi_1D^{-1}(A\hat{x}-b)\|_2\\
&\leq O(\alpha'_G d\log n)\|\Pi_2\Pi_1D^{-1}(Ax^*-b)\|_2\\
&\leq O(\alpha'_G d\log n)\|D^{-1}(Ax^*-b)\|_2\\
&\leq O(\alpha'_G\sqrt{C_G} d\log n)\|D^{-1}(Ax^*-b)\|_G\\
&\leq O(\alpha_G\alpha'_G\sqrt{C_G} d\log^2 n)\|Ax^*-b\|_G.
\end{align*}
The first inequality follows by Equation~\ref{eq:for_all_no_contraction}. The second inequality follows by $\hat{x}=(\Pi_2\Pi_1D^{-1}A)^\dagger \Pi_2\Pi_1D^{-1}b,$ which is the optimal solution for $\min_{x\in\mathbb{R}^d}\|\Pi_2\Pi_1D^{-1}(Ax-b)\|_2.$ The third inequality follows by Theorem~14 and Theorem~15. The forth inequality follows by Lemma~4. The last inequality follows by Equation~\ref{eq:for_each_no_dilation}.
Let $\beta_G=\alpha'_G\sqrt{C_G},$ we complete the proof of the correctness of Algorithm~1.

For the running time, according to Theorem~16, computing $\Pi_2\Pi_1D^{-1}A$ and $\Pi_2\Pi_1D^{-1}b$ needs $\mathrm{nnz(A)}+\mathrm{poly}(d)$ time. Since $\Pi_2\Pi_1D^{-1}A$ has size $\mathrm{poly}(d),$ computing $\hat{x}=(\Pi_2\Pi_1D^{-1}A)^\dagger \Pi_2\Pi_1D^{-1}b$ needs $\mathrm{poly}(d)$ running time. The total running time is $\mathrm{nnz(A)}+\mathrm{poly}(d).$

\section{proof of Lemma 20}
Before we prove the Lemma, we need following tools.

\begin{lemma}[Concentration bound for sum of half normal random variables]\label{lem:half_normal_concentration}
For any $k$ i.i.d. random Gaussian variables $z_1,z_2,\cdots,z_k,$ we have that
{\small
\begin{align*}
\Pr\left(\frac{1}{k}\sum_{i=1}^k |z_i|\in\left((1-\varepsilon)\sqrt{\frac{2}{\pi}},(1+\varepsilon)\sqrt{\frac{2}{\pi}}\right)\right)\geq 1-e^{-\Omega(k\varepsilon^2)}.
\end{align*}
}
\end{lemma}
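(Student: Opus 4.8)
The plan is to recognize $|z_1|,\dots,|z_k|$ as i.i.d.\ half-normal random variables with mean $\mu=\E|z_i|=\sqrt{2/\pi}$, so that the claim is a two-sided (relative-error $\varepsilon$) concentration of the empirical average $\bar Y=\frac1k\sum_{i=1}^k|z_i|$ around $\mu$, with the sub-Gaussian rate $e^{-\Omega(k\varepsilon^2)}$. Since $|z_i|$ is unbounded, Hoeffding's inequality does not apply directly; the crux is that the centered variable $|z_i|-\mu$ is sub-Gaussian, after which a Chernoff argument over the independent sum finishes the job.

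First I would establish the moment generating function bound $\E\!\left[e^{\lambda(|z_i|-\mu)}\right]\le e^{\lambda^2/2}$ for all $\lambda\in\mathbb{R}$. The clean route is Gaussian concentration for Lipschitz functions: the map $t\mapsto|t|$ is $1$-Lipschitz, and a $1$-Lipschitz function of a standard Gaussian is sub-Gaussian with variance proxy $1$, which is exactly this MGF bound. Alternatively one can compute $\E[e^{\lambda|z_i|}]=2e^{\lambda^2/2}\Phi(\lambda)$ directly and verify $2e^{-\lambda\mu}\Phi(\lambda)\le 1$; I would prefer the Lipschitz route, since the crude bound $\E[e^{\lambda|z_i|}]\le 2e^{\lambda^2/2}$ obtained from $e^{\lambda|t|}\le e^{\lambda t}+e^{-\lambda t}$ carries a stray factor $2$ that spoils the normalization at $\lambda=0$ and must be removed by tracking the $\Phi(\lambda)$ factor.

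Next, by independence the sum $S=\sum_{i=1}^k(|z_i|-\mu)$ has MGF at most $e^{k\lambda^2/2}$, so the standard Chernoff bound gives, for $t>0$, $\Pr(S\ge t)\le\min_{\lambda>0}e^{-\lambda t+k\lambda^2/2}=e^{-t^2/(2k)}$, and symmetrically $\Pr(S\le -t)\le e^{-t^2/(2k)}$. Taking $t=\varepsilon\mu k$ and union bounding the two tails yields $\Pr(|\bar Y-\mu|\ge\varepsilon\mu)\le 2e^{-\varepsilon^2\mu^2 k/2}$. Since $\mu^2=2/\pi=\Theta(1)$, the exponent is $-\Omega(k\varepsilon^2)$, and noting that $|\bar Y-\mu|<\varepsilon\mu$ is precisely the event $\bar Y\in((1-\varepsilon)\mu,(1+\varepsilon)\mu)$ completes the argument.

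I expect the only real obstacle to be the unboundedness of $|z_i|$, i.e.\ justifying the centered sub-Gaussian MGF bound with the correct constant; everything after that is a routine Chernoff and union-bound computation. A slicker packaging that avoids the separate tensorization step is to apply the Lipschitz concentration inequality directly to $g(z_1,\dots,z_k)=\frac1k\sum_i|z_i|$, which is $k^{-1/2}$-Lipschitz in the Euclidean norm, immediately giving $\Pr(|g-\E g|\ge s)\le 2e^{-ks^2/2}$ and hence the claim with $s=\varepsilon\mu$.
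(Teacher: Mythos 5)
Your proof is correct, but there is nothing in the paper to compare it against: the paper states this lemma (as a tool inside its proof of Lemma~\ref{lem:l2tol1}) without any proof, treating it as a standard concentration fact. Your argument supplies that missing proof, and it is sound. The key step --- the centered MGF bound $\E e^{\lambda(|z_i|-\mu)} \le e^{\lambda^2/2}$ with $\mu=\sqrt{2/\pi}$ --- is correctly justified either by Gaussian concentration for $1$-Lipschitz functions (the Herbst/log-Sobolev argument gives exactly this variance proxy), or by the exact computation $\E e^{\lambda|z_i|} = 2e^{\lambda^2/2}\Phi(\lambda)$ together with the inequality $2e^{-\lambda\mu}\Phi(\lambda)\le 1$, which indeed holds for all $\lambda$ because $h(\lambda)=\log 2+\log\Phi(\lambda)-\lambda\mu$ satisfies $h(0)=h'(0)=0$ and is strictly concave ($h''=-\frac{\phi}{\Phi}(\lambda+\frac{\phi}{\Phi})<0$ by positivity of the inverse Mills ratio). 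You are also right to flag the pitfall in the crude bound $\E e^{\lambda|z_i|}\le 2e^{\lambda^2/2}$: the per-coordinate factor $2$ becomes $2^k$ after tensorization and destroys the Chernoff bound, so the normalization at $\lambda=0$ must be preserved. The Chernoff step with $t=\varepsilon\mu k$ then gives $\Pr\left(|\bar Y-\mu|\ge\varepsilon\mu\right)\le 2e^{-k\varepsilon^2/\pi}$, and the leading factor $2$ is absorbed into the $\Omega(\cdot)$ in the exponent (harmless, and in the paper's application $k\varepsilon^2=\Omega(n\log(n/\varepsilon))$ anyway). Your closing observation --- that one can skip tensorization entirely by applying Gaussian Lipschitz concentration to $g(z_1,\dots,z_k)=\frac1k\sum_i|z_i|$, which is $k^{-1/2}$-Lipschitz in the Euclidean norm --- is the cleanest packaging and is presumably the citation-level argument the authors had in mind when they stated the lemma without proof.
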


\begin{lemma}\label{lem:upperbound_spectrum}
Let $G\in\mathbb{R}^{k\times m}$ be a random matrix with each entry drawn uniformly from i.i.d. $N(0,1)$ Gaussian distribution. With probability at least $0.99,$ $|G|_2\leq 10\sqrt{km}.$
\end{lemma}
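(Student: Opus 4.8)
The plan is to bound the spectral norm by the Frobenius norm and then control the latter with a one-line first-moment computation followed by Markov's inequality. The key observation is that the target bound $10\sqrt{km}$ is enormously loose: the true spectral norm of a $k\times m$ Gaussian matrix is $\Theta(\sqrt{k}+\sqrt{m})$, which is far below $\sqrt{km}$ whenever both dimensions are large. Consequently no Gaussian tail bound, no covering/$\varepsilon$-net argument over the unit sphere, and no concentration inequality are needed; a crude deterministic inequality together with the expectation of $\|G\|_F^2$ will already deliver the constant $10$ with room to spare.

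First I would invoke the elementary matrix inequality $|G|_2\le \|G\|_F$, valid for every matrix (since $\sigma_{\max}(G)^2\le\sum_i\sigma_i(G)^2=\|G\|_F^2$), where $\|G\|_F=(\sum_{i,j}G_{i,j}^2)^{1/2}$ is the Frobenius norm defined in Section~\ref{sec:pre}. This reduces the claim to the statement that $\|G\|_F<10\sqrt{km}$ with probability at least $0.99$. Next, since each entry $G_{i,j}$ is an independent standard Gaussian, we have $\E[G_{i,j}^2]=1$, so by linearity of expectation
\begin{align*}
\E\left[\|G\|_F^2\right]=\sum_{i=1}^{k}\sum_{j=1}^{m}\E[G_{i,j}^2]=km.
\end{align*}

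Because $\|G\|_F^2\ge 0$, Markov's inequality applied to the nonnegative random variable $\|G\|_F^2$ gives
\begin{align*}
\Pr\left[\|G\|_F^2\ge 100\,km\right]\le \frac{\E[\|G\|_F^2]}{100\,km}=\frac{1}{100}.
\end{align*}
Hence with probability at least $0.99$ we have $\|G\|_F^2<100\,km$, i.e.\ $\|G\|_F<10\sqrt{km}$, and combining this with $|G|_2\le\|G\|_F$ yields $|G|_2\le 10\sqrt{km}$, as claimed. I expect no genuine obstacle in this proof: the only subtlety is merely choosing the Markov threshold ($100\,km$) so that its square root produces exactly the stated constant $10$. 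I would remark that if one ever needed the sharp bound $|G|_2=O(\sqrt{k}+\sqrt{m})$ instead, the Frobenius estimate would be wasteful and one would instead combine an $\varepsilon$-net over the unit sphere (Theorem~\ref{thm:eps_net}) with a Gaussian tail bound on each fixed direction; but for the purpose of this tool lemma the first-moment argument above is both simpler and sufficient.
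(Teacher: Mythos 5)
Your proof is correct and matches the paper's argument exactly: both compute $\E[\|G\|_F^2]=km$, apply Markov's inequality at threshold $100\,km$, and conclude via $\|G\|_2\leq\|G\|_F$. Your write-up simply spells out the steps the paper leaves implicit.
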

\begin{proof}
Since $\E\left(\|G\|_F^2=km\right),$ we have that $\Pr(\|G\|_F^2\geq 100km)\leq 0.01$. Thus, with probability at least $0.99,$ we have $\|G\|_2\leq \|G\|_F\leq 10\sqrt{km}.$
\end{proof}

Now, let us prove the lemma.
\begin{proof}[Proof of Lemma 20]
Without loss of generality, we only need to prove $\forall x\in\mathbb{R}^n$ with $\|x\|_2=1,$ we have $\|Bx\|_1\in(1-\varepsilon,1+\varepsilon).$  Let set $S=\{v\mid v\in\mathbb{R}^n,\|v\|_2=1\}.$ Due to Theorem~\ref{thm:eps_net}, we can find a set $G\subset S$ which satisfies that $\forall u\in S$ there exists $v\in G$ such that $\|u-v\|_2\leq (\varepsilon/(1000n))^{10}$ and $|G|\leq (4000n/\varepsilon)^{20n}.$ Let $k\geq c\varepsilon^{-2}n\ln(n/\varepsilon)$ where $c$ is a sufficiently large constant. By Lemma~\ref{lem:half_normal_concentration}, we have that for a fixed $v\in G,$ with probability at least $1-e^{-1000n\ln(4000n/\varepsilon)},\|Bv\|_1\in(1-\varepsilon,1+\varepsilon).$ By taking union bound over all the points in $G$, we have
$
\Pr\left(\forall v\in G,\|Bv\|_1\in(1-\varepsilon,1+\varepsilon)\right)\geq 1-e^{-980n\ln(4000n/\varepsilon)}\geq 0.99.
$
Now, consider $\forall x\in \mathbb{R}^n$ with $\|x\|_2=1,$ i.e. $x\in S,$ we can find $v\in G$ such that $\|v-x\|_2\leq (\varepsilon/(1000n))^{10},$ and let $u=v-x.$ Then, conditioned on 
\begin{align*}
\|B\|_2\leq 10\sqrt{tn}\cdot\sqrt{\pi/2}/t,
\end{align*}
we have
{
\begin{align*}
&\|Bx\|_1\\
\in&(\|Bv\|_1-\|Bu\|_1,\|Bv\|_1+\|Bu\|_1)\\
\subseteq&(1-(\varepsilon+\sqrt{t}\|B\|_2\|u\|_2),1+(\varepsilon+\sqrt{t}\|B\|_2\|u\|_2))\\
\subseteq&(1-2\varepsilon,1+2\varepsilon)
\end{align*}
}
where the first relation follows by triangle inequality, the second relation follows by 
\begin{align*}
\|Bu\|_1\leq \sqrt{t}\|Bu\|_2\leq \sqrt{t}\|B\|_2\|u\|_2,
\end{align*}
and the last relation follows by 
\begin{align*}
\|u\|_2\leq (\varepsilon/(1000n))^{10},\|B\|_2\leq 10\sqrt{tn}\cdot \sqrt{\pi/2}/t.
\end{align*}

According to Lemma~\ref{lem:upperbound_spectrum}, we know that with probability at least $0.99,$ we have 
\begin{align*}\|B\|_2\leq 10\sqrt{tn}\cdot\sqrt{\pi/2}/t.
 \end{align*}
 By taking union bound, we have with probability at least $0.98,$ 
 \begin{align*}
 \forall x\in S, \|Bx\|_1\in (1-2\varepsilon,1+2\varepsilon).
  \end{align*}
  By adjusting the $\varepsilon,$ we complete the proof.
\end{proof}

\section{Proof of Theorem 21}
Without loss of generality, we assume constant $k\leq 2.$ Otherwise, we can always adjust constants in all the related theorems and lemmas to make larger $k$ work.
Let $x^*=\arg\min_{x\in\mathbb{R}^d}\sum_{i=1}^k\|A_ix-b_i\|_{G_i}.$ By Theorem~9 and taking union bound, we have that with probability at least $0.98,$
{\small
\begin{align}
&\forall i\in\{1,2,\cdots,k\},\notag\\
&\|(D^{(i)})^{-1}(A_ix^*-b_i)\|_{G_i}\leq O(\alpha_{G_i} \log n) \|A_ix^*-b_i\|_{G_i}.\label{eq:for_each_no_dilation_2}
\end{align}
}
 Now let $A'_i=[A_i\ b_i]\in\mathbb{R}^{n_i\times (d+1)}.$ Due to Theorem~16 and union bound, with probability at least $0.8,$ we have
 {
 \small
\begin{align}
&\forall x\in\mathbb{R}^{d+1},\notag \\
 &\Omega(1/(\alpha'_{G_i} d\log n_i))\|A'_ix\|_{G_i}
\leq \|\Pi_2^{(i)}\Pi_1^{(i)}(D^{(i)})^{-1}A'_ix\|_2.\label{eq:for_all_no_contraction_2}
\end{align}
}
Then,
{
\small
\begin{align*}
&\sum_{i=1}^k\|A_i\hat{x}-b_i\|_{G_i}\\
\leq& \sum_{i=1}^k O(\alpha'_{G_i} d\log n)\|\Pi_2^{(i)}\Pi_1^{(i)}(D^{(i)})^{-1}(A_i\hat{x}-b_i)\|_2\\
\leq& \sum_{i=1}^k O(\alpha'_{G_i} d\log n)\|B^{(i)}\Pi_2^{(i)}\Pi_1^{(i)}(D^{(i)})^{-1}(A_i\hat{x}-b_i)\|_1\\
\leq& O(\max_{i\in[k]}\alpha'_{G_i} d\log n) \|B\Pi_2\Pi_1D^{-1}(A\hat{x}-b)\|_1\\
\leq& O(\max_{i\in[k]}\alpha'_{G_i} d\log n) \|B\Pi_2\Pi_1D^{-1}(Ax^*-b)\|_1\\
\leq& O(\max_{i\in[k]}\alpha'_{G_i} d\log n)\sum_{i=1}^k \|B^{(i)}\Pi_2^{(i)}\Pi_1^{(i)}(D^{(i)})^{-1}(A_ix^*-b_i)\|_1\\
\leq& O(\max_{i\in[k]}\alpha'_{G_i} d\log n)\sum_{i=1}^k \|\Pi_2^{(i)}\Pi_1^{(i)}(D^{(i)})^{-1}(A_ix^*-b_i)\|_2\\
\leq& O(\max_{i\in[k]}\alpha'_{G_i} d\log n)\sum_{i=1}^k \|(D^{(i)})^{-1}(A_ix^*-b_i)\|_2\\
\leq& O((\max_{i\in[k]}\sqrt{C_{G_i}})(\max_{i\in[k]}\alpha'_{G_i}) d\log n)\sum_{i=1}^k\|(D^{(i)})^{-1}(A_ix^*-b_i)\|_{G_i}\\
\leq& O((\max_{i\in[k]}\alpha_{G_i})(\max_{i\in[k]}\sqrt{C_{G_i}})(\max_{i\in[k]}\alpha'_{G_i}) d\log^2 n)\sum_{i=1}^k\|A_ix^*-b_i\|_{G_i}.
\end{align*}
}
The first inequality follows by Equation~\ref{eq:for_all_no_contraction_2}. The second inequality follows by Lemma~20. The forth inequality follows by $\hat{x}$ is the optimal solution for $\min_{x\in\mathbb{R}^d}\|B\Pi_2\Pi_1D^{-1}(Ax-b)\|_1.$ The sixth inequality follows by Lemma~20. The seventh inequality follows by Theorem~14 and Theorem~15. The eighth inequality follows by Lemma~4. The last inequality follows by Equation~\ref{eq:for_each_no_dilation_2}.
Let $\beta'_G=(\max_{i\in[k]}\alpha_{G_i})(\max_{i\in[k]}\sqrt{C_{G_i}})(\max_{i\in[k]}\alpha'_{G_i}),$ we complete the proof of the correctness of Algorithm~2.

For the running time, according to Theorem~16, computing $\Pi_2\Pi_1D^{-1}A$ and $\Pi_2\Pi_1D^{-1}b$ needs $\sum_{i=1}^k\mathrm{nnz(A_i)}+\mathrm{poly}(d)$ time. Due to Lemma 20, the size of $B$ is $\mathrm{poly}(d).$ To compute $B\Pi_2\Pi_1D^{-1}A$ and $B\Pi_2\Pi_1D^{-1}b,$ we need additional $\mathrm{poly}(d)$ time. Since $B\Pi_2\Pi_1D^{-1}A$ has size $\mathrm{poly}(d),$ computing the optimal solution of $\min_{x\in\mathbb{R}^d}\|B\Pi_2\Pi_1D^{-1}(Ax-b)\|_1$ by using linear programming needs $\mathrm{poly}(d)$ running time. The total running time is $\sum_{i=1}^k\mathrm{nnz(A_i)}+\mathrm{poly}(d).$

\section{Proof of Theorem 23}
Before we prove the Theorem, we need to show following Lemmas.
\begin{lemma}[\cite{swz17}]\label{lem:zero_step}
Let $A\in\mathbb{R}^{n\times d},R\in\mathbb{R}^{d\times t_3},k$ be the same as in the Algorithm~3, then with probability at least $0.9,$
{\small
\begin{align*}
&\min_{X\in\mathbb{R}^{t_3\times k},Y\in\mathbb{R}^{k\times d}}\|ARXY-A\|_p^p\\
\leq& O((k\log k)^{1-p/2}\log n)\min_{U\in\mathbb{R}^{n\times k},V\in\mathbb{R}^{k\times d}} \|UV-A\|_p^p.
\end{align*}
}
\end{lemma}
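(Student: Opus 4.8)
The plan is to follow \cite{swz17}, since this is precisely their ``right-sketch preserves a good solution'' lemma: the assertion is that the column space of $AR$, which has dimension only $t_3=\Theta(k\log k)$, already contains a rank-$k$ matrix that approximates $A$ to within the stated factor. Rather than arguing about the minimizer, I would prove this existential claim by exhibiting one explicit near-optimal pair $(X,Y)$ and bounding its cost.

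First I would fix an optimal decomposition $A=U^*V^*+E$ with $U^*\in\mathbb{R}^{n\times k}$, $V^*\in\mathbb{R}^{k\times d}$ and $\|E\|_p^p=\mathrm{OPT}$; without loss of generality $V^*$ has full row rank $k$ (the degenerate case follows by a limiting argument) and, reparametrizing $U^*V^*=(U^*T)(T^{-1}V^*)$, its rows form an $\ell_p$ well-conditioned (Auerbach) basis of the optimal row space, exactly as in the proof of Theorem~10. The natural candidate is $Y=V^*$ and $X=(V^*R)^\dagger$, the right pseudoinverse of $V^*R\in\mathbb{R}^{k\times t_3}$, which exists with probability $1$ for a continuous $p$-stable $R$. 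Since $V^*R(V^*R)^\dagger=I_k$, a direct computation gives
\begin{align*}
ARXY-A=\bigl(U^*V^*R+ER\bigr)(V^*R)^\dagger V^*-A=ER(V^*R)^\dagger V^*-E,
\end{align*}
so by the scalar inequality $(a+b)^p\le 2^{p-1}(a^p+b^p)$ it suffices to show that $\|ER(V^*R)^\dagger V^*\|_p^p\le O((k\log k)^{1-p/2}\log n)\cdot\mathrm{OPT}$.

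This last bound is the real content, and I expect it to be the main obstacle, since it couples the heavy-tailed $p$-stable sketch with the residual $E$. I would decompose it through three ingredients: a \emph{dilation} bound controlling $ER$ row by row (each coordinate of $E^iR$ is $\|E^i\|_p$ times a standard $p$-stable, and union-bounding the upper tail over the $n$ rows and $t_3$ coordinates is what produces the $\log n$ factor); a \emph{contraction}/invertibility bound on $(V^*R)^\dagger$, controlled by a net argument over the $k$-dimensional row space using the $t_3=\Theta(k\log k)$ sketch directions together with the well-conditioned basis property; and the conversion from the $\ell_2$ geometry natural to the $t_3$-dimensional sketch back to the $\ell_p^p$ cost, which loses the factor $t_3^{\,1-p/2}=(k\log k)^{1-p/2}$ via $\|z\|_p\le t_3^{1/p-1/2}\|z\|_2$ for $z\in\mathbb{R}^{t_3}$.

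The genuine difficulty is that for $p<2$ the $p$-stable variables have no finite second moment (and none of finite first moment for $p\le1$), so one cannot simply bound $\|ER\|_p^p$ by its expectation; the argument must work with truncated or median-of-means estimators for the lower bound on $V^*R$ and with explicit upper-tail union bounds for the dilation of $ER$, and then track how these propagate through the pseudoinverse. These quantitative tail and net estimates are exactly what \cite{swz17} establish, so for those pieces I would invoke their lemma rather than re-derive them.
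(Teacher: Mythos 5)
The paper never actually proves this lemma: it is imported wholesale from \cite{swz17} (hence the citation in its statement) and used as a black box inside the proof of Theorem~23, so there is no in-paper argument for your proposal to diverge from. Measured against the source it cites, your reconstruction is faithful. The witness $Y=V^*$, $X=(V^*R)^\dagger$ is the standard sketch-and-solve choice; the identity $ARXY-A=ER(V^*R)^\dagger V^*-E$ is correct, since $V^*R$ has full row rank $k$ almost surely (as $t_3=\Theta(k\log k)\geq k$ and $R$ has a continuous distribution), so that $V^*R(V^*R)^\dagger=I_k$; and your accounting of the two loss factors --- $\log n$ from controlling $ER$, and $(k\log k)^{1-p/2}=t_3^{1-p/2}$ from converting the $\ell_2$ geometry of the $t_3$-dimensional sketch back to entrywise $\ell_p^p$ via $\|z\|_p^p\leq t_3^{1-p/2}\|z\|_2^p$ --- is the right one.

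One caveat on mechanism: for $p\leq 1$ the standard $p$-stable variables have infinite mean, so the $\log n$ factor cannot come from ``union-bounding the upper tail'' alone --- a raw union bound over the $n t_3$ heavy-tailed entries of $ER$ would cost $\mathrm{poly}(n)$, not $\log n$. The actual argument (both in \cite{swz17} and, in the Orlicz setting, in this paper's own proof of Theorem~9) conditions on the event that no variable exceeds $\mathrm{poly}(n)$, bounds the conditional expectation of each truncated variable by $O(\log n)$, and finishes with Markov; you do acknowledge truncation later in your sketch, so this is a phrasing slip more than a gap, but the two mechanisms should not be conflated. Beyond that, your proposal and the paper make the same ultimate move --- deferring the quantitative tail and net estimates to \cite{swz17} --- and yours simply exposes the proof skeleton that the paper leaves entirely implicit.
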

\begin{lemma}[\cite{wz13}]\label{lem:stronger_gaurantee}
Let $1\leq p\leq 2.$ Given a matrix $A\in\mathbb{R}^{n\times d}$, $d\leq n$, let $D\in\mathbb{R}^{n\times n}$ be a diagonal matrix of which each entry on the diagonal is an i.i.d. random variable drawn from the distribution with CDF $1-e^{-t^p}.$ Let $\Pi_1\in\mathbb{R}^{t_1\times n}$ be a sparse embedding matrix (see Theorem~18) and let $\Pi_2\in\mathbb{R}^{t_2\times t_1}$ be a random Gaussian matrix (see Theorem~19) where $t_1=\Omega(d^2),t_2=\Omega(d).$ Then, with probability at least $0.9,$
{\small
\begin{align*}
&\forall x\in\mathbb{R}^d,\\
&\Omega(1/\min\{(d\log d)^{1/p},(d\log d\log n)^{1/p-1/2}\})\|Ax\|_p\\
\leq &\|\Pi_2\Pi_1D^{-1}Ax\|_2.
\end{align*}
}
\end{lemma}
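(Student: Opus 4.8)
The plan is to follow the contraction argument already used for the general Orlicz norm (Theorem~\ref{thm:no_contraction_for_all}), but to replace its two loose ingredients by $\ell_p$-sharp versions, since that is exactly what buys the improved factor. First I would dispose of $\Pi_2\Pi_1$: because $t_1=\Omega(d^2)$ and $t_2=\Omega(d)$, Theorems~\ref{thm:sparse_embedding} and~\ref{thm:dense_embedding} guarantee that, with constant probability, $\Pi_2\Pi_1$ is an $\ell_2$ subspace embedding of the $d$-dimensional column space of $D^{-1}A$, so $\|\Pi_2\Pi_1 D^{-1}Ax\|_2=\Theta(1)\cdot\|D^{-1}Ax\|_2$ for all $x$ simultaneously. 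It therefore suffices to lower bound $\|D^{-1}Ax\|_2$ in terms of $\|Ax\|_p$, where $D^{-1}_{ii}=E_i^{-1/p}$ with each $E_i\sim\mathrm{Exp}(1)$ (this is the content of CDF $1-e^{-t^p}$).

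For the per-vector estimate I would fix $y=Ax$ with $\|y\|_p=1$ and exploit the max-stability of reciprocal exponentials: a direct computation gives $\|D^{-1}y\|_\infty=\max_i|y_i|/E_i^{1/p}\stackrel{d}{=}\|y\|_p/E^{1/p}$ for a single $E\sim\mathrm{Exp}(1)$, whence $\Pr(\|D^{-1}y\|_\infty\le\|y\|_p/\alpha)=\Pr(E\ge\alpha^p)=e^{-\alpha^p}$. This is the crux: for $\alpha\ge1,\ p\ge1$ the failure probability $e^{-\alpha^p}$ is sharper than the generic $e^{-\alpha}$ of Theorem~\ref{thm:no_contraction_for_each}, and it is precisely what converts a linear $d\log n$ loss into a $(d\log d)^{1/p}$ loss. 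Route A then bounds $\|D^{-1}y\|_2\ge\|D^{-1}y\|_\infty$ and nets the coefficient ball of a well-conditioned $\ell_p$ basis (the $(d,1,G)$ basis from the proof of Theorem~\ref{thm:no_dilation_for_all}) to inverse-polynomial-in-$d$ accuracy; the net has size $d^{O(d)}$, so $\log|N|=O(d\log d)$, and union-bounding $\{E\le\alpha^p\}$ over it needs only $\alpha^p=\Theta(d\log d)$, giving $\|D^{-1}y\|_2\ge\Omega(1/(d\log d)^{1/p})\,\|y\|_p$ on the net.

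Route B is the refinement that produces the second term, and it is where I expect the real difficulty. Instead of keeping only the single largest rescaled coordinate, I would collect all $i$ with $E_i\le\theta$ for a small threshold $\theta$: each contributes at least $y_i^2\theta^{-2/p}$ to $\|D^{-1}y\|_2^2$, and since $\{i:E_i\le\theta\}$ is chosen independently of $y$ with inclusion probability $\approx\theta$, one expects $\sum_{i:E_i\le\theta}y_i^2\approx\theta\|y\|_2^2$, hence $\|D^{-1}y\|_2\gtrsim\theta^{1/2-1/p}\|y\|_2$. Pushing $\theta$ as small as concentration and the union bound allow (roughly $\theta\sim\mathrm{polylog}/n$) and converting $\|y\|_2$ back to $\|y\|_p=1$ is what yields the exponent $1/p-1/2$ and the extra $\log n$. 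The hard part will be establishing this lower bound with failure probability small enough to survive the union bound over the net, because $\sum_i y_i^2E_i^{-2/p}$ is genuinely heavy-tailed for $p<2$ (the summands have infinite mean), so a first/second-moment argument is unavailable; instead I would layer the coordinates by the dyadic scale of $E_i$, control each layer by a binomial Chernoff estimate, and show the dominant layers deliver the claimed mass.

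Finally I would pass from the net to all $x$ as in the proof of Theorem~\ref{thm:no_contraction_for_all}: the discretization error is $\|D^{-1}(y-y')\|_2\le\sqrt{C_G}\,\|D^{-1}(y-y')\|_p$ by Lemma~\ref{lem:sandwich}, which the $\ell_p$ dilation bound of Theorem~\ref{thm:no_dilation_for_all} (applied with $G(t)=t^p$) renders negligible once the net from Theorem~\ref{thm:eps_net} is fine enough. Taking the larger of the two per-point lower bounds from Routes A and B gives the stated distortion $\min\{(d\log d)^{1/p},(d\log d\log n)^{1/p-1/2}\}$, and a union bound over the finitely many constant-probability events (that $\Pi_2\Pi_1$ is an $\ell_2$-embedding, that the contraction holds on the net, and that the dilation bound holds) yields the claimed $0.9$ success probability.
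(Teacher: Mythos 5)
First, a point of order: the paper contains no proof of this lemma. It is imported verbatim from \cite{wz13} and used as a black box in the $\ell_p$ low-rank analysis (Lemmas~\ref{lem:first_step} and~\ref{lem:final_step}), so there is no in-paper argument to compare against; your proposal is necessarily an attempt to reconstruct the external proof. Judged on those terms, your architecture is the right one and matches \cite{wz13} in outline: stripping $\Pi_2\Pi_1$ via the $\ell_2$ subspace-embedding guarantees of Theorems~\ref{thm:sparse_embedding} and~\ref{thm:dense_embedding} is exactly how the paper itself handles the composed sketch (cf.\ the proof of Theorem~\ref{thm:full_subspace_embedding}), and your max-stability computation is correct: $\min_i E_i/|y_i|^p\sim\mathrm{Exp}(\|y\|_p^p)$, hence $\|D^{-1}y\|_\infty\stackrel{d}{=}\|y\|_p/E^{1/p}$ and the failure probability $e^{-\alpha^p}$, which is indeed the sharpening over the generic $e^{-\alpha}$ of Theorem~\ref{thm:no_contraction_for_each} that makes a $(d\log d)^{1/p}$-type bound possible.

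That said, two genuine gaps remain. The larger one is Route B, which you correctly flag but do not close, and whose central heuristic is false as stated: $\sum_{i:E_i\le\theta}y_i^2\approx\theta\|y\|_2^2$ has no concentration for spiky $y$ (for $y=e_1$ the sum is a Bernoulli spike, and with $\theta\sim\mathrm{polylog}/n$ the sampled set almost surely misses the support), and there is no inequality of the form $\|y\|_2\ge f(d)\|y\|_p$ for $y$ in a $d$-dimensional subspace of $\R^n$ (flat vectors have $\|y\|_2=n^{1/2-1/p}\|y\|_p$), so "converting $\|y\|_2$ back to $\|y\|_p$" cannot be done at a fixed exchange rate. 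The actual argument must split each $y$ by level sets, handle the heavy coordinates through the max-stability route, and apply Bernoulli/Chernoff concentration only to the flat part, with the threshold tuned against the effective support size and against $\log|N|$; your dyadic-layering remark gestures at exactly this, but that layering \emph{is} the content of the lemma, not a detail, and the interaction between layers determines whether $\log n$ or $\log d$ survives in the exponent. The second gap is quantitative, in Route A: the $(d\log d)^{1/p}$ branch needs $\log|N|=O(d\log d)$, i.e.\ a net at resolution $1/\mathrm{poly}(d)$, but the discretization control you invoke (Lemma~\ref{lem:sandwich} plus the dilation bound of Theorem~\ref{thm:no_dilation_for_all}) is the paper's Theorem~\ref{thm:no_contraction_for_all} machinery, which converts net error via $\|z\|_G\le\|z\|_1\le\sqrt{n}\|z\|_2$ and therefore forces $\varepsilon=1/\mathrm{poly}(n)$, giving $\log|N|=\Theta(d\log n)$ and only a $(d\log n)^{1/p}$ contraction. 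To obtain $\log d$ you must measure the net error in $\|\cdot\|_p$ directly over the coefficient ball of the well-conditioned basis, with a dilation bound whose $n$-dependence enters only additively inside a logarithm. As written, the proposal proves a weaker statement; completing it to the claimed $\min\{(d\log d)^{1/p},(d\log d\log n)^{1/p-1/2}\}$ requires the substantive concentration work of \cite{wz13}.
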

\begin{lemma}\label{lem:first_step}
Let $A\in\mathbb{R}^{n\times d},S\in\mathbb{R}^{t_2\times n},R\in\mathbb{R}^{d\times t_3},k$ be the same as in the Algorithm~3, then with probability at least $0.9,$
{\small
\begin{align*}
&\min_{X\in\mathbb{R}^{t_2\times k},Y\in\mathbb{R}^{k\times t_2}}\|ARXYSA-A\|_p^p\\
\leq& \beta\min_{U\in\mathbb{R}^{n\times k},V\in\mathbb{R}^{k\times d}} \|UV-A\|_p^p,
\end{align*}
}
where
{\small
\begin{align*} 
\beta=O(\min((k\log k)^{2-p/2}\log^{p+1} n,(k\log k)^{2-p}\log^{2+p/2}n)).
\end{align*}
}
\end{lemma}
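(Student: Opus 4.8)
The plan is to reach the bilinearly sketched form $ARXYSA$ by two independent one-sided reductions whose losses multiply: the right $p$-stable sketch $R$ first restricts the \emph{column} space, and the left exponential embedding $S$ then restricts the \emph{row} factor. Write $\mathrm{OPT}=\min_{U,V}\|UV-A\|_p^p$. The only place where the $p$-stable machinery of~\cite{swz17} is reused is the right reduction; the left reduction is where the exponential embedding buys the improved factor, and it forces a mixed $\ell_p/\ell_2$ analysis.

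First I would invoke Lemma~\ref{lem:zero_step} to produce $\tilde X$ and a row factor $V^*$ with $\|AR\tilde X V^*-A\|_p^p\le O((k\log k)^{1-p/2}\log n)\cdot\mathrm{OPT}=:\mathrm{OPT}_1$. Fix the column factor $B=AR\tilde X\in\mathbb{R}^{n\times k}$. The natural candidate for the second step is the sketched least-squares solution $\tilde Y=(SB)^\dagger\in\mathbb{R}^{k\times t_2}$ and $\tilde V=\tilde Y SA=(SB)^\dagger SA$, whose rows lie in $\mathrm{row}(SA)$; then $B\tilde V=AR\tilde X\,\tilde Y\,SA$ is exactly of the required form, so it suffices to bound $\|B\tilde V-A\|_p^p$ against $\mathrm{OPT}_1$.

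The core is a per-column mixed-norm estimate. By the triangle inequality for $\|\cdot\|_p^p$, $\|B\tilde V-A\|_p^p\le 2^{p-1}\big(\mathrm{OPT}_1+\|B(\tilde V-V^*)\|_p^p\big)$, and the entrywise $\ell_p$ norm decomposes over columns. For the $j$-th columns $\tilde v_j,v_j^*$ and residual $r_j=Bv_j^*-a_j$, I would: (i) apply the contraction bound of Lemma~\ref{lem:stronger_gaurantee} to the rank-$k$ subspace $\mathrm{col}(B)$ to get $\|B(\tilde v_j-v_j^*)\|_p\le\beta_1\|SB(\tilde v_j-v_j^*)\|_2$ with $\beta_1=\min\{(k\log k)^{1/p},(k\log k\log n)^{1/p-1/2}\}$; (ii) use that $\tilde v_j$ minimizes $\|SBv-(SA)_j\|_2$ together with the triangle inequality to bound $\|SB(\tilde v_j-v_j^*)\|_2\le 2\|Sr_j\|_2$; and (iii) control the fixed-vector quantity $\|Sr_j\|_2$ by $O(\log n)\|r_j\|_p$, using that the Gaussian/sparse factor of $S$ preserves $\ell_2$ norm (Theorems~\ref{thm:sparse_embedding},~\ref{thm:dense_embedding}), that $\|\cdot\|_2\le\sqrt{C_G}\|\cdot\|_p$ (Lemma~\ref{lem:sandwich} with $G(x)=x^p$), and the dilation bound of Theorem~\ref{thm:no_dilation_for_each} applied to $D_1^{-1}r_j$. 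Raising to the $p$-th power and summing over $j$ gives $\|B(\tilde V-V^*)\|_p^p\le O(\beta_1\log n)^p\,\mathrm{OPT}_1$, hence $\|B\tilde V-A\|_p^p\le O(\beta_1\log n)^p\,\mathrm{OPT}_1$. Combining with $\mathrm{OPT}_1\le O((k\log k)^{1-p/2}\log n)\mathrm{OPT}$ yields $\beta=O(\beta_1^p(k\log k)^{1-p/2}\log^{p+1}n)$; substituting the two branches $\beta_1^p\in\{k\log k,\,(k\log k\log n)^{1-p/2}\}$ gives $O((k\log k)^{2-p/2}\log^{p+1}n)$ and $O((k\log k)^{2-p}\log^{2+p/2}n)$, whose minimum is the claimed $\beta$. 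A union bound over the $d$ columns and the constantly many sketch events keeps the total failure probability under $0.1$.

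The main obstacle is step (iii) and its bookkeeping. The dilation bound of Theorem~\ref{thm:no_dilation_for_each} holds only for a \emph{fixed} vector, so one must exploit that $R$ and $S$ are independent: conditioned on $R$, the column factor $B$, the optimal $V^*$, and all residuals $r_j$ are fixed with respect to the randomness of $S$, which legitimizes applying Theorem~\ref{thm:no_dilation_for_each} (and the $\ell_2$-embedding of the Gaussian/sparse part) to each $r_j$ separately. Equally delicate is ensuring the contraction factor $\beta_1$ is computed for the rank-$k$ subspace $\mathrm{col}(B)$ rather than the ambient dimension, and that the per-column $\ell_2$ guarantees recombine correctly into an entrywise $\ell_p^p$ bound so that the $\log n$ dilation loss enters only as $\log^p n$.
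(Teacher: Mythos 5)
Your plan follows the paper's proof of Lemma~\ref{lem:first_step} almost line by line: the same candidate $\tilde V=(SB)^\dagger SA$ with $B=AR\tilde X$, the same $2^{p-1}$ triangle-inequality split, the same use of Lemma~\ref{lem:stronger_gaurantee} on the $k$-dimensional space $\mathrm{col}(B)$, the same per-column minimality argument for the sketched regression, the same passage from $\ell_2$ to $\ell_p$ via $p\le 2$, and your final arithmetic recovering both branches of $\beta$ is correct. The gap is in your step (iii) together with the closing union-bound sentence: you invoke the fixed-vector dilation bound (Theorem~\ref{thm:no_dilation_for_each}) once per residual column $r_j$, $j\in[d]$, and claim that a union bound over the $d$ columns keeps the total failure probability below $0.1$. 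But Theorem~\ref{thm:no_dilation_for_each} is a Markov-type statement --- dilation $O(\delta^{-1}\log n)$ with failure probability $\delta$ --- and it has no high-probability form at dilation $O(\log n)$. To survive a union bound over $d$ columns you must take $\delta=O(1/d)$ per column, which inflates the per-column dilation to $O(d\log n)$ and your final bound by a factor $d^{p}$. That loss is fatal: $d$ can be polynomial in $n$, while the claimed $\beta$ has no dependence on $d$ beyond $\log n$.

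The missing idea is that the argument never needs per-column dilation: the chain of inequalities only uses the aggregate quantity $\sum_{j=1}^d\|D_1^{-1}r_j\|_p^p=\|D_1^{-1}(BV^*-A)\|_p^p$, i.e., the entrywise $\ell_p^p$ norm of the residual matrix, which is the $p$-th power of the Orlicz norm (with $G(x)=x^p$) of that matrix flattened into a single vector in $\mathbb{R}^{nd}$. The paper applies Theorem~\ref{thm:no_dilation_for_each} exactly once, to this one vector. This is legitimate even though each $u_i$ is reused across the $d$ entries of row $i$, because the dilation bound explicitly does not require the scaling variables to be independent (as noted in Section~\ref{sec:no_dilation}). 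A single application with constant $\delta$ then yields $\sum_j\|D_1^{-1}r_j\|_p^p\le O\left(\log^p(nd)\right)\sum_j\|r_j\|_p^p=O(\log^p n)\cdot\mathrm{OPT}_1$, which is exactly the sum your proof needs, at constant failure probability. Your observation that the independence of $R$ and $S$ makes the residuals fixed relative to $S$'s randomness is correct and still required; but it is the aggregation, not the conditioning, that rescues the probability budget. (A similar aggregate treatment --- Markov on the summed expectation rather than a union bound --- is also the safe way to handle the per-column use of the sparse sketch of Theorem~\ref{thm:sparse_embedding}, whose fixed-vector guarantee is likewise only constant-probability; the paper's own write-up glosses over that step as well, so I do not count it against you.)
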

\begin{proof}
Let 
{\small
\begin{align*}
X^*,V^*=\arg\min_{X\in\mathbb{R}^{t_3\times k},V\in\mathbb{R}^{k\times d}}\|ARXV-A\|_p^p.
 \end{align*}
}
 Let $U^*=ARX^*,\tilde{V}=(SU^*)^\dagger SA$. Let $\gamma=\min\{k\log k,(k\log k\log n)^{1-p/2}\}.$ We have
 {\small
\begin{align*}
&\|U^*\tilde{V}-A\|_p^p\\
\leq& 2^{p-1}\|U^*(\tilde{V}-V^*)\|_p^p+2^{p-1}\|U^*V^*-A\|_p^p\\
\leq& O(\gamma)\sum_{i=1}^d \|SU^*(\tilde{V}-V^*)_i\|_2^p+2^{p-1}\|U^*V^*-A\|_p^p\\
\leq& O(\gamma)\sum_{i=1}^d (\|S(U^*\tilde{V}-A)_i\|_2+\|S(U^*V^*-A)_i\|_2)^p\\
&+2^{p-1}\|U^*V^*-A\|_p^p\\
\leq& O(\gamma)\sum_{i=1}^d (2\|S(U^*V^*-A)_i\|_2)^p+2^{p-1}\|U^*V^*-A\|_p^p\\
\leq& O(\gamma)\sum_{i=1}^d\|D_1^{-1}(U^*V^*-A)_i\|_2^p+2^{p-1}\|U^*V^*-A\|_p^p\\
\leq& O(\gamma)\|D_1^{-1}(U^*V^*-A)\|_p^p+2^{p-1}\|U^*V^*-A\|_p^p\\
\leq& O(\gamma)\log^p(nd)\|U^*V^*-A\|_p^p+2^{p-1}\|U^*V^*-A\|_p^p\\
=&O(\gamma\log^p(n))\|U^*V^*-A\|_p^p.
\end{align*}
}
The first inequality follows by convexity of $x^p.$ The second inequality follows by Lemma~\ref{lem:stronger_gaurantee}. The third inequality follows by triangle inequality. The forth inequality follows by $\tilde{V}=(SU^*)^\dagger SA.$ The fifth inequality follows by Theorem~14 and Theorem~15. The sixth inequality follows by $p\leq 2.$ The seventh inequality follows by Theorem~9.

Due to Lemma~\ref{lem:zero_step}, we have
{\small
\begin{align*}
&\|U^*V^*-A\|_p^p\\
\leq& O((k\log k)^{1-p/2}\log n)\min_{U\in\mathbb{R}^{n\times k},V\in\mathbb{R}^{k\times d}}\|UV-A\|_p^p.
\end{align*}
}
Thus, we have
{\small
\begin{align*}
&\min_{X,Y}\|ARXYSA-A\|_p^p\leq \|U^*\tilde{V}-A\|_p^p \\
\leq& O(\min((k\log k)^{2-p/2}\log^{p+1} n,(k\log k)^{2-p}\log^{2+p/2}n))\\
&\cdot\min_{U,V} \|UV-A\|_p^p.
\end{align*}
}
\end{proof}

\begin{lemma}[\cite{swz17}]\label{lem:middle_step}
Let $A\in\mathbb{R}^{n\times d},S\in\mathbb{R}^{t_2\times n},R\in\mathbb{R}^{d\times t_3},k,T_2\in\mathbb{R}^{d\times t_3}$ be the same as in the Algorithm~3, then with probability at least $0.9,$ if for $\alpha\geq 1,$ $\tilde{X},\tilde{Y}$ satisfy
\begin{align*}
\|AR\tilde{X}\tilde{Y}SAT_2-AT_2\|_p^p\leq \alpha \min_{X,Y}\|ARXYSAT_2-AT_2\|_p^p,
\end{align*}
then
\begin{align*}
\|AR\tilde{X}\tilde{Y}SA-A\|_p^p\leq \alpha O(\log n)\min_{X,Y}\|ARXYSA-A\|_p^p.
\end{align*}
\end{lemma}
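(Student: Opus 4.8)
The plan is to prove that right-multiplication by the dense $p$-stable matrix $T_2$ distorts the entrywise $\ell_p^p$ cost of \emph{every} candidate residual $ARXYSA-A$ by at most a multiplicative $O(\log n)$ factor, and then to chain this distortion with the lemma's hypothesis $(*)$: $\|AR\tilde X\tilde Y SAT_2-AT_2\|_p^p\le\alpha\min_{X,Y}\|ARXYSAT_2-AT_2\|_p^p$. Throughout write $\mathcal M(X,Y)=ARXYSA-A\in\R^{n\times d}$, so that $\mathcal M(X,Y)T_2=ARXYSAT_2-AT_2$, and let $X^0,Y^0=\argmin_{X,Y}\|\mathcal M(X,Y)\|_p^p$ be an optimal solution of the unsketched (``middle'') problem. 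Following \cite{swz17}, I first reduce the effect of $T_2$ to a standard $\ell_p$ regression sketch. Since the entrywise $\ell_p$ norm is invariant under transposition,
\begin{align*}
\|\mathcal M(X,Y)T_2\|_p^p=\|T_2^\top \mathcal M(X,Y)^\top\|_p^p=\|T_2^\top\big((SA)^\top\Theta-A^\top\big)\|_p^p,
\end{align*}
where $\Theta:=(ARXY)^\top$ ranges over a subset of $\R^{t_2\times n}$ as $X,Y$ vary. Thus right-multiplying the residual by $T_2$ is exactly \emph{left}-sketching, by $T_2^\top$, the multiple-response $\ell_p$ regression with \emph{fixed} design $(SA)^\top$ (of rank at most $t_2=\Theta(k)$) and \emph{fixed} response $A^\top$; crucially the design subspace $\mathrm{colspan}((SA)^\top)$ and the response $A^\top$ do not depend on $X,Y$, hence not on the solution $\tilde X,\tilde Y$.

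The key step is to invoke the $\ell_p$ affine-embedding guarantee for dense $p$-stable transforms (as in \cite{wz13,swz17}): because $t_3=\Theta(k\log k)=\tilde\Theta(t_2)$ exceeds the design rank by the required polylog factor, with probability at least $0.9$ the sketch $T_2^\top$ simultaneously satisfies, for \emph{all} $\Theta$, a constant-factor lower bound (no contraction) and an $O(\log n)$ upper bound (no dilation) on the $\ell_p^p$ cost of the affine family $(SA)^\top\Theta-A^\top$. The asymmetry is inherent to $p$-stable variables: the lower tail is light enough to give an $\Omega(1)$ contraction uniformly over the subspace via a net argument, whereas the heavy upper tail only forces an $O(\log n)$ dilation once $|z|^p$ is truncated at the $\mathrm{poly}(n)$ scale (on the $\ell_p^p$ scale this truncation costs $O(\log n)$ directly, not $O(\log^p n)$). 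Translating back through the transposition, this yields that with probability $\ge 0.9$, uniformly over $X,Y$,
\begin{align*}
\|\mathcal M(X,Y)\|_p^p\le O(1)\cdot\|\mathcal M(X,Y)T_2\|_p^p,
\end{align*}
and, for the single fixed pair $X^0,Y^0$,
\begin{align*}
\|\mathcal M(X^0,Y^0)T_2\|_p^p\le O(\log n)\cdot\|\mathcal M(X^0,Y^0)\|_p^p.
\end{align*}

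With these two inequalities the lemma assembles directly. Using the uniform no-contraction bound on the returned solution, then $(*)$ together with the feasibility of $(X^0,Y^0)$ for the sketched minimum, and finally the fixed-matrix no-dilation bound,
\begin{align*}
\|\mathcal M(\tilde X,\tilde Y)\|_p^p
&\le O(1)\,\|\mathcal M(\tilde X,\tilde Y)T_2\|_p^p
\le O(1)\,\alpha\min_{X,Y}\|\mathcal M(X,Y)T_2\|_p^p\\
&\le O(1)\,\alpha\,\|\mathcal M(X^0,Y^0)T_2\|_p^p
\le \alpha\, O(\log n)\,\|\mathcal M(X^0,Y^0)\|_p^p,
\end{align*}
which equals $\alpha\,O(\log n)\min_{X,Y}\|ARXYSA-A\|_p^p$, as claimed.

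The main obstacle, and the step deserving the most care, is establishing the \emph{uniform} (over all $X,Y$) no-contraction bound in a way that does \emph{not} depend on the $T_2$-dependent solution $\tilde X,\tilde Y$: this is exactly what the transposition buys, since it exposes a single fixed design $(SA)^\top$ and response $A^\top$ to which a standard $\ell_p$ affine embedding applies uniformly over the whole feasible coset. A secondary subtlety is bookkeeping the distortion so that only the dilation (applied to the fixed optimum $X^0,Y^0$, where a per-matrix concentration bound suffices) pays the $\log n$, while the contraction stays $O(1)$; applying the heavy-tailed dilation uniformly instead would cost an extra $\log n$ and degrade the guarantee to $\alpha\,O(\log^2 n)$.
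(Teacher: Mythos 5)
This lemma is one the paper never proves: it is imported verbatim from \cite{swz17} (the appendix restates it with the citation and immediately moves on to Lemma~\ref{lem:final_step}), so there is no in-paper argument to compare against, and your proposal has to be judged on its own. Your skeleton is the correct, standard one — indeed it is the argument of \cite{swz17}: a no-contraction bound that holds \emph{uniformly} over all $X,Y$ (hence usable on the sketch-dependent $\tilde X,\tilde Y$), an $O(\log n)$ dilation bound applied only \emph{for-each} to the fixed, $T_2$-independent optimizer $(X^0,Y^0)$, and the four-step chain through the hypothesis; the chain itself is airtight, and your observation that putting the $\log n$ on the for-each side (rather than paying dilation uniformly) is what avoids an $\alpha\,O(\log^2 n)$ loss is exactly the point.

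Two caveats, neither fatal. First, the entire technical content of the lemma sits inside the two sketching guarantees you invoke as a black box. Citing them is legitimate (they are established for dense $p$-stable sketches in \cite{wz13,swz17}, and the paper itself black-boxes the whole lemma), but your single-coset phrasing of the uniform lower bound glosses over how it is actually proved: column $i$ of the transposed residual $(SA)^\top\Theta-A^\top$ ranges over a \emph{fixed} affine subspace of dimension at most $t_2+1$ (spanned by the rows of $SA$ together with the row $A^i$), one proves no-contraction per column, and then union-bounds over the $n$ columns; this requires per-subspace failure probability $o(1/n)$, which is available because $t_3=\Theta(k\log k)$ and the paper works in the regime $k=\omega(\log n)$. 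Second, your two displayed bounds (constants $O(1)$ and $O(\log n)$) implicitly assume $T_2$ is normalized by $t_3^{-1/p}$, whereas Algorithm~\ref{alg:lp_lowrank} uses unscaled standard $p$-stable entries; this is immaterial since the $\Theta(t_3)$ factors appear on both sides of the chain and cancel, but a careful write-up should say so.
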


\begin{lemma}\label{lem:final_step}
Let $A\in\mathbb{R}^{n\times d},S\in\mathbb{R}^{t_2\times n},R\in\mathbb{R}^{d\times t_2},k,T_1\in\mathbb{R}^{t_2\times n},T_2\in\mathbb{R}^{d\times t_3}$ be the same as in the Algorithm~3, then with probability at least $0.9,$ if for $\alpha\geq 1$
\begin{align*}
&\sum_{i=1}^{t_3}\|T_1(AR\tilde{X}\tilde{Y}SAT_2-AT_2)_i\|_2^p\\
&\leq \alpha \min_{X,Y}\sum_{i=1}^{t_3}\|T_1(ARXYSAT_2-AT_2)_i\|_2^p,
\end{align*}
then
\begin{align*}
\|AR\tilde{X}\tilde{Y}SAT_2-AT_2\|_p^p\leq \alpha \beta\min_{X,Y}\|ARXYSAT_2-AT_2\|_p^p,
\end{align*}
where $\beta=O(\min(k\log k\log^p n,(k\log k)^{1-p/2}\log^{1+p/2} n)).$
\end{lemma}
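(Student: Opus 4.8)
The plan is a sketch-and-solve argument in which the left sketch $T_1=\Pi_2\Pi_1D_2^{-1}$ is used as an $\ell_p\to\ell_2$ embedding, but where the point is that I only need its contraction (lower) bound on a fixed $O(k)$-dimensional subspace and its dilation (upper) bound on a single fixed matrix. Write $M(X,Y)=ARXYSAT_2-AT_2$, let $\tilde M=M(\tilde X,\tilde Y)$ be the approximate solution handed to us, and let $X^*,Y^*=\argmin_{X,Y}\|M(X,Y)\|_p^p$ with $M^*=M(X^*,Y^*)$ and $\mathrm{OPT}=\|M^*\|_p^p$. Since $X^*,Y^*$ are defined through an objective that does not involve $T_1$, both $M^*$ and the subspace spanned by the columns of $AR$ are independent of $T_1$; I will condition on $A,R,S,T_2$ throughout and use only the fresh randomness of $T_1$.

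The key structural observation is that the offending $-AT_2$ terms cancel in the difference: $\tilde M-M^*=AR(\tilde X\tilde Y-X^*Y^*)SAT_2$, so each column of $\tilde M-M^*$ has the form $ARw$ and hence lies in the column space of $AR$. This is a fixed subspace of dimension $O(k)$ (matching $t_1=\Theta(k^2),t_2=\Theta(k)$), and it is independent of $T_1$. This is exactly what lets me avoid (i) asking for contraction over the full $d$-dimensional column space of $A$, which would force a far weaker bound and a sketch size growing with $d$, and (ii) having to argue about the $T_1$-dependent random subspace spanned by $AR\tilde X$.

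With this decomposition I would proceed in three steps. First, apply Lemma~\ref{lem:stronger_gaurantee} to the fixed subspace spanned by the columns of $AR$: with probability $0.9$ every $x$ in this subspace obeys $\|x\|_p^p\le O(\gamma)\,\|T_1x\|_2^p$, where $\gamma=\min\{k\log k,(k\log k\log n)^{1-p/2}\}$; applied column by column this gives $\|\tilde M-M^*\|_p^p\le O(\gamma)\sum_i\|T_1(\tilde M-M^*)_i\|_2^p$. A per-column $\ell_2$ triangle inequality, the bound $(a+b)^p\le 2^{p-1}(a^p+b^p)$, and the hypothesis $\sum_i\|T_1\tilde M_i\|_2^p\le\alpha\sum_i\|T_1M^*_i\|_2^p$ (the latter because $M^*$ is feasible) then bound the right-hand side by $O(\alpha\gamma)\sum_i\|T_1M^*_i\|_2^p$. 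Second, I control the dilation $\sum_i\|T_1M^*_i\|_2^p$ for the fixed matrix $M^*$: using $p\le 2$ and the Johnson--Lindenstrauss property of $\Pi_1,\Pi_2$ on fixed vectors (Theorems~\ref{thm:sparse_embedding} and~\ref{thm:dense_embedding}) we have $\|T_1M^*_i\|_2\le O(1)\|D_2^{-1}M^*_i\|_2\le O(1)\|D_2^{-1}M^*_i\|_p$, so it suffices to bound $\sum_i\|D_2^{-1}M^*_i\|_p^p=\sum_{i,j}|M^*_{ji}|^p/u_j^p$. Conditioning on the event that no $u_j$ is too small (as in the proof of Theorem~\ref{thm:no_dilation_for_each}) makes $\E[1/u_j^p]=O(\log n)$, so a single Markov bound gives $\sum_i\|T_1M^*_i\|_2^p\le O(\log n)\cdot\mathrm{OPT}$ with high probability. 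Third, combine via $\|\tilde M\|_p^p\le 2^{p-1}\big(\|\tilde M-M^*\|_p^p+\mathrm{OPT}\big)$ to get $\|\tilde M\|_p^p\le O\!\big(\alpha\,\gamma\cdot\mathrm{poly}(\log n)\big)\,\mathrm{OPT}$; substituting $\gamma$ and collecting the logarithmic factors yields the stated $\beta=O(\min(k\log k\log^p n,(k\log k)^{1-p/2}\log^{1+p/2}n))$.

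The step I expect to be the main obstacle is the decomposition itself, namely realizing that one must bound $\tilde M-M^*$ rather than $\tilde M$ directly so as to confine the argument to the $O(k)$-dimensional, $T_1$-independent column space of $AR$. A secondary technical subtlety is the dilation estimate: because $\E[1/u_j^p]$ diverges without truncation, the expectation must be taken conditioned on the $u_j$ not being too small, and doing this at the level of the whole sum (one Markov step) rather than per column is what keeps the dilation factor polylogarithmic in $n$ instead of $\mathrm{poly}(t_3)$.
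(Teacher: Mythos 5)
Your proposal is correct and takes essentially the same route as the paper's proof: the same splitting of $\|\tilde M\|_p^p$ into $2^{p-1}(\|\tilde M-M^*\|_p^p+\mathrm{OPT})$, with the difference $\tilde M-M^*$ lying in the $T_1$-independent column space of $AR$ so that Lemma~\ref{lem:stronger_gaurantee} supplies the contraction, the per-column triangle inequality plus the hypothesis controlling the sketched difference, and a dilation bound on the fixed matrix $M^*$ via $\|T_1 M^*_i\|_2\le O(1)\|D_2^{-1}M^*_i\|_2\le O(1)\|D_2^{-1}M^*_i\|_p$ (you re-derive this last bound by truncated expectation and Markov, which is exactly the argument inside Theorem~\ref{thm:no_dilation_for_each}, and it even yields $\log n$ in place of the paper's $\log^p n$). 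The one divergence is that you define $X^*,Y^*$ as the minimizer of the \emph{unsketched} objective, which is in fact what the argument requires (both for feasibility in the hypothesis and for independence from $D_2$ in the dilation step); the paper's proof nominally defines $X^*,Y^*$ through the sketched objective, an apparent typo, since its final claim and its application of Theorem~\ref{thm:no_dilation_for_each} only make sense for the unsketched minimizer.
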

\begin{proof}
Let 
\begin{align*}
X^*,Y^*=\arg\min_{X,Y}\sum_{i=1}^{t_3}\|T_1(ARXYSAT_2-AT_2)_i\|_2^p.
 \end{align*}
 Let $L=AR,N=SAT_2,M=AT_2.$ Let $\gamma=\min\{k\log k,(k\log k\log n)^{1-p/2}\}.$ Let $\tilde{H}=\tilde{X}\tilde{Y}$ and let $H^*=X^*Y^*.$
We have
{\small
\begin{align*}
&\|L\tilde{H}N-M\|_p^p\\
\leq & 2^{p-1} \|L\tilde{H}N-LH^*N\|_p^p+2^{p-1}\|LH^*N-M\|_p^p\\
\leq & O(\gamma) \sum_{i=1}^{t_3}\|T_1(L\tilde{H}N-LH^*N)_i\|_2^p+2^{p-1}\|LH^*N-M\|_p^p\\
\leq & O(\gamma) \sum_{i=1}^{t_3}(\|T_1(L\tilde{H}N-M)_i\|_2+\|T_1(LH^*N-M)_i\|_2)^p\\
&+2^{p-1}\|LH^*N-M\|_p^p\\
\leq & O(\gamma) \sum_{i=1}^{t_3}(\|T_1(L\tilde{H}N-M)_i\|_2^p+\|D_2^{-1}(LH^*N-M)_i\|_2^p)\\
&+2^{p-1}\|LH^*N-M\|_p^p\\
\leq & O(\gamma) (\sum_{i=1}^{t_3}\|T_1(L\tilde{H}N-M)_i\|_2^p+\|D_2^{-1}(LH^*N-M)\|_p^p)\\
&+2^{p-1}\|LH^*N-M\|_p^p\\
\leq & O(\gamma) (\alpha\sum_{i=1}^{t_3}\|T_1(LH^*N-M)_i\|_2^p+\|D_2^{-1}(LH^*N-M)\|_p^p)\\
&+2^{p-1}\|LH^*N-M\|_p^p\\
\leq & O(\gamma) (\alpha\sum_{i=1}^{t_3}\|D_2^{-1}(LH^*N-M)_i\|_2^p+\|D_3^{-1}(LH^*N-M)\|_p^p)\\
&+2^{p-1}\|LH^*N-M\|_p^p\\
\leq & O(\gamma) \alpha\|D_2^{-1}(LH^*N-M)\|_p^p+2^{p-1}\|LH^*N-M\|_p^p\\
\leq &O(\gamma\log^p(n)) \alpha\|LH^*N-M\|_p^p.
\end{align*}
}
The first inequality follows by convexity of $x^p$. The second inequality follows by Lemma~\ref{lem:stronger_gaurantee}. The third inequality follows by triangle inequality. The forth inequality follows by convexity of $x^p$, Theorem~14 and Theorem~15. The fifth inequality follows by $p\leq 2.$ The sixth inequality follows by the property of $\tilde{X},\tilde{Y}.$ The seventh inequality follows by Theorem~14 and Theorem~15. The eighth inequality follows by $p\leq 2.$ Then the ninth inequality follows by Theorem~9.
\end{proof}

Now let us prove Theorem:
\begin{proof}
Notice that 
{\small
\begin{align*}
\hat{X},\hat{Y}=\arg\min_{X\in\mathbb{R}^{t_2\times k},Y\in\mathbb{R}^{k\times t_3}}\|T_1ARXYSAT_2-T_1AT_2\|_F^2,
\end{align*}
 }
 we have
 {\small
\begin{align*}
&(\sum_{i=1}^{t_3}\|T_1(AR\hat{X}\hat{Y}SAT_2-AT_2)_i\|_2^p)^{1/p}\\
\leq&  O((k\log k)^{1/p-1/2})(\min_{X,Y}\sum_{i=1}^{t_3}\|T_1(ARXYSAT_2-AT_2)_i\|_2^p)^{\frac1p}.
\end{align*}
}
It means
{\small
\begin{align*}
&(\sum_{i=1}^{t_3}\|T_1(AR\hat{X}\hat{Y}SAT_2-AT_2)_i\|_2^p)\\
\leq&  O((k\log k)^{1-p/2})(\min_{X,Y}\sum_{i=1}^{t_3}\|T_1(ARXYSAT_2-AT_2)_i\|_2^p).
\end{align*}
}
According to Lemma~\ref{lem:final_step}, we have
\begin{align*}
\|AR\hat{X}\hat{Y}SAT_2-AT_2\|_p^p\leq \beta_1\min_{X,Y}\|ARXYSAT_2-AT_2\|_p^p,
\end{align*}
where 
\begin{align*}\beta_1=O(\min((k\log k)^{2-p/2}\log^p n,(k\log k)^{2-p}\log^{1+p/2} n)).
\end{align*}
Due to Lemma~\ref{lem:middle_step}, we have
\begin{align*}
\|AR\hat{X}\hat{Y}SA-A\|_p^p\leq O(\beta_1\log n)\min_{X,Y}\|ARXYSA-A\|_p^p.
\end{align*}
Then, according to Lemma~\ref{lem:first_step}, we have
\begin{align*}
\|AR\hat{X}\hat{Y}SA-A\|_p^p\leq \beta_2\min_{U\in\mathbb{R}^{n\times k},V\in\mathbb{R}^{k\times d}}\|UV-A\|_p^p,
\end{align*}
where 
\begin{align*}\beta_2=O(\min((k\log k)^{4-p}\log^{2p+2}n, (k\log k)^{4-2p}\log^{4+p} n)).\end{align*}
For the running time: $SA,T_1A$ can be computed in $\mathrm{nnz}(A)$ time. Thus, total running time is $\mathrm{nnz}(A)+(n+d)\mathrm{poly}(k).$
\end{proof}

\section{Implementation Setups}
We implement all the algorithms in MATLAB. We ran experiments on a machine with 16G main memory and Intel Core i7-3720QM@2.60GHz CPU. The operating system is Ubuntu 14.04.5 LTS. All the experiments were in single threaded mode.

\section{Data Simulation for Comparison with $\ell_{1}$ and $ \ell_{2}$ Regression}
We generate a matrix $A\in\mathbb{R}^{n\times d},x^*\in\mathbb{R}^d$ as following: set each entry of the first $d+5$ rows of $A$ as i.i.d. standard random Gaussian variable, each entry of $x^*$ as i.i.d. standard random Gaussian variable. For $n \geq i \geq d + 6$, we uniformly choose $p \in [d + 5]$, and set $A^{i} = A^{p}, b_{i} = b_{p}$.  We perform experiments under 3 different noise assumptions and 2 dimension combinations of $N, d$ and in total $3 \times 2 = 6$ experiments. The 3 different noise assumptions are, respectively i) $N(0,50)$ Gaussian noise with on all the entries of $Ax^*$; ii) sparse noise, where we randomly pick $3\%$ number of entries of $Ax^*$, and add uniform random noise from $[-\|Ax^*\|_2, \|Ax^*\|_2]$ on each entry to get $b$; 
 iii) mixed noise, which is $N(0,5)$ Gaussian noise plus sparse noise. The $2$ different dimension combinations are i) balance, where $n = 100 \approx d = 75$; ii) overconstraint, where $n = 200 \gg d = 10$.

\section{Experiments on Approximation Ratio}
Here is a documentation of our preliminary experiments on calculating the actual approximation ratio for the experiment settings mentioned in \textbf{Section 5.1, Comparison with $\ell_{1}$ and $\ell_{2}$ regression}. The approximation ratio of interest is calculated as follows: $\frac{\|Ax'-b\|_G}{\|Ax^*-b\|_G}$, where $x'$ is the output of our novel embedding based algorithm and $x^*$ is the optimal solution. Since $\|\cdot\|_{G}$ is convex, we can formulate this problem as a convex optimization problem and use a vanilla gradient descent algorithm to calculate the optimal solution. We heuristically stop our gradient descent algorithm when the one step brings less than $10^{-7}$ improvement on the loss function and set the learning rate to be 0.001. Admittedly, we have not yet thoroughly and rigidly examined the convergence of the vanilla gradient descent algorithm (a direction of future work), and hence such calculation of approximation ratio is only a preliminary attempt.

Under the mixed noise setting, we varied different scale s of the uniform noise to be $0, 1, 2, 3$ and delta to be $0.1, 0.25, 0.5, 0.75$. With $n=200, d=10$, for each of these 4 * 4 = 16 settings, we run the algorithm repeatedly for 50 times, and the worst approximation ratio is 1.06 among these 800 runs. Experimentally, it is far below the theoretical guarantee $d\log^{2}(n) \approx 584 \gg 1.06$, and the approximation ratio is robust among different noise settings. For $n=100, d=75$, due to time limit, we only run each of the 16 settings for 5 times, and the worst approximation ratio is $1.31$.

\section{Implementation Detail for Low Rank Approximation}
\begin{itemize}
    \item For our algorithm, set $t_1=4k,t_2=8t_1$, set $S\in\mathbb{R}^{t_1\times n},T_1\in\mathbb{R}^{t_2\times n}$ to be two random cauchy matrices, and set $R\in\mathbb{R}^{d\times t_1},T_2\in\mathbb{R}^{d\times t_2}$ to be two embedding matrices with exponential random variables (see Theorem 16.) We solve the minimization problem $\min_{X,Y}\|T_1ARXYSAT_2-T_1AT_2\|_F^2$, and set $B=ARXYSA.$
    \item  For algorithm in~\cite{swz17}, we set $t_1=4k,t_2=8t_1$. We set $S\in\mathbb{R}^{t_1\times n},T_1\in\mathbb{R}^{t_2\times n},R\in\mathbb{R}^{d\times t_1},T_2\in\mathbb{R}^{d\times t_2}$ to be four random cauchy matrices. We solve the minimization problem $\min_{X,Y}\|T_1ARXYSAT_2-T_1AT_2\|_F^2$, and set $B=ARXYSA.$
    \item For PCA, we project $A$ onto the space spanned by top $k$ singular vectors to get $B$.
\end{itemize}

\end{document}